\newtheorem{thm}{Theorem}[section]
\newtheorem{cor}{Corollary}[section]
\newtheorem{lem}{Lemma}[section]
\newtheorem{prop}{Proposition}[section]
\theoremstyle{definition}
\newtheorem{defn}{Definition}[section]
\theoremstyle{remark}
\numberwithin{equation}{section}
\newcommand{\be}{\begin{equation}}
\newcommand{\ee}{\end{equation}}
\newcommand{\bea}{\begin{eqnarray}}
\newcommand{\eea}{\end{eqnarray}}
\newcommand{\bse}{\begin{subequations}}
\newcommand{\ese}{\end{subequations}}
\def \b#1{\bar{#1}}
\def \t#1{\tilde{#1}}
\def \deg{\mathrm{deg\,}}
\def \j{{\llbracket}}
\def \k{{\rrbracket}}
\title{Integrable properties of the differential-difference Kadomtsev-Petviashvili hierarchy
and continuum limits}
\author{Wei Fu$^{a}$, ~~Lin Huang$^{b}$, ~~K.M. Tamizhmani$^{c}$, ~~
Da-jun Zhang$^{a}$\footnote{Corresponding author. E-mail address: djzhang@staff.shu.edu.cn}\\
{\small \it $^{a}$Department of Mathematics, Shanghai University, Shanghai 200444, P.R.China} \\
{\small \it $^{b}$School of Mathematical Sciences, Fudan University, Shanghai 200433, P.R.China} \\
{\small \it $^{c}$Department of Mathematics, Pondicherry University, Puducherry  605014, India}
}
\date{\today}
\begin{document}

\maketitle

\begin{abstract}
The paper reveals clear links between the differential-difference Kadomtsev-Petviashvili hierarchy
and the (continuous) Kadomtsev-Petviashvili hierarchy,
together with their symmetries, Hamiltonian structures and conserved quantities.
They are connected through a uniform continuum limit.
We derive isospectral and non-isospectral differential-difference Kadomtsev-Petviashvili flows through Lax triads,
where  the spatial variable $\bar{x}$ is looked as a new  independent variable
that is completely independent of the temporal variable $\bar{t}_1$.
Such treatments not only enable us to derive the master symmetry as one of integrable non-isospectral flows,
but also provide
simple representations for both isospectral and non-isospectral differential-difference Kadomtsev-Petviashvili  flows
in terms of zero curvature equations.
The obtained flows generate a Lie algebra with respect to Lie product $\j\cdot,\cdot\k$,
which further leads to two sets of symmetries for the isospectral differential-difference Kadomtsev-Petviashvili hierarchy,
and the symmetries generate a Lie algebra, too.
Making use of
the recursive relations of the flows, symmetries and Noether operator we derive
Hamiltonian structures for both isospectral and non-isospectral
differential-difference Kadomtsev-Petviashvili hierarchies.
The Hamiltonians generate a Lie algebra with respect to Poisson bracket $\{\cdot,\cdot\}$.
We then derive two sets of conserved quantities for the whole
isospectral differential-difference Kadomtsev-Petviashvili hierarchy
and they also generate a Lie algebra. All these obtained algebras have same basic structures.
Then, we provide a continuum limit which is different from Miwa's transformation.
By means of defining \textit{degrees} of some elements with respect to the continuum limit,
we prove that the differential-difference Kadomtsev-Petviashvili hierarchies together with their Lax triads,
zero curvature representations and integrable properties
go to their continuous counterparts in the continuum limit.
Structure deformation of Lie algebras in the continuum limit is also explained.

\vskip 6pt
\noindent{\textbf{Keywords:}} differential-difference Kadomtsev-Petviashvili hierarchy,
symmetries, Hamiltonian structures, conserved quantities, continuum limit.\\
\textbf{MSC 2010}: 37K05, 37K10, 37K30\\
\textbf{PACS}: 02.30.Ik
\end{abstract}
\vskip 20pt

\section{Introduction}

It is well known that the Kadomtsev-Petviashvili (KP) equation
\be
u_{t}=\frac{1}{4}u_{xxx}+3uu_x+\frac{3}{4}\partial^{-1}_x u_{yy}
\label{KP}
\ee
acts as a typical role in (2+1)-dimensional integrable systems.
This equation together with its bilinear form
is an elementary model in the celebrated Sato's theory \cite{OSTT-PTPS-1988,DJM-book-2000}.
The KP equation itself as well as its integrable characteristics,
such as infinitely many symmetries and conserved quantities,
can be derived from a pseudo-differential operator\footnote{Detailed definition of $L$ is given in next section.}
\cite{Dickey-book-2003,MSS-JMP-1990,CXZ-CSF-2003},
\be
L=\partial+ u_2\partial^{-1}+u_3\partial^{-2}+\cdots.
\label{L}
\ee
The operator can also generate a KP hierarchy\cite{OSTT-PTPS-1988,CXZ-CSF-2003}.
Most of (1+1)-dimensional Lax integrable systems have their own recursion operators,
while for (2+1)-dimensional systems it is quite rare to see that.
However, the KP hierarchy does have a recursive structure
which is expressed either through a recursion operator\cite{SF-CMP-1988-I} or through a master symmetry
 together with Lie product\cite{OF-PLA-1982}.
By means of the recursive structure, a KP hierarchy was built, and
symmetries, Hamiltonian structures and conserved quantities of the whole
isospectral KP hierarchies were generated\cite{Case-PNAS-1984,Case-JMP-1985,OF-PLA-1982,CLL-PD-1983,CLB-JPA-1988}.
In fact,
the KP hierarchy constructed in \cite{OF-PLA-1982} by using the recursive structure
and the KP hierarchy derived from the pseudo-differential operator \eqref{L} are same.

The differential-difference Kadomtsev-Petviashvili (D$\Delta$KP) equation reads
\begin{equation}
\b u_{\b t}=(1+2\Delta^{-1})\b u_{\b x\b x}-2h^{-1}\b u_{\b x}+2\b u \b u_{\b x},
\label{DDKP}
\end{equation}
with one discrete independent variable $n$ and two continuous ones $\b x$ and $\b t$,
where the operator $\Delta$ is defined by $\Delta f(n)=f(n+1)-f(n)$ and $h$ is a spacing parameter of $n$.
This equation is first derived through a discretization of the Sato's  theory \cite{DJM-JPSJ-1982-II}.
The discretization is also known as Miwa's transformation\cite{Miwa-PJA-1982}.
Based on the transformation it is quite natural to get bilinear identities with discrete exponential functions,
from which one can derive bilinear equations with discrete variables\cite{DJM-JPSJ-1982-I,DJM-JPSJ-1982-II,DJM-JPSJ-1983-III,DJM-JPSJ-1983-IV,DJM-JPSJ-1983-V}.
However, since Miwa's transformation does not keep the original continuous dispersion relation,
for a integrable discrete equation it is hard from the first glance to find the correspondence to a continuous counterpart.
It is first shown in \cite{KVT-CSF-1997} that the D$\Delta$KP equation is related to the following pseudo-difference operator
\be
\bar{L}=\Delta+ \b u_0 +\b u_1\Delta^{-1}+\b u_2\Delta^{-2} +\cdots,
\label{b-L}
\ee
with $\b u_0=\b u$ and $\b t_1=\b x$\cite{KVT-CSF-1997}.
By using the above pseudo-difference operator
some integrable properties of the D$\Delta$KP equation,
such as symmetries and conservation laws, were investigated\cite{KVT-CSF-1997,K-DOC-1998,SZZC-MPLB-2010,Zhang-JSU-2005}.

In this paper, for the D$\Delta$KP equation \eqref{DDKP}
we will first investigate the recursive structure of the D$\Delta$KP hierarchy.
To do that we need to introduce a master symmetry (cf.\cite{Fuch-PTP-1983}).
Usually master symmetries are related to time-dependent spectral parameters
and can be derived from spectral problems as non-isospectral flows.
Since isospectral  and non-isospectral D$\Delta$KP flows are simultaneously
considered and they are related to the same spectral problem, we can not take $\b t_1=\b x$ any longer
and we have to consider $\b x$ as a new independent variable.
Consequently, we use a Lax triad rather than a Lax pair to derive the D$\Delta$KP hierarchies
and it turns out that this works.

In the paper our plan is the following.
After introducing necessary notations in Sec.2, we will revisit the KP hierarchy in Sec.3.
We will derive isospectral and non-isospectral KP flows via Lax triad approach.
The approach provides simple zero curvature representations for these flows,
by which one can easily obtain a Lie algebra of the flows.
The basic structure of the algebra  indicates a recursive relation for both isospectral and non-isospectral KP flows.
Integrable properties of the isospectral KP hierarchy,
such as symmetries, Hamiltonian structures and conserved quantities
and Hamiltonian structures of non-isospectral KP hierarchy, are also listed out in this section as the known results in literature.
Next, in Sec.4, we focus on the D$\Delta$KP hierarchy.
By Lax triad approach we derive isospectral and non-isospectral D$\Delta$KP flows
and their basic algebraic structure.
The structure can be used to generate infinitely many symmetries for the isospectral D$\Delta$KP hierarchy
as well as provides a recursive relation of flows.
Then we will investigate their Hamiltonian structures and conserved quantities for the isospectral D$\Delta$KP hierarchy,
and Hamiltonian structures for the non-isospectral D$\Delta$KP hierarchy.
Finally in Sec.5, by means of continuum limit we will discuss possible connections
between the KP hierarchies and D$\Delta$KP hierarchies together with their Lax triads and integrability characteristics.

\section{Basic notions}

A pseudo-differential operator $L$ is defined as
\be
L=\partial+u_2\partial^{-1}+u_3\partial^{-2}+\cdots+u_{j+1}\partial^{-j}+\cdots,
\label{KP:L}
\ee
where $\partial\doteq\partial_x$, $\partial \partial^{-1}=\partial^{-1}\partial=1$
and $u_j=u_j(x,y,\mathbf{t})$ with $\mathbf{t}=(t_1, t_2,\cdots)$.
$\partial^{s}$ obeys the Leibniz rule
\be\label{KP:Leibnitz}
\partial^{s} f=\sum^{\infty}_{i=0} \mathrm{C}^{i}_{s}(\partial^{i}f)\partial^{s-i},~~~s\in \mathbb{Z},
\ee
where
\be
\mathrm{C}^{i}_{s}=\frac{s(s-1)(s-2)\cdots(s-i+1)}{i!}.
\label{def:Cij}
\ee
We suppose $\{u_j\}$ belong to a rapidly-decreasing function space $\mathcal{S}$,
and introduce a set
\[\mathcal{F}=\{f=f(u)|u=u(x,y,\mathbf{t})\in \mathcal{S} ~ \mathrm{and} ~ f(u)|_{u=0}=0\}.\]
The inner product $(\cdot,\,\cdot)$ on $\mathcal{F}$ is taken as
\be
(f,\,g)=\int_{-\infty}^{+\infty}\int_{-\infty}^{+\infty}f(u)g(u)\,\mathrm{d}x \mathrm{d}y,~~\forall f,g\in \mathcal{F}.
\label{def:inn prod}
\ee
A second product $\llbracket\cdot,\cdot\rrbracket$ on $\mathcal{F}$ is defined as
\be
\llbracket f, g\rrbracket=f'[g]-g'[f], ~~\forall f,g\in \mathcal{F},
\label{def:Lie product}
\ee
where
\be
f'[g]= \frac{d}{d\varepsilon}f(u+{\varepsilon}g)\bigr|_{\varepsilon=0}
\label{def:G deriv}
\ee
is  the G\^ateaux derivative of $f$ in direction $g$ w.r.t. $u$.

For a functional $H=H(u)$ and a function $f\in \mathcal{F}$, if
\be
H'[g]=(f,g),~~~\forall g\in\mathcal{F},
\label{def:grad}
\ee
then $f$ is called the functional derivative or gradient of $H$, and $H$ is called the potential of $f$.
Such an $f$ is usually denoted by $\frac{\delta H}{\delta u}$ or $\mathrm{grad}~H$.
\begin{prop}\label{P:2-1}\cite{FF-PD-1981}
$f\in \mathcal{F}$ is a gradient function if and only if $f'$
is a self-adjoint operator in terms of the inner product \eqref{def:inn prod}, i.e. $f'^*=f'$.
The corresponding  potential $H$ can be given by
\be
H=\int_{0}^{1}(f(\lambda u),u)\mathrm{d}\lambda.
\label{def:pot}
\ee
\end{prop}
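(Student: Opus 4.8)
The plan is to prove the two implications of the equivalence separately, and to obtain the formula \eqref{def:pot} as a byproduct of the sufficiency argument, which I would make constructive.

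For necessity, suppose $f=\frac{\delta H}{\delta u}$, so that $H'[g]=(f,g)$ holds for every $g\in\mathcal{F}$ by \eqref{def:grad}. I would differentiate this identity a second time. Taking the G\^ateaux derivative of $H'[h]=(f,h)$ in the direction $g$, only $f$ depends on $u$ on the right-hand side, so one picks up $(f'[g],h)$. The left-hand side is the mixed second variation of $H$, most transparently read off from $\Phi(\varepsilon,\delta)=H(u+\varepsilon g+\delta h)$ as $\partial_\varepsilon\partial_\delta\Phi|_{(0,0)}$. Under the standing smoothness hypotheses on $\mathcal{F}$ this mixed partial is symmetric in $g$ and $h$, so interchanging their roles yields $(f'[g],h)=(f'[h],g)$ for all $g,h\in\mathcal{F}$; by the definition of the adjoint with respect to the inner product \eqref{def:inn prod}, this is exactly $f'^*=f'$.

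For sufficiency, assume $f'^*=f'$ and \emph{define} $H$ by the homotopy formula \eqref{def:pot}. I would compute $H'[g]$ by differentiating under the $\lambda$-integral. Since $\varepsilon$ enters both the argument $\lambda(u+\varepsilon g)$ of $f$ and the second slot $u+\varepsilon g$ of the inner product, the chain rule together with the linearity $f'[\lambda g]=\lambda f'[g]$ gives
\[
H'[g]=\int_0^1\Big[\big(\lambda\,f'_{\lambda u}[g],\,u\big)+\big(f(\lambda u),\,g\big)\Big]\,d\lambda,
\]
where $f'_{\lambda u}$ denotes the G\^ateaux derivative evaluated at the point $\lambda u$. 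The crucial step is to apply self-adjointness to the first term, $(\lambda f'_{\lambda u}[g],u)=\lambda(g,f'_{\lambda u}[u])$, and then to recognize $f'_{\lambda u}[u]=\frac{d}{d\lambda}f(\lambda u)$. With these two observations the integrand collapses to the total derivative $\frac{d}{d\lambda}\big[\lambda\,(g,f(\lambda u))\big]$, so the integral telescopes to its boundary values; the contribution at $\lambda=0$ vanishes because of the explicit prefactor $\lambda$, leaving $H'[g]=(g,f(u))=(f,g)$. Hence $f=\frac{\delta H}{\delta u}$, which at once shows that $f$ is a gradient and that \eqref{def:pot} supplies its potential.

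The formal computations above are short; the substance lies in justifying the analytic manipulations. I expect the main obstacle to be legitimizing (i) differentiation under the integral sign in $\lambda$ in the sufficiency part, and (ii) the symmetry of the mixed second G\^ateaux derivative used in the necessity part. Both rest on the rapidly-decreasing hypothesis $u\in\mathcal{S}$ and on sufficient continuity of $f$ and its derivatives, which ensure convergence of the relevant double integrals and permit the exchange of the limiting operations, so that the equality of mixed partials (Schwarz's theorem) applies. I would therefore state these regularity conditions explicitly and check dominated-convergence-type bounds before invoking the interchanges.
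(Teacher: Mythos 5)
Your proof is correct. Note that the paper does not prove Proposition \ref{P:2-1} at all---it is quoted from \cite{FF-PD-1981}---and your argument is exactly the classical one from that reference: necessity via the symmetry of the mixed second G\^ateaux derivative, giving $(f'[g],h)=(f'[h],g)$ and hence $f'^*=f'$, and sufficiency via the homotopy formula, where self-adjointness collapses the integrand to $\frac{d}{d\lambda}\bigl[\lambda\,(g,f(\lambda u))\bigr]$ and the integral telescopes to $(f(u),g)$. The only point worth stating explicitly is that in the sufficiency step you use self-adjointness of $f'$ at the points $\lambda u$ for all $\lambda\in[0,1]$, so the hypothesis must be read as holding at every point of the underlying function space (the standard interpretation), not merely at the single configuration $u$.
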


For a given evolution equation
\be
u_t=K(u),
\label{def:evo eq}
\ee
$\kappa=\kappa(u)\in \mathcal{F}$ is a symmetry of the above equation if
\be
\kappa_t=K'[\kappa]
\label{def:sym}
\ee
holds for all of $u$ solving \eqref{def:evo eq}.
\eqref{def:sym} is alternately written as
\be
\tilde{\partial}_t\kappa=\llbracket K,\kappa\rrbracket,
\ee
where
the operator $\tilde{\partial}_t\kappa$ stands for taking the derivative w.r.t. $t$ only explicitly contained in $\kappa$,
e.g. if $\kappa=t u_x+uu_{xx}$, then $\tilde{\partial}_t\kappa=u_x$.
Function $\gamma=\gamma(u)\in\mathcal{F}$ is called a conserved covariant of equation \eqref{def:evo eq} if\cite{FF-PD-1981}
\be
\gamma_t=-K'^*[\gamma]
\label{def:cc}
\ee
or
\be
-\tilde{\partial}_t\gamma =\gamma'[K]+K'^*[\gamma]
\ee
holds for all of $u$ solving \eqref{def:evo eq}.
Here $K'^*$ is the adjoint operator of $K'$  w.r.t. $(\cdot,\,\cdot)$.
Functional $I=I(u)$ is called a conserved quantity of equation \eqref{def:evo eq}
if
\be
\frac{\partial I}{\partial t}=0
\ee
holds for any $u$ solving \eqref{def:evo eq}.
Conserved quantities and conserved covariants are closely related to each other (cf.\cite{FF-PD-1981}).
One relation is
\begin{prop}\label{P:2-2-0}
If $\kappa(u)$ is a symmetry and  $\gamma=\gamma(u)$ is a conserved covariant of equation \eqref{def:evo eq},
then
\be
I=(\kappa(u),\gamma(u))
\label{k-gamma}
\ee
is a conserved quantity of  \eqref{def:evo eq}.
\end{prop}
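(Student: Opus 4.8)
The plan is to establish the conservation of $I=(\kappa,\gamma)$ by differentiating it directly along solutions of \eqref{def:evo eq} and showing the derivative vanishes identically. First I would compute the total time derivative $\frac{d I}{d t}$ of the functional. Since the inner product \eqref{def:inn prod} is bilinear and the integration is over the spatial variables $x,y$ only, differentiation under the integral sign is legitimate (all ingredients lie in the rapidly-decreasing space $\mathcal{S}$, so the operation is justified), and the Leibniz rule splits the result into two pieces,
\[
\frac{d I}{d t}=(\kappa_t,\gamma)+(\kappa,\gamma_t),
\]
where $\kappa_t$ and $\gamma_t$ denote the \emph{total} derivatives taken along the flow $u_t=K(u)$, thereby already incorporating any explicit $t$-dependence carried by $\kappa$ or $\gamma$.

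Next I would invoke the two hypotheses in exactly the form in which they have been stated. Because $\kappa$ is a symmetry, \eqref{def:sym} gives $\kappa_t=K'[\kappa]$; because $\gamma$ is a conserved covariant, \eqref{def:cc} gives $\gamma_t=-K'^*[\gamma]$. Substituting both into the expression above yields
\[
\frac{d I}{d t}=(K'[\kappa],\gamma)-(\kappa,K'^*[\gamma]).
\]

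The decisive step is then to appeal to the defining property of the adjoint with respect to the inner product \eqref{def:inn prod}, namely $(K'[\kappa],\gamma)=(\kappa,K'^*[\gamma])$. The two terms then cancel exactly, giving $\frac{d I}{d t}=0$, which is the asserted conservation. I do not anticipate a genuine obstacle: once the symmetry and conserved-covariant conditions are read off as total derivatives, the argument is a two-line calculation. The only points that merit a word of care are interpreting the condition $\frac{\partial I}{\partial t}=0$ in the definition of a conserved quantity as the total rate of change of $I$ along solutions, and confirming that passing to the adjoint generates no surviving boundary contributions; both are secured by working throughout in the rapidly-decreasing class $\mathcal{S}$.
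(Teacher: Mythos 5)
Your proposal is correct and follows essentially the same route as the paper's own proof: differentiate $I=(\kappa,\gamma)$ along the flow, substitute $\kappa_t=K'[\kappa]$ and $\gamma_t=-K'^*[\gamma]$, and cancel the two terms via the defining property of the adjoint $(K'[\kappa],\gamma)=(\kappa,K'^*[\gamma])$. Your additional remarks on differentiation under the integral sign and vanishing boundary terms in the rapidly-decreasing class $\mathcal{S}$ are sound elaborations of points the paper leaves implicit.
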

\begin{proof}
Let us give the proof for completeness. In fact,
\begin{align*}
\frac{d I}{d t}& =(\kappa_t,\gamma)+(\kappa,\gamma_t)\\
&=(K'[\kappa],\gamma)+(\kappa,-{K'^*}[\gamma] )\\
&=(K'[\kappa],\gamma)+(-K'[\kappa],\gamma)\\
&=0.
\end{align*}
\end{proof}

\noindent
Another relation is
\begin{prop}\label{P:2-2}
Suppose that $\gamma=\gamma(u)$ is a gradient function and functional $I=I(u)$ is its potential
and $\left.\frac{\partial I}{\partial t}\right|_{u=0}=0$.
Then,  $I$ is a conserved quantity of equation \eqref{def:evo eq}
if and only if $\gamma$ is a conserved covariant of \eqref{def:evo eq}.
\end{prop}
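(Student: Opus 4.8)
The plan is to reduce both conditions to a single identity by computing the $u$-gradient of the total time derivative of $I$ along solutions of \eqref{def:evo eq}. First I would write this total derivative as
\be
\frac{dI}{dt}=\tilde{\partial}_t I+I'[K]=\tilde{\partial}_t I+(\gamma,K),
\ee
where the second equality uses the hypothesis that $\gamma=\mathrm{grad}\,I$, so that $I'[g]=(\gamma,g)$ for every $g\in\mathcal{F}$ and in particular $I'[K]=(\gamma,K)$. The quantity $\frac{dI}{dt}$ is itself a functional of $u$, and the strategy is to show that its gradient is \emph{exactly} the residual appearing in the conserved-covariant condition \eqref{def:cc}.

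Next I would compute $\mathrm{grad}\,\frac{dI}{dt}$ term by term. Since the explicit time derivative $\tilde{\partial}_t$ and the gradient in $u$ act on independent arguments, they commute, giving $\mathrm{grad}\,(\tilde{\partial}_t I)=\tilde{\partial}_t\gamma$. For the pairing $(\gamma,K)$ the Gâteaux derivative in a direction $g$ is $(\gamma'[g],K)+(\gamma,K'[g])$; transferring both operators to the free slot via their adjoints yields $(g,\,\gamma'^*[K]+K'^*[\gamma])$, so that $\mathrm{grad}\,(\gamma,K)=\gamma'^*[K]+K'^*[\gamma]$. Here the assumption that $\gamma$ is a gradient function enters decisively: by Proposition \ref{P:2-1} we have $\gamma'^*=\gamma'$, whence
\be
\mathrm{grad}\,\frac{dI}{dt}=\tilde{\partial}_t\gamma+\gamma'[K]+K'^*[\gamma].
\ee
Comparing with \eqref{def:cc}, the right-hand side vanishes if and only if $\gamma$ is a conserved covariant of \eqref{def:evo eq}.

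With this identity in hand both implications follow. If $I$ is a conserved quantity, then $\frac{dI}{dt}=0$ along every solution; since initial data can be prescribed arbitrarily this forces $\frac{dI}{dt}$ to vanish identically as a functional of $u$, so its gradient is zero and $\gamma$ is a conserved covariant. Conversely, if $\gamma$ is a conserved covariant, the displayed identity gives $\mathrm{grad}\,\frac{dI}{dt}=0$, so $\frac{dI}{dt}$ is independent of $u$; evaluating at $u=0$, where $\gamma,K\in\mathcal{F}$ both vanish so that $(\gamma,K)|_{u=0}=0$, and invoking the normalization $\left.\frac{\partial I}{\partial t}\right|_{u=0}=0$, I conclude $\frac{dI}{dt}\equiv0$, i.e. $I$ is conserved.

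The main obstacle I anticipate is the backward direction: passing from the vanishing of $\mathrm{grad}\,\frac{dI}{dt}$ to the vanishing of $\frac{dI}{dt}$ itself is an integration step that genuinely requires the normalization $\left.\frac{\partial I}{\partial t}\right|_{u=0}=0$, since without it one only recovers $\frac{dI}{dt}$ up to a $u$-independent constant. The other delicate point is the careful bookkeeping for $\mathrm{grad}\,(\gamma,K)$ together with the justification that $\tilde{\partial}_t$ commutes with $\mathrm{grad}$; it is precisely the self-adjointness $\gamma'^*=\gamma'$ furnished by Proposition \ref{P:2-1} that makes the resulting expression coincide with \eqref{def:cc} rather than an asymmetric variant, so the gradient-function hypothesis cannot be dispensed with.
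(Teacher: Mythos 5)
Your proposal is correct and follows essentially the same route as the paper: both compute the Gâteaux derivative (gradient) of $\frac{dI}{dt}=\tilde{\partial}_t I+(\gamma,K)$, use the commutation identity $\tilde\partial_t(I'[g])=(\tilde\partial_t I)'[g]+I'[\tilde\partial_t g]$ together with the self-adjointness $\gamma'=\gamma'^*$, and identify the result with the residual $\tilde{\partial}_t\gamma+\gamma'[K]+K'^*[\gamma]$ in \eqref{def:cc}. If anything, you are slightly more careful than the paper on the converse direction, where you make explicit the integration step using the normalization $\left.\frac{\partial I}{\partial t}\right|_{u=0}=0$ that the paper's concluding ``vice versa'' leaves implicit (though the commutation identity you only assert is proved in full in the paper).
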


\begin{proof}
Let us first prove the equality
\be
\tilde \partial_t (I'[g])= (\t \partial_t I)'[g]+I'[\t \partial_t g],~~ \forall g=g(u)\in \mathcal{F}.
\label{eq-pt}
\ee
Write $I=I(t,\{u^{(j)}\})$ where $u^{(j)}=\partial^j_x u$.
Then,
\[I'[g]=\sum_j\frac{\partial I}{\partial u^{(j)}} \partial^j_x g,\]
and
\[\tilde \partial_t (I'[g]) = \sum_j\frac{\partial (\t \partial_t I)}{\partial u^{(j)}} \partial^j_x g
                             +\sum_j\frac{\partial I}{\partial u^{(j)}} \partial^j_x (\t \partial_t g)
                          = (\t \partial_t I)'[g]+I'[\t \partial_t g],
\]
i.e. \eqref{eq-pt}.
Since $\gamma=\mathrm{grad}\,I$, i.e. $I'[g]=(\gamma, g)$,
we then have
\[\t \partial_t (\gamma, g)=(\t \partial_t I)'[g]+(\gamma, \t \partial_t g),\]
i.e.
\be
(\t \partial_t I)'[g]=(\t \partial_t \gamma,\, g).
\label{eq-pit}
\ee

Next, when $u$ satisfies equation \eqref{def:evo eq} and noting that
\[\frac{\partial I}{\partial t}=\t\partial_t I+ I'[u_t]=\t\partial_t I+ I'[K]=\t\partial_t I+ (\gamma, K),\]
for any $g\in \mathcal{F}$ we then have
\begin{align*}
\left(\frac{\partial I}{\partial t}\right)'[g]
& =(\t\partial_t I)'[g]+  (\gamma, K)'[g]\\
&= (\t\partial_t \gamma , g)+  (\gamma'[g], K)+(\gamma, K'[g])\\
&=(\t\partial_t \gamma , g)+  ({\gamma '}^* K,g)+({K'}^* \gamma,g)\\
&=(\t\partial_t \gamma +{\gamma '} K+{K'}^* \gamma,g),
\end{align*}
where we have made use of $\gamma'={\gamma'}^*$.
Thus it is clear that
if $I$ is a conserved quantity of equation \eqref{def:evo eq} then $\gamma$ is a conserved covariant of \eqref{def:evo eq},
and vise versa.
\end{proof}

Operator $\Gamma$ living on $\mathcal{F}$ is called a Noether operator of equation \eqref{def:evo eq},
if
\be
\Gamma_t=\Gamma K'^*+K'\Gamma,
\ee
or equivalently,
\be
\tilde \partial_t \Gamma+\Gamma'[K]-\Gamma K'^*-K'\Gamma=0.
\label{def:Noether}
\ee
$\Gamma$ maps conserved covariants of \eqref{def:evo eq} to its symmetries.
Operator $\theta$ living on $\mathcal{F}$ is called an implectic operator\cite{FF-PD-1981}
if it is skew-symmetric as well as satisfies the Jacobi identity
\be
(f,\theta'[\theta g]h)+(h,\theta'[\theta f]g)+(g,\theta'[\theta h]f)=0,~~~\forall  f,g,h\in \mathcal{F}.
\label{Jacobi}
\ee
The evolution equation \eqref{def:evo eq} has a Hamiltonian structure if it can be written in the form
\be
u_t=\theta \frac{\delta H}{\delta u},
\label{def:Hamiltonian}
\ee
where $\theta$ is an implectic operator.

Next we introduce a discrete independent variable $n$ to replace the continuous variable $x$.
The basic operation w.r.t. $n$ is a shift.
Here by $E$ we denote a shift operator  defined through
$E^j g(n)=g(n+j)$ for $j\in \mathbb{Z}$.
Besides, difference operator $\Delta=E-1$ is a
discrete analogue of differential operator $\partial_x$,
and  $\Delta^{-1}=(E-1)^{-1}$ is defined by  $\Delta\Delta^{-1}=\Delta^{-1}\Delta=1$.
$\Delta^s$ follows a discrete Leibniz rule,
\be
\Delta^{s} g(n)=\sum^{\infty}_{i=0} \mathrm{C}^{i}_{s}\,(\Delta^{i}g(n+s-i))\Delta^{s-i},~~~s\in \mathbb{Z},
\label{dKP:Leibnitz}
\ee
where $\mathrm{C}^{i}_{s}$ is defined as before.
For example, we have
\begin{subequations}
\begin{align}
& \Delta g(n)=g(n+1)\Delta+(\Delta g(n)),
\label{D-1}\\
& \Delta^2 g(n)=g(n+2)\Delta^2+2(\Delta g(n+1))\Delta+(\Delta^2 g(n)),
\label{D-2}\\
& \Delta^{-1} g(n)=g(n-1)\Delta^{-1}-(\Delta g(n-2))\Delta^{-2}+\cdots+(-1)^{j-1}(\Delta^{j-1} g(n-j))\Delta^{-j}+\cdots.
\label{D-(-1)}
\end{align}
\end{subequations}
Formula \eqref{dKP:Leibnitz} can be proved by using mathematical inductive method,
and we specify that it is also valid for negative integer $s$.

A pseudo-difference operator is defined as the following,
\be
\b{L}=h^{-1} \Delta+\b u_{0}+h\b u_{1}\Delta^{-1}+\cdots+h^{j}\b u_{j}\Delta^{-j}+\cdots,
\label{dKP:L}
\ee
where $\b u_j=\b u_j(n,\b x,\b {\mathbf{t}})$ with $\b {\mathbf{t}}=(\b t_1,\b t_2,\cdots)$,
and $h$ acts as a lattice spacing parameter of $n$-direction.

As in continuous case, here we suppose $\{\b u_j\}$ belong to a rapidly-decreasing function space $\b{\mathcal{S}}$,
and also introduce a function set
\[\b{\mathcal{F}}=\{\b f=\b f(\b u)|\b u=\b u(n,\b x,\b {\mathbf{t}})\in \b{\mathcal{S}} ~ \mathrm{and} ~\b f(\b u)|_{\b u=0}=0\}.\]
The inner product in $\b{\mathcal{F}}$ is defined as
\be
(\b f(\b u),\b g(\b u))=\frac{h^2}{2}\sum_{n=-\infty}^{\infty}\int_{-\infty}^{+\infty}\b f(\b u)\b g(\b u)\,\mathrm{d}\b x,~~
\forall \b f,\b g\in\mathcal{\b F}.
\label{dKP:inn prod}
\ee

Then, we can define the semi-discrete
counterparts of those notions and propositions for the continuous case described from \eqref{def:Lie product} up to \eqref{def:Hamiltonian}
by formally same formulae. We skip them here.

\section{The KP system}\label{S:3}

\subsection{The KP equation}
\label{S:3.1}

Let us first quickly review the traditional derivation of the KP equation (cf. \cite{OSTT-PTPS-1988,CXZ-CSF-2003}).
After this, we will revisit it via Lax triad approach in next subsection.

The isospectral KP hierarchy (corresponding to $\eta_{t_m}=0$) arise from the compatibility condition of the linear problems
\bse
\begin{align}
L\phi & =\eta\phi,\\
\phi_{t_m} & =A_m\phi,
\end{align}
\ese
i.e.
\be\label{KP:Lax eq}
L_{t_m}=[A_m,L]=A_m L-LA_m,
\ee
where $L$ is the pseudo-differential operator \eqref{KP:L} and
 $A_m=(L^m)_{+}$ is the differential part of $L^m$.
Explicit formulae of $A_m$ are given in \cite{Zhang-JPSJ-2003}.
The first few of $A_m$ are
\bse
\begin{align}
A_1&=\partial,\\
A_2&=\partial^2+2u_2,\\
A_3&=\partial^3+3u_2\partial+3u_3+3u_{2,x},\\
A_4&=\partial^4+4u_2\partial^2+(4u_3+6u_{2,x})\partial+4u_4+6u_{3,x}+4u_{2,xx}+6u_2^2.
\end{align}
\ese
From \eqref{KP:Lax eq} we have
\be\label{ut1}
u_{j,t_1}=u_{j,x},\quad(j=2,3,\cdots);
\ee
\bse\label{ut2}
\begin{align}
u_{2,t_2}&=2u_{3,x}+u_{2,xx},\\
u_{3,t_2}&=2u_{4,x}+u_{3,xx}+2u_2u_{2,x},\\
u_{4,t_2}&=2u_{5,x}+u_{4,xx}+4u_{2,x}u_3-2u_2u_{2,xx},\\
&~~\cdots\cdots;\nonumber
\end{align}
\ese
\bse\label{ut3}
\begin{align}
u_{2,t_3}&=3u_{4,x}+3u_{3,xx}+u_{2,xxx}+6u_2u_{2,x},
\label{ut3-1}
\\
u_{3,t_3}&=3u_{5,x}+3u_{4,xx}+u_{3,xxx}+6(u_2u_3)_x,\\
&~~\cdots\cdots;\nonumber
\end{align}
\ese
\bse\label{ut4}
\begin{align}
u_{2,t_4}&=4u_{5,x}+6u_{4,xx}+4u_{3,xxx}+u_{2,xxxx} +12(u_2u_3)_x+6(u_2u_{2,x})_x,\\
&~~\cdots \cdots.\nonumber
\end{align}
\ese

To derive the KP equation one first needs to set $t_1=x,~t_2=y$ and
next using \eqref{ut2} one can successfully express $u_3$ and $u_4$ by $u_2$
and then from equation \eqref{ut3-1}
one obtains the isospectral KP equation \eqref{KP}, i.e.
\be\label{iKP:eq}
u_{t_3}  =\frac{1}{4}u_{xxx}+3uu_x+\frac{3}{4}\partial^{-1}u_{yy}.
\ee

\subsection{Lax triad and the isospectral KP hierarchy }

Noting that the function $u$ in the KP equation \eqref{iKP:eq} depends on three independent variables $(x,y,t_3)$,
a Lax triad is actually needed for matching these three independent variables.
Later, we will also see when we derive a master symmetry as a non-isospectral flow
we can not take $t_2$ to be $y$ any longer and we have to consider $y$ and $t_2$ separately.
This also requires a triad rather than a pair.

For the whole KP hierarchy we need
\bse
\label{iKP:triad}
\begin{align}
& L\phi =\eta\phi,~~~\eta_{t_m}=0,\\
& \phi_{y} =A_2\phi,~~ A_2=\partial^2+2u_2,\\
& \phi_{t_m} =\hat{A}_m\phi, ~~~ m=1,2,\cdots,
\end{align}
\ese
where we suppose
\be
\hat{A}_m= \partial^m+\sum_{j=1}^{m}a_j\partial^{m-j},~~ \hat{A}_m|_{\mathbf{u}=\mathbf{0}}=\partial^m,
\ee
with $\mathbf{u}=(u_2,u_3,\cdots)$.
We leave the coefficients $\{a_j\}$ temporarily unknown.
The compatibility of \eqref{iKP:triad} reads
\bse
\label{iKP:com}
\begin{align}
& L_y  =[A_2,L],\label{iKP:com-1}\\
& L_{t_m}  =[\hat{A}_m,L],\label{iKP:com-2}\\
& A_{2,t_m}-\hat{A}_{m,y}+[A_2,\hat{A}_m]=0, ~~~ m=1,2,\cdots.\label{iKP:com-3}
\end{align}
\ese
Among the above compatibility conditions, \eqref{iKP:com-1} gives the relation \eqref{ut2} with
$y$ in place of $t_2$, which will be used to express $\{u_j\}_{j>2}$ by $u_2$,
as the following,
\bse
\begin{align}
u_3&=\frac{1}{2}(\partial^{-1}u_{2,y}-u_{2,x}),\\
u_4&=\frac{1}{4}(\partial^{-2}u_{2,yy}-2u_{2,y}+u_{2,xx}-2u_2^2),\\
u_5&=\frac{1}{8}(\partial^{-3}u_{2,yyy}-3\partial^{-1}u_{2,yy}+3u_{2,xy}-u_{2,xxx}\nonumber\\
&~~~~+12u_2u_{2,x}-8u_2\partial^{-1}u_{2,y}+4\partial^{-1}u_2u_{2,y}),\\
&~~~\cdots\cdots.\nonumber
\end{align}
\ese
The equation \eqref{iKP:com-2} plays the role to determine those unknowns $\{a_j\}$ of $\hat{A}_m$.
In fact\cite{Zhang-JPSJ-2003}, $\{a_j\}$ can be uniquely determined from \eqref{iKP:com-2}
and it turns out that  $\hat{A}_m$ is nothing but $A_m=(L^m)_{+}$.
The third equation \eqref{iKP:com-3}
provides the isospectral KP hierarchy
\be\label{iKP:hie}
u_{t_m}=K_{m}=\frac{1}{2}({A}_{m,y}-[A_2,{A}_m]),~~m=1,2,\cdots,
\ee
where we neglect the $\hat{}$ sign due to $\hat{A}_m=A_m=(L^m)_{+}$,
and we have taken $u_2=u$.
Let us write down the first four equations in the KP hierarchy:
\bse\label{iKP:flow}
\begin{align}
u_{t_1}& =K_1=u_x,\\
u_{t_2}& =K_2=u_y,\\
u_{t_3}& =K_3=\frac{1}{4}u_{xxx}+3uu_x+\frac{3}{4}\partial^{-1}u_{yy},\label{iKP:flow-3}\\
u_{t_4}& =K_4=\frac{1}{2}u_{xxy}+4uu_y+2u_x\partial^{-1}u_y+ \frac{1}{2}\partial^{-2}u_{yyy}.
\end{align}
\ese

\subsection{Lax triad and the non-isospectral KP hierarchy }

To derive a master symmetry we turn to the non-isospectral case in which we set
\be\label{nKP:eta}
\eta_{t_m}=\eta^{m-1},~~m=1,2,\cdots.
\ee
In this turn the Lax triad reads
\bse
\label{nKP:triad}
\begin{align}
& L\phi  =\eta\phi,\\
& \phi_{y} =A_2\phi, \\
& \phi_{t_m} =B_m\phi, ~~~ m=1,2,\cdots,
\end{align}
\ese
and the compatibility is
\bse
\label{nKP:com}
\begin{align}
& L_y  =[A_2,L],\label{nKP:com-1}\\
& L_{t_m}  =[B_m,L]+L^{m-1},\label{nKP:com-2}\\
& A_{2,t_m}-B_{m,y}+[A_2,B_m]=0, ~~~ m=1,2,\cdots,\label{nKP:com-3}
\end{align}
\ese
where we suppose  $B_m$ is an undetermined operator of the form
\be
B_m= \sum^{m}_{j=0} b_j \partial^{m-j}.
\ee
Checking the asymptotic results \eqref{nKP:com-2}$_{\mathbf{u}=\mathbf{0}}$ and \eqref{nKP:com-3}$_{\mathbf{u}=\mathbf{0}}$
respectively, one finds they  together give
the necessary asymptotic condition for $B_m$:
\be\label{nKP:B bc-1}
B_m|_{\mathbf{u}=\mathbf{0}}=2y\partial^m+x\partial^{m-1},~~m=1,2,\cdots.
\ee
We note that one can also add isospectral asymptotic terms, for example,
\be
 B_m|_{\mathbf{u}=\mathbf{0}}=2y\partial^m+x\partial^{m-1}+ \partial^{m-2}
\ee
when $m\geq 3$. This will lead to a non-isospectral flow combined by a isospectral flow $K_{m-2}$ and this does not
change the basic algebraic structure of the flows (see Sec.\ref{S:3.4}).

With the asymptotic condition \eqref{nKP:B bc-1} the operator $B_m$
can uniquely be determined from \eqref{nKP:com-2} and the first few of them are
\bse
\begin{align}
B_1 & = 2yA_1+x,\\
B_2 & = 2yA_2+xA_1,\\
B_3 & = 2yA_3+xA_2+(\partial^{-1}u_2),\\
B_4 & = 2yA_4+xA_3+(\partial^{-1}u_2)\partial+2(\partial^{-1}u_3),
\end{align}
\ese
where $A_j=(L^j)_+$ are defined as in Sec.\ref{S:3.1}.
Then, from \eqref{nKP:com-3} we have the non-isospectral KP hierarchy
\be\label{nKP:hie}
u_{t_m}=\sigma_{m}=\frac{1}{2}(B_{m,y}- [A_2, B_m]),~~m=1,2,\cdots,
\ee
and the first four equations are
\bse\label{nKP:flow}
\begin{align}
u_{t_1}  &=\sigma_{1}=2yK_{1},\\
u_{t_{2}}&=\sigma_{2}=2yK_{2}+xK_1+2u ,\\
u_{t_{3}}&=\sigma_{3}=2yK_{3}+xK_2+2\partial^{-1}u_{y}-u_x,\\
u_{t_{4}}&=\sigma_{4}=2yK_{4}+xK_{3}+u_{xx}+4u^{2}+u_{x}\partial^{-1}u+\frac{3}{2}\partial^{-2}u_{yy}-\frac{3}{2}u_y,
\end{align}
\ese
where $\{K_j\}$ are isospectral flows given in \eqref{iKP:hie},
and we have taken $u_2=u$.

$\{K_m\}$ and $\{\sigma_m\}$ are respectively called the isospectral and non-isospectral KP flows.
They are used to generate symmetries, Hamiltonians and conserved quantities for the isospectral KP hierarchy \eqref{iKP:hie}.
For these flows we have
\begin{prop}\label{P:3-1}
For the isospectral and non-isospectral KP flows $\{K_s\}$ and $\{\sigma_s\}$ we have
\bse\label{KP:zcr}
\begin{align}
K_s & =\frac{1}{2}(A_{s,y}-[A_2, A_s]),\label{iKP:zcr}\\
\sigma_s & =\frac{1}{2}(B_{s,y}-[A_2, B_s]),\label{nKP:zcr}
\end{align}
which are called zero curvature representations of the isospectral flow $K_s$
and  non-isospectral flow $\sigma_s$, respectively.
Here we specify the asymptotic data
\begin{align}
& K_s|_{u=0}=0,~ A_{s}|_{u=0}=\partial^{s},\label{KP:asym-a}\\
& \sigma_s|_{u=0}=0,~ B_{s}|_{u=0}=2y\partial^s+x\partial^{s-1},\label{KP:asym-b}
\end{align}
\ese
for $s=1,2,\cdots$.
\end{prop}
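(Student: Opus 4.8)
The plan is to read both identities \eqref{iKP:zcr} and \eqref{nKP:zcr} as the triad compatibility conditions \eqref{iKP:com-3} and \eqref{nKP:com-3} re-expressed among operators, so that they coincide with the hierarchy definitions \eqref{iKP:hie} and \eqref{nKP:hie}. The substantive content is then twofold: (i) to give a self-contained, $s$-uniform derivation from the Lax equations that simultaneously shows each right-hand side is a genuine order-zero (multiplication) operator rather than a differential operator of positive order, and (ii) to read off the asymptotic data \eqref{KP:asym-a}--\eqref{KP:asym-b} by specializing to $u=0$.

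For the isospectral identity I would start from the $y$-flow $L_y=[A_2,L]$ of \eqref{iKP:com-1}. Since $\partial_y$ acts as a derivation, $(L^s)_y=[A_2,L^s]$. Writing $L^s=A_s+(L^s)_-$, projecting onto the differential part, and using that the commutator of the two differential operators $A_2,A_s$ is already purely differential, one obtains
\be
A_{s,y}-[A_2,A_s]=\bigl([A_2,(L^s)_-]\bigr)_{+}.
\ee
Because $A_2=\partial^2+2u$ has order two while $(L^s)_-$ has order $\le-1$, this commutator has order $\le 1$; a one-line check shows the order-one coefficient cancels, so the right-hand side is a scalar, equal to $2\,\partial_x\,\mathrm{res}\,L^s=2K_s$. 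Setting $u=0$ gives $L=\partial$, $(L^s)_-=0$ and $A_s|_{u=0}=\partial^s$, hence $K_s|_{u=0}=0$, which is \eqref{KP:asym-a}.

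For the non-isospectral identity I would use the form read off from the examples, $B_s=2yA_s+xA_{s-1}+C_s$, where $C_s$ carries no explicit $x,y$ and has order $\le s-2$. Differentiating in $y$ yields the explicit contribution $2A_s$ together with $2yA_{s,y}+xA_{s-1,y}+C_{s,y}$, while the commutator, using $[A_2,x]=2\partial$ and $[A_2,y]=0$, produces $2y[A_2,A_s]+x[A_2,A_{s-1}]+2\partial A_{s-1}+[A_2,C_s]$. Substituting the isospectral identity $A_{r,y}-[A_2,A_r]=2K_r$ for $r=s,s-1$ collapses the $x,y$-weighted terms to $4yK_s+2xK_{s-1}$, and leaves the $x,y$-free remainder $2A_s-2\partial A_{s-1}+C_{s,y}-[A_2,C_s]$, whose positive-order parts must cancel to a scalar; matching with \eqref{nKP:hie} identifies the total as $2\sigma_s$. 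The asymptotic data \eqref{KP:asym-b} then follows from \eqref{nKP:B bc-1}: at $u=0$ one has $B_{s,y}|_{u=0}=2\partial^s$ and $[A_2,B_s]|_{u=0}=[\partial^2,x\partial^{s-1}]=2\partial^s$, so their difference vanishes and $\sigma_s|_{u=0}=0$.

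The step I expect to be the main obstacle is the non-isospectral cancellation: unlike the isospectral case, where the $\pm$-splitting forces order zero at once, here the explicit $x,y$ weights and the inhomogeneous term $L^{s-1}$ in \eqref{nKP:com-2} interlace the positive-order contributions, and one must verify that $2A_s-2\partial A_{s-1}+C_{s,y}-[A_2,C_s]$ truly reduces to a multiplication operator (for $s=2$, for instance, it equals the scalar $4u$). I would settle this either by a short induction on $s$ anchored on the isospectral identity, or---more in keeping with the isospectral argument---by applying the same $\pm$-projection directly to the non-isospectral Lax equation \eqref{nKP:com-2}, tracking the explicit $x,y$-derivatives separately; the latter route keeps the bookkeeping minimal and is the one I would pursue.
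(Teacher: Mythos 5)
Your proposal is correct, but it takes a different (and more substantive) route than the paper, which offers no proof of Proposition \ref{P:3-1} at all: there, \eqref{iKP:zcr} and \eqref{nKP:zcr} are simply the definitions \eqref{iKP:hie} and \eqref{nKP:hie} of the flows, obtained by substituting $A_{2,t_m}=2u_{t_m}$ into the triad compatibilities \eqref{iKP:com-3} and \eqref{nKP:com-3}, while the asymptotic data are built into the construction ($A_s=(L^s)_+$ with $L|_{u=0}=\partial$, and the imposed condition \eqref{nKP:B bc-1} for $B_s$). What you add is the well-definedness content the paper leaves implicit, namely that each right-hand side is an order-zero (multiplication) operator. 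Your isospectral computation --- writing $L^s=A_s+(L^s)_-$, using $(L^s)_y=[A_2,L^s]$, and extracting $\bigl([A_2,(L^s)_-]\bigr)_+=2\partial_x\,\underset{\partial}{\mathrm{Res}}\,L^s$ --- is precisely the paper's proof of its Proposition \ref{P:3-2} (the residue formula \eqref{K-Res}), so in effect you merge Propositions \ref{P:3-1} and \ref{P:3-2} into one statement. Your non-isospectral reduction $B_{s,y}-[A_2,B_s]=4yK_s+2xK_{s-1}+\bigl(2A_s-2\partial A_{s-1}+C_{s,y}-[A_2,C_s]\bigr)$ is sound and checks against \eqref{nKP:flow} for $s=2,3$, but two points would still need to be nailed down (you flag the second yourself): the ansatz $B_s=2yA_s+xA_{s-1}+C_s$ with $C_s$ of order $\le s-2$ and free of explicit $x,y$ is read off from examples and should be derived from the unique determination of $B_s$ by \eqref{nKP:com-2} under \eqref{nKP:B bc-1}; and the assertion that the $x,y$-free remainder collapses to a multiplication operator is not automatic. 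Your second proposed fix is the right one: apply the $\pm$-projection directly to \eqref{nKP:com-2}, noting that left multiplication by $x$ and $y$ respects the splitting (so $(2yL^s+xL^{s-1})_+=2yA_s+xA_{s-1}$), and conclude with a uniqueness argument in the style of Lemma \ref{lem:KP-1}. The paper's definitional reading buys brevity; your route buys an actual proof that the compatibility conditions \eqref{iKP:com-3} and \eqref{nKP:com-3} define scalar flows at all, which the paper never verifies for the non-isospectral case.
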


Besides, the isospectral flows $\{K_s\}$ can also be expressed in terms of the pseudo-differential operator $ L$.
\begin{prop}\label{P:3-2}
The isospectral flows $\{K_s\}$ defined by \eqref{iKP:zcr} can be expressed as
\be
K_s=\partial \,\underset{\partial}{\mathrm{Res\,}} L^s,
\label{K-Res}
\ee
where
\[\underset{k}{\mathrm{Res\,}}\biggl( \sum^{+\infty}_{j=-m}c_j k^j\biggr) =c_{-1},~~ (m\geq 1).\]
\end{prop}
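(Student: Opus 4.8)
The plan is to start from the zero-curvature definition $K_s=\frac{1}{2}(A_{s,y}-[A_2,A_s])$ with $A_s=(L^s)_+$ and collapse the right-hand side to a single residue. First I would invoke the $y$-evolution $L_y=[A_2,L]$ from \eqref{iKP:com-1}, which telescopes across the factors of $L^s$ to give $(L^s)_y=[A_2,L^s]$. Since the projection $(\cdot)_+$ onto the differential part merely retains the nonnegative powers of $\partial$ while $\partial_y$ acts only on coefficients, the two operations commute, so
\be
A_{s,y}=\partial_y(L^s)_+=\bigl((L^s)_y\bigr)_+=\bigl([A_2,L^s]\bigr)_+.
\ee

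Next I would split $[A_2,L^s]=[A_2,(L^s)_+]+[A_2,(L^s)_-]$. The commutator $[A_2,(L^s)_+]$ of two differential operators is again differential, hence equals its own positive part and cancels the $[A_2,A_s]$ term exactly, leaving
\be
A_{s,y}-[A_2,A_s]=\bigl([A_2,(L^s)_-]\bigr)_+.
\ee
Because $A_2$ has order $2$ and $(L^s)_-$ has order $-1$, their commutator has order at most $0$; taking the positive part therefore keeps only the order-zero (multiplication) term, confirming a priori that $A_{s,y}-[A_2,A_s]$ is a bona fide function, as $K_s$ must be.

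It then remains to extract that order-zero coefficient. Writing $(L^s)_-=c_1\partial^{-1}+c_2\partial^{-2}+\cdots$ with $c_1=\underset{\partial}{\mathrm{Res\,}}L^s$, I would apply the Leibniz rule \eqref{KP:Leibnitz}. The piece $[2u_2,(L^s)_-]$ has order at most $-1$ and contributes nothing, while in $[\partial^2,(L^s)_-]$ the order-zero terms of $\partial^2 c_1\partial^{-1}$ and $\partial^2 c_2\partial^{-2}$ give $2c_{1,x}+c_2$, and the order-zero term of $(L^s)_-\partial^2$ gives $c_2$; the $c_2$ contributions cancel, leaving $2c_{1,x}$. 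Hence $A_{s,y}-[A_2,A_s]=2\partial\,\underset{\partial}{\mathrm{Res\,}}L^s$ and $K_s=\partial\,\underset{\partial}{\mathrm{Res\,}}L^s$, as claimed. Equivalently, one may read this coefficient straight off the Lax equation \eqref{KP:Lax eq}, since the $\partial^{-1}$-coefficient of $L_{t_s}=[(L^s)_+,L]=-[(L^s)_-,L]$ is $u_{2,t_s}=K_s$.

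The only genuinely delicate point is the reduction in the second step: one must recognize that $\partial_y$ commutes with $(\cdot)_+$ and that the differential commutator $[A_2,(L^s)_+]$ is absorbed precisely, so that the entire zero-curvature expression collapses to $([A_2,(L^s)_-])_+$. After that the residue extraction is routine Leibniz bookkeeping, the one thing to watch being the mutual cancellation of the $c_2$ terms so that only $c_1=\underset{\partial}{\mathrm{Res\,}}L^s$ survives. As a consistency check I would verify the case $s=1$: there $(L)_-=u_2\partial^{-1}+\cdots$ yields $\underset{\partial}{\mathrm{Res\,}}L=u_2=u$ and hence $K_1=u_x$, matching \eqref{iKP:flow}.
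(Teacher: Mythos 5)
Your proof is correct and follows essentially the same route as the paper's: both reduce $2K_s=A_{s,y}-[A_2,A_s]$ via $(L^s)_y=[A_2,L^s]$ to the order-zero part of $[A_2,(L^s)_-]$, which the Leibniz rule evaluates as $2\partial\,\underset{\partial}{\mathrm{Res\,}}L^s$. Your version merely makes explicit two points the paper leaves implicit --- that $\partial_y$ commutes with $(\cdot)_+$ and that $((L^s)_-)_y$ and $[2u_2,(L^s)_-]$ contribute nothing at order zero --- so no changes are needed.
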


\begin{proof}
From \eqref{iKP:zcr} we have
\begin{align*}
2K_s &= A_{s,y}-[A_2,A_s]\\
     &=[(L^s-(L^s)_{-})_y-[A_2, L^s-(L^s)_{-}]]_{0}\\
     &=[(L^s)_y-[A_2,L^s]-((L^s)_{-})_y+[A_2, (L^s)_{-}]]_{0}.
\end{align*}
Here $(L^s)_{-}=L^s-(L^s)_{+}$, and $(\,\cdot\,)_0$  means taking the constant part of the operator  $(\,\cdot\,)$.
Noting that \eqref{iKP:com-1} indicates $(L^s)_y-[A_2,L^s]=0$
we then have
\[2 K_s=[A_2, (L^s)_{-}]_{0}= 2\partial \,\underset{\partial}{\mathrm{Res\,}} L^s\]
and we finish the proof.
\end{proof}

\subsection{Algebra of flows, recursive structures and symmetries}
\label{S:3.4}

The KP flows $\{K_l\}$ and $\{\sigma_r\}$ generate  a Lie algebra w.r.t. the product
$\llbracket\cdot\, ,\,\cdot\rrbracket$  defined in \eqref{def:Lie product}.
This fact can be proved by using the zero curvature representations of these flows.
\begin{thm}
\label{thm:KP alg}
The KP flows $\{K_{l}\}$ and $\{\sigma_{r}\}$ span (or generate) a Lie algebra\footnote{By this we mean that
$\{K_{l}\}$ and $\{\sigma_{r}\}$ generate a linear space $\mathbf{X}=\Bigl\{\underset{j}{\sum} \alpha_jK_j
+\underset{j}{\sum} \beta_j \sigma_j,~ ~\alpha_j,\beta_j\in \mathbb{R}\Bigr\}$
which is closed w.r.t. the Lie product $\j\cdot,\cdot\k$. }
$\mathbf{X}$ with basic structure
\bse\label{KP:alg}
\begin{align}
&{\llbracket}K_{l},K_{r}{\rrbracket}= 0,\label{KP:alg-1}\\
&{\llbracket}K_{l},\sigma_{r}{\rrbracket}= l\,K_{s+r-2},\label{KP:alg-2}\\
&{\llbracket}\sigma_{l},\sigma_{r}{\rrbracket}= (l-r)\sigma_{l+r-2},\label{KP:alg-3}
\end{align}
\ese
where $l,r\geq 1$ and we set $K_0=\sigma_0=0$.
\end{thm}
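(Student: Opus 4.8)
The plan is to convert every Lie product of flows into a commutator of pseudo-differential operators, and to read each structure constant off from a single identity of the shape $[\,\cdot\,,L]=0$. Two elementary facts drive this. First, for any $u$-dependent object $F$ one has $(F'[g])'[f]-(F'[f])'[g]=-F'[\llbracket f,g\rrbracket]$; taking $F=L$, $f=K_l$ (or $\sigma_l$) and $g=K_r$ (or $\sigma_r$) rewrites $-L'[\llbracket f,g\rrbracket]$ as the left-hand side, which we expand with the Lax equations. Second, the map $g\mapsto L'[g]$ is injective, since the coefficient of $\partial^{-1}$ in $L'[g]$ equals $g$; hence once the expanded left-hand side is shown to be $[cP,L]$ with $P$ a flow-generating operator, the bracket is pinned down. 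The inputs are the Lax equations $L'[K_r]=[A_r,L]$ and $L'[\sigma_r]=[B_r,L]+L^{r-1}$ (both hold because $L$ carries no explicit time), which on raising to powers give $(L^l)'[K_r]=[A_r,L^l]$ and $(L^l)'[\sigma_r]=[B_r,L^l]+lL^{l+r-2}$, hence $A_l'[K_r]=([A_r,L^l])_+$ and $A_l'[\sigma_r]=([B_r,L^l])_+ + lA_{l+r-2}$; I shall also use the Zakharov--Shabat identity $([A_r,L^l])_+-([A_l,L^r])_+=[A_r,A_l]$, a pure operator identity following from $[L^l,L^r]=0$ and the $+/-$ splitting.

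For \eqref{KP:alg-1} I expand $(L'[K_r])'[K_l]-(L'[K_l])'[K_r]$: the double commutators collapse by Jacobi to $[[A_r,A_l],L]$ and the linear part is $[A_r'[K_l]-A_l'[K_r],L]$, so the total is $[A_r'[K_l]-A_l'[K_r]+[A_r,A_l],L]$, which vanishes identically by Zakharov--Shabat; injectivity then forces $\llbracket K_l,K_r\rrbracket=0$. For \eqref{KP:alg-2} the same expansion, now with one isospectral and one non-isospectral flow, yields
\[
-L'[\llbracket K_l,\sigma_r\rrbracket]=\big[\,B_r'[K_l]-([B_r,L^l])_+ -[A_l,B_r]-lA_{l+r-2}\,,\,L\,\big],
\]
so the claim reduces to showing that $B_r'[K_l]-([B_r,L^l])_+ -[A_l,B_r]$ vanishes, after which the right-hand side becomes $-l[A_{l+r-2},L]=-l\,L'[K_{l+r-2}]$ and injectivity gives $\llbracket K_l,\sigma_r\rrbracket=lK_{l+r-2}$. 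For \eqref{KP:alg-3} the two source terms $L^{l-1},L^{r-1}$ contribute, after the Jacobi reduction, an explicit $(r-l)L^{l+r-3}$ that exactly matches the source part of $-(l-r)L'[\sigma_{l+r-2}]$, leaving one to prove that $B_r'[\sigma_l]-B_l'[\sigma_r]+[B_r,B_l]-(r-l)B_{l+r-2}$ commutes with $L$.

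The heart of the matter is thus the transformation of $B_r$ itself under the flows, for, unlike $A_r=(L^r)_+$, the operator $B_r$ is fixed only implicitly, by its Lax equation with source $L^{r-1}$ and by the asymptotics $B_r|_{\mathbf u=\mathbf 0}=2y\partial^r+x\partial^{r-1}$. My route is to establish the isospectral rule $B_r'[K_l]=([B_r,L^l])_+ +[A_l,B_r]$ (and its non-isospectral analogue) by differentiating the defining Lax equation along the flow and invoking uniqueness: the differentiated equation together with the differentiated asymptotic data determines $B_r'[K_l]$ uniquely, and one checks that the proposed operator solves it. The delicate point, which I expect to be the main obstacle, is that the explicit variable $x$ does \emph{not} commute with $\partial$: the commutators $[x,L^l]$ and $[A_l,x]$ survive the $+$-projection and are precisely what make the non-$A$-proportional parts of the rule close up, whereas the multiplier $y$ commutes with $\partial_x$ and contributes nothing. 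A naive splitting $B_r=2yA_r+xA_{r-1}+C_r$ that drops these $x$-commutators gives wrong cancellations; keeping them, the $A$-proportional pieces are removed by Zakharov--Shabat, the order-lowering $x$-commutators make the $B$-rule hold, and the source terms $L^{r-1}$ produce the index shift $l+r-2$ and the integer structure constants $l$ and $l-r$, completing each identity.
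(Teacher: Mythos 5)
Your proposal is correct in substance but takes a genuinely different route from the paper. The paper never invokes the spectral Lax equations \eqref{iKP:com-2}, \eqref{nKP:com-2} in proving Theorem \ref{thm:KP alg}: it works one level above $L$, using only the $y$-zero-curvature representations \eqref{KP:zcr}; Lemma \ref{lem:KP-2} shows each bracket of flows again has a representation $2\llbracket\cdot,\cdot\rrbracket=\langle\cdot,\cdot\rangle_y-[A_2,\langle\cdot,\cdot\rangle]$ with computable asymptotics at $u=0$, and the uniqueness Lemma \ref{lem:KP-1}, applied to the coupled pair (flow difference, operator difference), finishes. You instead anchor everything on $L$ itself: the inputs $L'[K_r]=[A_r,L]$ and $L'[\sigma_r]=[B_r,L]+L^{r-1}$, injectivity of $g\mapsto L'[g]$ (correct, since the $\partial^{-1}$ coefficient of $L'[g]$ is $g$), the differentiated powers $A_l'[K_r]=([A_r,L^l])_+$ and $A_l'[\sigma_r]=([B_r,L^l])_+ +l\,A_{l+r-2}$, and the Zakharov--Shabat identity, which I verified ($([A_r,L^l])_+-([A_l,L^r])_+=[A_r,A_l]$ follows from taking the $+$ part of $[L^l,L^r]=0$). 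Your reductions of \eqref{KP:alg-2} and \eqref{KP:alg-3} to operator identities are algebraically right, including the cancellation of the source terms and the count $(r-l)L^{l+r-3}$, and your point that the structure constants come from $[x,\partial^l]$ surviving the $+$ projection is the same computation the paper performs when evaluating $\langle A_l,B_r\rangle|_{u=0}=l\,\partial^{l+r-2}$. What your route buys: \eqref{KP:alg-1} becomes essentially a one-line consequence of Zakharov--Shabat, and your key uniqueness input ($[W,L]=0$, $W|_{\mathbf u=\mathbf 0}=0$ $\Rightarrow$ $W=0$) is literally Lemma \ref{lem:5-2}, which the paper proves (for continuum-limit purposes) in Sec.~5. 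What the paper's route buys: it relies only on the defining zero curvature equations and asymptotic data, so it never needs \eqref{iKP:com-2} and \eqref{nKP:com-2} to hold as identities in $u$ after the reduction expressing $u_3,u_4,\dots$ through $u$ --- a consistency fact your proposal assumes with only the remark that ``$L$ carries no explicit time,'' and which really rests on compatibility of the $y$-flow with the $t_m$-flows.

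One step as you describe it is circular and needs repair: you propose to get $B_r'[K_l]=([B_r,L^l])_+ +[A_l,B_r]$ ``by differentiating the defining Lax equation along the flow and invoking uniqueness.'' But differentiating $L'[\sigma_r]=[B_r,L]+L^{r-1}$ in the direction $K_l$ produces, via the chain rule $(L'[\sigma_r])'[K_l]=(L')'[K_l]\circ[\sigma_r]+L'[\sigma_r'[K_l]]$, the unknown $\sigma_r'[K_l]$ --- i.e.\ part of the very bracket you are computing --- so the differentiated equation alone does not determine $B_r'[K_l]$. Two clean fixes exist inside your framework. (i) Do not isolate the $B$-rule at all: your own combined identity $-L'\bigl[\llbracket K_l,\sigma_r\rrbracket-l\,K_{l+r-2}\bigr]=[N,L]$ with $N=B_r'[K_l]-([B_r,L^l])_+-[A_l,B_r]$ and $N|_{\mathbf u=\mathbf 0}=0$ already forces $N=0$ and the bracket relation simultaneously, because the left side is purely of negative order, so $([N,L])_+=0$, which kills the coefficients of $N$ one by one (top coefficient $-n_{0,x}$, then recurse), after which injectivity of $L'$ gives the flow identity; this mirrors exactly how the paper's Lemma \ref{lem:KP-1} disposes of the coupled pair $(X,N)$. (ii) Alternatively, differentiate the $+$-projected defining relation $([B_r,L]+L^{r-1})_+=0$, which is a pointwise identity in $u$ containing no flow, so its directional derivative along $K_l$ determines $B_r'[K_l]$ without circularity. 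With either repair, your argument for all three relations \eqref{KP:alg-1}--\eqref{KP:alg-3}, including the $u=0$ bookkeeping ($[B_r,B_l]|_{u=0}=(r-l)B_{l+r-2}|_{u=0}$), goes through.
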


We prove the theorem through the following two lemmas.
The first is
\begin{lem}\label{lem:KP-1}
For the function $X=X(u)\in \mathcal{F}$ and differential operator
$$ N=a_0\partial^m+a_1\partial^{m-1}+\cdots+a_{m-1}\partial+a_m,~~N|_{u=0}=0 $$
living on $\mathcal{F}$,
the equation
\be\label{KP:test eq}
2X-N_{y}+[A_2, N]=0
\ee
has only zero solution $X=0,~N=0$.
Here $A_2=\partial^2+2u$ where we have taken $u_2=u$.
\end{lem}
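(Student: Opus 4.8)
The plan is to exploit the structure of equation \eqref{KP:test eq} by reading off its coefficients order by order in $\partial$, starting from the highest power and working downward. Since $N=a_0\partial^m+\cdots+a_m$ with $N|_{u=0}=0$, and $X$ contributes only a multiplication operator (order zero in $\partial$), the key observation is that the commutator $[A_2,N]$ and the $y$-derivative $N_y$ both preserve or lower the differential order coming from $N$. First I would expand $[A_2,N]=[\partial^2,N]+2[u,N]$ using the Leibniz rule \eqref{KP:Leibnitz}; the term $[\partial^2,N]$ raises no order but produces $2a_{0,x}\partial^{m+1}+\cdots$ from the leading coefficient, so comparing coefficients of $\partial^{m+1}$ in \eqref{KP:test eq} gives $a_{0,x}=0$. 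Combined with $N|_{u=0}=0$ (which forces each $a_j$ to vanish when $u=0$), an $x$-independent coefficient lying in the rapidly-decreasing class $\mathcal{S}$ must be identically zero, so $a_0=0$.

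Having killed $a_0$, I would then repeat the argument inductively. With $a_0=0$ the operator $N$ now has leading term $a_1\partial^{m-1}$, and the coefficient of $\partial^{m}$ in \eqref{KP:test eq} again isolates $a_{1,x}$ (plus a $y$-derivative term that I expect to drop out or be absorbed once the decay conditions are invoked), forcing $a_1=0$ by the same decay-plus-vanishing-at-$u=0$ reasoning. Iterating, each successive coefficient $a_j$ is shown to vanish, collapsing $N$ to zero. Once $N=0$ is established, the surviving content of \eqref{KP:test eq} is simply $2X=0$, whence $X=0$, completing the proof.

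The main obstacle I anticipate is handling the $N_y$ terms cleanly at each stage: unlike the commutator, the $y$-derivative can in principle feed lower-order coefficients back into the comparison, so the bookkeeping must confirm that at the order where $a_{j,x}$ is isolated, any $N_y$ contribution is either of strictly lower order or also forces a vanishing $x$-integral. I would argue that the combination $N_{y}-[A_2,N]$ at the top relevant order always reduces to a pure $\partial_x$ of the current leading coefficient (because the $y$-dependence enters the $\{u_j\}$, which themselves decay), so that integrating in $x$ over the whole line and using rapid decay in $\mathcal{S}$ converts ``$x$-derivative equals zero plus vanishing boundary data'' into ``coefficient is identically zero.'' The careful part is making precise, at each inductive step, that the hypothesis $N|_{u=0}=0$ together with membership in $\mathcal{F}$ rules out nonzero constants of integration; once that is pinned down, the descent is routine.
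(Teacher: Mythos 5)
Your proposal is correct and takes essentially the same route as the paper's proof: comparing coefficients of powers of $\partial$ in \eqref{KP:test eq} from the top order $\partial^{m+1}$ downward, obtaining $a_{j,x}=0$ at each stage and concluding $a_j=0$ from $N|_{u=0}=0$ together with rapid decay of $u$, whence $N=0$ and then $X=0$; you in fact supply the details the paper compresses into one line, and your concern about $N_y$ resolves exactly as you anticipate, since the only $y$-derivative entering at the order isolating $a_{j,x}$ is $a_{j-1,y}$, which vanishes once $a_{j-1}=0$ is already established. The only slip is the phrase that $[\partial^2,N]$ ``raises no order''---it raises the order by one, producing precisely the $2a_{0,x}\partial^{m+1}$ term your argument correctly uses.
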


\begin{proof}
Comparing the coefficient of the highest power of $\partial$ in \eqref{KP:test eq}, we find $a_0=0$.
Then, step by step, one can successfully get
$a_{1}=a_2=\cdots=a_m=0$, which leads to $N=0$ and consequently $X=0$.
\end{proof}

The second lemma is
\begin{lem}\label{lem:KP-2}
The KP flows $\{K_{l}\}$ and $\{\sigma_{r}\}$ and operators $\{A_l\}$ and $\{B_r\}$ satisfy
\bse
\begin{align}
& 2\llbracket{K_{l},K_{r}}\rrbracket= {\langle}A_{l},A_{r}{\rangle}_y-[A_2,{\langle}A_{l},A_{r}\rangle] ,\\
& 2\llbracket{K_{l},\sigma_{r}}\rrbracket= {\langle}A_{l},B_{r}{\rangle}_y-[A_2,{\langle}A_{l},B_{r}\rangle] ,
\label{KP:Ks}\\
& 2\llbracket{\sigma_{l},\sigma_{r}}\rrbracket= {\langle}B_{l},B_{r}{\rangle}_y-[A_2,{\langle}B_{l},B_{r}\rangle] ,
\end{align}
\ese
where
\bse
\begin{align}
{\langle}A_{l},A_{r}{\rangle}&=A_{l}'[K_{r}]-A_{r}'[K_{l}]+[A_{l},A_{r}],\\
{\langle}A_{l},B_{r}{\rangle}&=A_{l}'[\sigma_{r}]-B_{r}'[K_{l}]+[A_{l},B_{r}],\label{KP:AB}\\
{\langle}B_{l},B_{r}{\rangle}&=B_{l}'[\sigma_{r}]-B_{r}'[\sigma_{l}]+[B_{l},B_{r}],
\end{align}
\ese
and satisfy
\bse
\begin{align}
{\langle}A_{l},A_{r}\rangle|_{u=0}&= 0,\\
{\langle}A_{l},B_{r}\rangle|_{u=0}&= l\,\partial^{l+r-2},
\label{KP:AB bc}\\
{\langle}B_{l},B_{r}\rangle|_{u=0}&= (l-r)\bigl(2y\partial^{l+r-2}+x\partial^{l+r-3}\bigr).
\end{align}
\ese
\end{lem}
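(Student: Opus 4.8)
The plan is to prove all three identities at once by exhibiting a single ``homomorphism'' property of the zero curvature map. Write $\Phi(M)=\tfrac12\bigl(M_y-[A_2,M]\bigr)$, so that Proposition~\ref{P:3-1} reads $K_s=\Phi(A_s)$ and $\sigma_s=\Phi(B_s)$. Assembling the three displayed objects into one bracket
\[
\langle M,N\rangle = M'[\Phi(N)]-N'[\Phi(M)]+[M,N],
\]
each assertion of the lemma becomes the specialization of the master identity
\[
2\,\llbracket \Phi(M),\Phi(N)\rrbracket = \langle M,N\rangle_y-[A_2,\langle M,N\rangle]
\]
to $(M,N)=(A_l,A_r)$, $(A_l,B_r)$ or $(B_l,B_r)$; since $\Phi(A_l)'[\,\cdot\,]=K_l{}'[\,\cdot\,]$ and $\Phi(B_l)'[\,\cdot\,]=\sigma_l{}'[\,\cdot\,]$, the resulting right-hand sides are exactly the stated $\langle A_l,A_r\rangle$, $\langle A_l,B_r\rangle$, $\langle B_l,B_r\rangle$.

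To prove the master identity I would first record the two elementary facts that drive everything: $A_2'[g]=2g$ for all $g$ (as $A_2=\partial^2+2u$ is affine in $u$), and the commutation rule
\[
(\partial_y F)'[g]=\partial_y\bigl(F'[g]\bigr),
\]
valid for any $(x,y)$-jet operator $F$, \emph{including} the $B$'s, whose explicit $x,y$ dependence is untouched by the Gâteaux derivative. This second rule is established exactly as \eqref{eq-pt}, by expressing $F$ through the jet variables $\partial_x^i\partial_y^j u$ and using symmetry of the second Gâteaux derivative. Then I would expand $2\llbracket\Phi(M),\Phi(N)\rrbracket=2\bigl(\Phi(M)'[\Phi(N)]-\Phi(N)'[\Phi(M)]\bigr)$ by substituting $2\Phi(M)=M_y-[A_2,M]$, pushing the prime through $\partial_y$ via the commutation rule and through $A_2$ via $A_2'=2(\cdot)$. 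The contributions $\partial_y\bigl(M'[\Phi(N)]-N'[\Phi(M)]\bigr)$ and $-\bigl[A_2,\,M'[\Phi(N)]-N'[\Phi(M)]\bigr]$ already reproduce the $M'[\Phi(N)]-N'[\Phi(M)]$ portion of the right-hand side, so what remains is to match a purely operatorial remainder against $[M,N]_y-[A_2,[M,N]]$.

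That final matching is the heart of the argument and the step I expect to be most delicate. After using $M_y=2\Phi(M)+[A_2,M]$ (and likewise for $N$) to remove the bare $y$-derivatives, the mixed commutators $2[\Phi(M),N]-2[\Phi(N),M]$ produced on the two sides cancel against those coming from $\partial_y[M,N]=[M_y,N]+[M,N_y]$, and the surviving nested commutators collapse by the Jacobi identity
\[
[[A_2,M],N]+[M,[A_2,N]]=[A_2,[M,N]].
\]
The instance $(B_l,B_r)$ is the most laborious, since both operators carry explicit $x,y$ and nonlocal $\partial^{-1}$ coefficients, so the $\Delta$-type/Leibniz expansions must be handled with care; the cases $(A_l,A_r)$ and $(A_l,B_r)$ are formally identical but lighter.

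Finally I would read off the asymptotic data by letting $u\to0$. There $A_l\to\partial^l$ and $B_r\to 2y\partial^r+x\partial^{r-1}$, while $K_l\to0$ and $\sigma_r\to0$; because the Gâteaux-derivative terms $M'[\Phi(N)]$ and $N'[\Phi(M)]$ are evaluated in directions vanishing at $u=0$, only the commutator $[M,N]$ survives each $\langle\,\cdot\,,\cdot\,\rangle$. Hence $\langle A_l,A_r\rangle|_{u=0}=[\partial^l,\partial^r]=0$, while for the other two one expands the constant-coefficient commutators using the elementary relations $[\partial^l,x]=l\partial^{l-1}$ and $[\partial^l,y]=l\partial^{l-1}$ and their iterates; collecting terms yields the leading behaviour recorded in the lemma. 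Together with Lemma~\ref{lem:KP-1}, this produces Theorem~\ref{thm:KP alg}.
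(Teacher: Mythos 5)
Your master-identity formulation is essentially the paper's own proof written once in uniform notation: the paper proves the mixed case by taking the G\^ateaux derivative of $2K_l=A_{l,y}-[A_2,A_l]$ in direction $\sigma_r$ (and of $2\sigma_r$ in direction $K_l$), using $A_2'[g]=2g$, substituting the zero curvature representations back in to eliminate $A_{l,y}$ and $B_{r,y}$, and collapsing the nested commutators with the Jacobi identity, then remarks that the other two cases are similar. Your $\Phi(M)=\tfrac12(M_y-[A_2,M])$ packaging with $\langle M,N\rangle=M'[\Phi(N)]-N'[\Phi(M)]+[M,N]$ is exactly that computation done for general $(M,N)$, and the two auxiliary facts you isolate ($A_2'$ acting as multiplication by $2$, and commutation of $\partial_y$ with the G\^ateaux derivative, justified as in \eqref{eq-pt}) are precisely what the paper uses implicitly. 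I verified your expansion: the mixed terms $2[\Phi(M),N]-2[\Phi(N),M]$ combine with $\partial_y[M,N]=[M_y,N]+[M,N_y]$ and the Jacobi identity $[[A_2,M],N]+[M,[A_2,N]]=[A_2,[M,N]]$ to close the identity, so the main argument is sound. Your observation that the G\^ateaux terms vanish at $u=0$ because the directions $K_l,\sigma_r$ vanish there also matches the paper's one-line treatment of \eqref{KP:AB bc}.

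There is, however, one concrete error in your asymptotic step: the relation $[\partial^l,y]=l\,\partial^{l-1}$ is false. In this paper $\partial\doteq\partial_x$, and $y$ is an independent coordinate, so $[\partial^l,y]=0$; only $[\partial^l,x]=l\,\partial^{l-1}$ is nontrivial. If you actually applied your stated relation, you would get $\langle A_l,B_r\rangle|_{u=0}=[\partial^l,\,2y\partial^r+x\partial^{r-1}]=2l\,\partial^{l+r-1}+l\,\partial^{l+r-2}$, contradicting \eqref{KP:AB bc}; the correct computation, with the $2y\partial^r$ term commuting past $\partial^l$, gives exactly $l\,\partial^{l+r-2}$, and likewise $\langle B_l,B_r\rangle|_{u=0}=(l-r)\bigl(2y\partial^{l+r-2}+x\partial^{l+r-3}\bigr)$ from the three surviving brackets $[2y\partial^l,x\partial^{r-1}]$, $[x\partial^{l-1},2y\partial^r]$ and $[x\partial^{l-1},x\partial^{r-1}]$. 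The inertness of $y$ under $\partial_x$ is not incidental bookkeeping: it is the reason the isospectral part of the algebra closes with the stated degree shift (compare the semi-discrete analogue in Lemma \ref{lem:dKP-2}, where the explicit $hn$ in $\b B_r$ does contribute through $\Delta$). With this correction your proof is complete and coincides in substance with the paper's.
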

\begin{proof}
We only prove \eqref{KP:Ks}. The others are similar.
From \eqref{KP:zcr} by direct calculation we find
\begin{align*}
2K_l'[\sigma_r]&=(A_l'[\sigma_r])_y-[2\sigma_r, A_l]-[A_2, A_l'[\sigma_r]]\\
               &=(A_l'[\sigma_r])_y-[B_{r,y}, A_l]+[[A_2,B_r],A_l]-[A_2, A_l'[\sigma_r]],
\end{align*}
and
\begin{align*}
2\sigma_r'[K_l]&=(B_r'[K_l])_y-[2 K_l, B_r]-[A_2, B_r'[K_l]]\\
               &=(B_r'[K_l])_y-[A_{l,y}, B_r]+[[A_2,A_l],B_r]-[A_2, B_r'[K_l]].
\end{align*}
Then, by substraction we reach to  \eqref{KP:Ks},
where we need to make use of the Jacobi identity
\[[[A,B],C]+[[B,C],A]+[[C,A],B]=0.\]
Besides, substituting the asymptotic data \eqref{KP:asym-a} and \eqref{KP:asym-b} into \eqref{KP:AB} we   get \eqref{KP:AB bc}.
We note that the method to prove this lemma has been used for many systems, e.g.
\cite{CZ-JPA-1991,Ma-JMP-1992,CZ-JMP-1996,MF-JMP-1999,TM-JPSJ-1999,Zhang-PLA-2006,ZC-SAM-2010-I}.
\end{proof}

These two lemmas together with the zero curvature representations \eqref{KP:zcr}
immediately lead to Theorem \ref{thm:KP alg}.

Theorem \ref{thm:KP alg} directly yields the following two corollaries.

\begin{cor}
\label{thm:iKP sym}
Each equation
\be
u_{t_{s}}=K_{s}
\ee
in the isospectral KP hierarchy \eqref{iKP:hie}
has two sets of symmetries
\be\label{iKP:sym}
\{K_{l}\},~~
\{\tau^{s}_{r}=st_sK_{s+r-2}+\sigma_r\}
\ee
and they generate a Lie algebra with basic structure
\bse\label{iKP:alg sym}
\begin{align}
& {\llbracket}K_{l},K_{r}{\rrbracket}= 0,\label{KP:sym-1}\\
& {\llbracket}K_{l},\tau^{s}_{r}{\rrbracket}= l\,K_{l+r-2},\label{KP:sym-2}\\
& {\llbracket}\tau^{s}_{l},\tau^{s}_{r}{\rrbracket}= (l-r)\tau^{s}_{l+r-2},\label{KP:sym-3}
\end{align}
\ese
where $l,r,s\geq 1$ and we set $K_0=\tau^{s}_{0}=0$.
\end{cor}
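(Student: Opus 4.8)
The plan is to derive everything directly from the basic algebraic structure \eqref{KP:alg} of Theorem \ref{thm:KP alg}, together with the symmetry criterion $\tilde\partial_{t_s}\kappa=\llbracket K_s,\kappa\rrbracket$ introduced just after \eqref{def:sym}. The entire argument is bilinear bookkeeping in the Lie product $\llbracket\cdot,\cdot\rrbracket$; no new analytic input beyond Theorem \ref{thm:KP alg} is required, and in particular the zero-curvature machinery of Lemmas \ref{lem:KP-1} and \ref{lem:KP-2} is not touched again.

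First I would verify that both families consist of symmetries of $u_{t_s}=K_s$. For $\{K_l\}$, since $K_l$ carries no explicit $t_s$-dependence we have $\tilde\partial_{t_s}K_l=0$, while \eqref{KP:alg-1} gives $\llbracket K_s,K_l\rrbracket=0$, so the two sides of the criterion agree. For $\tau^s_r=st_sK_{s+r-2}+\sigma_r$, the only explicit $t_s$ sits in the scalar factor $st_s$ (the non-isospectral flow $\sigma_r$ depends explicitly on $x$ and $y$ but not on $t_s$), hence $\tilde\partial_{t_s}\tau^s_r=sK_{s+r-2}$. On the other side, using bilinearity with $t_s$ treated as a constant in the G\^ateaux derivative together with \eqref{KP:alg-1}--\eqref{KP:alg-2},
\[
\llbracket K_s,\tau^s_r\rrbracket=st_s\llbracket K_s,K_{s+r-2}\rrbracket+\llbracket K_s,\sigma_r\rrbracket=0+sK_{s+r-2},
\]
which matches the explicit derivative, so $\tau^s_r$ is a symmetry.

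Next I would establish the three bracket relations \eqref{iKP:alg sym}. The first, $\llbracket K_l,K_r\rrbracket=0$, is just \eqref{KP:alg-1}. For the second, expanding gives $\llbracket K_l,\tau^s_r\rrbracket=st_s\llbracket K_l,K_{s+r-2}\rrbracket+\llbracket K_l,\sigma_r\rrbracket$; the first term vanishes by \eqref{KP:alg-1} and the second equals $lK_{l+r-2}$ by \eqref{KP:alg-2}, yielding \eqref{KP:sym-2}. The third relation is the one that needs care: expanding $\llbracket\tau^s_l,\tau^s_r\rrbracket$ bilinearly produces four terms, of which $\llbracket K_{s+l-2},K_{s+r-2}\rrbracket=0$ drops out; the two cross terms give, after \eqref{KP:alg-2} and skew-symmetry, $st_s\bigl[(s+l-2)-(s+r-2)\bigr]K_{s+l+r-4}=st_s(l-r)K_{s+l+r-4}$; and the last term contributes $\llbracket\sigma_l,\sigma_r\rrbracket=(l-r)\sigma_{l+r-2}$ by \eqref{KP:alg-3}. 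Factoring out $(l-r)$ and recognizing that $st_sK_{s+l+r-4}+\sigma_{l+r-2}=\tau^s_{l+r-2}$, since $s+l+r-4=s+(l+r-2)-2$, produces exactly \eqref{KP:sym-3}.

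I do not expect a genuine obstacle here; the single delicate point is the last bracket, where the two cross terms must combine with the correct signs so that their $t_s$-proportional parts add to $(l-r)$ rather than cancel, and where the index arithmetic has to reproduce precisely the shifted symmetry $\tau^s_{l+r-2}$. I would also double-check the convention $K_0=\tau^s_0=0$ (consistent with $K_0=\sigma_0=0$ of Theorem \ref{thm:KP alg}) so that the boundary cases $l+r=2$ are covered, after which the corollary follows immediately.
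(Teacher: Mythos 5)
Your proof is correct and takes essentially the same route as the paper: the paper gives no separate argument for this corollary, stating that it follows directly from Theorem \ref{thm:KP alg}, and the bilinear bracket computations plus the symmetry criterion $\tilde\partial_{t_s}\kappa=\llbracket K_s,\kappa\rrbracket$ that you spell out are precisely that omitted verification. As a side note, you correctly read \eqref{KP:alg-2} as $\llbracket K_l,\sigma_r\rrbracket=l\,K_{l+r-2}$, silently fixing the paper's index typo ($K_{s+r-2}$ there), which is confirmed by \eqref{KP:sym-2} itself.
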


\begin{cor}
\label{thm:KP recur}
The master symmetry $\sigma_3$ acts as a flows generator via the following relation
\begin{subequations}
\label{KP:recur}
\begin{align}
&K_{s+1}=\frac{1}{s}\, \llbracket K_{s},\sigma_3 \rrbracket,
\label{iKP:recur}\\
&\sigma_{s+1}=\frac{1}{s-3}\,\llbracket \sigma_{s},\sigma_3 \rrbracket,~~(s\neq 3).
\label{nKP:recur}
\end{align}
\end{subequations}
with initial flows $K_1=u_x$ given in \eqref{iKP:flow} and $\sigma_1, \sigma_4$ given in \eqref{nKP:flow}.
\end{cor}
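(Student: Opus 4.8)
The plan is to read off both recursions directly from the Lie-algebra relations \eqref{KP:alg} of Theorem \ref{thm:KP alg}, specializing the second entry of the bracket to the master symmetry $\sigma_3$. No new computation is required beyond setting $r=3$; the content of the corollary is really an interpretation of the structure constants of the algebra $\mathbf{X}$.

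First I would establish \eqref{iKP:recur}. Putting $l=s$ and $r=3$ in the mixed relation \eqref{KP:alg-2} gives
\[
\llbracket K_s,\sigma_3\rrbracket = s\,K_{s+3-2}=s\,K_{s+1},
\]
and since the structure constant $s$ is nonzero for every $s\ge 1$, dividing by $s$ yields $K_{s+1}=\frac{1}{s}\llbracket K_s,\sigma_3\rrbracket$. Iterating from the single seed $K_1=u_x$ of \eqref{iKP:flow} then generates the whole isospectral tower $\{K_s\}_{s\ge 1}$ by repeated bracketing against $\sigma_3$.

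Next I would establish \eqref{nKP:recur}. Putting $l=s$ and $r=3$ in \eqref{KP:alg-3} gives
\[
\llbracket \sigma_s,\sigma_3\rrbracket = (s-3)\,\sigma_{s+3-2}=(s-3)\,\sigma_{s+1},
\]
so that $\sigma_{s+1}=\frac{1}{s-3}\llbracket \sigma_s,\sigma_3\rrbracket$ whenever $s\ne 3$. The only point demanding attention---and the reason the statement excludes $s=3$---is that at $s=3$ the structure constant $s-3$ vanishes (equivalently $\llbracket\sigma_3,\sigma_3\rrbracket=0$ by skew-symmetry), so the bracket with $\sigma_3$ cannot manufacture $\sigma_4$ out of $\sigma_3$. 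This explains why two non-isospectral seeds are prescribed in \eqref{nKP:flow}: starting from $\sigma_1$ the recursion ascends through $\sigma_2,\sigma_3$, while starting from the independently supplied $\sigma_4$ it ascends through $\sigma_5,\sigma_6,\dots$, and the two branches together exhaust $\{\sigma_r\}_{r\ge 1}$. Thus there is no genuine obstacle in the argument; the only subtlety is this vanishing coefficient, which is handled precisely by supplying $\sigma_4$ as an extra initial datum rather than attempting to reach it recursively.
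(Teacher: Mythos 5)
Your proposal is correct and coincides with the paper's own argument: the paper offers no separate proof, stating only that Theorem \ref{thm:KP alg} ``directly yields'' this corollary, which is precisely your specialization of \eqref{KP:alg-2} and \eqref{KP:alg-3} to $r=3$ followed by division by the nonzero structure constants (note the paper's \eqref{KP:alg-2} contains a typo, $K_{s+r-2}$ for $K_{l+r-2}$, which you have silently corrected). Your observation that the vanishing coefficient at $s=3$ forces $\sigma_4$ to be supplied as an independent seed is exactly the reason the paper lists $\sigma_1$ and $\sigma_4$ as initial data.
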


We note that $\sigma_3$ and the recursive relation \eqref{iKP:recur} are the same as those
given in \cite{OF-PLA-1982},
which means the KP hierarchy derived from Lax triad approach
and the KP hierarchy generated from the recursive structure in Ref.\cite{OF-PLA-1982} are same.

\subsection{Hamiltonian structures and conserved quantities}

In the literature\cite{OF-PLA-1982,Case-PNAS-1984,Case-JMP-1985,CLB-JPA-1988} it has been proved that both isospectral and non-isospectral KP hierarchies
have Hamiltonian structures and each equation in the isospectral KP hierarchy has two sets of conserved quantities.
We list these main results in the following two theorems.

\begin{thm}
\label{thm:iKP Ham}
Each equation in the isospectral KP hierarchy has a Hamiltonian structure, i.e.
\be
u_{t_s}=K_s=\partial \frac{\delta H_s}{\delta u},
\label{iKP:Ham}
\ee
where the gradient functions $\gamma_s=\frac{\delta H_s}{\delta u}$  is defined by
\be
\gamma_s=\partial^{-1}K_s=
\left\{\begin{array}{ll}
u,~~& s=1,\\
\frac{1}{s-1}\,{\rm grad}(\gamma_{s-1},\sigma_3),~& s>1.
\end{array}\right.
\ee
The equation \eqref{iKP:Ham} has infinitely many conserved quantities
\be
H_1=\frac{1}{2}(u,u),~~H_s=\frac{1}{s-1}(\gamma_{s-1},\sigma_3),~~(s>1).
\ee
\end{thm}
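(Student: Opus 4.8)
The plan is to establish, in order, three facts: that $\theta=\partial$ is an implectic operator, that $\gamma_s=\partial^{-1}K_s$ is a gradient obeying the stated recursion, and that the $H_s$ are conserved. First I would check that $\partial$ is implectic. Skew-symmetry is immediate from integration by parts in $x$ using the rapid decay of $\mathcal{S}$, giving $\partial^*=-\partial$; and since $\partial$ has constant coefficients its G\^ateaux derivative vanishes, $\partial'[\cdot]=0$, so every term of the Jacobi identity \eqref{Jacobi} is zero. Hence, once $\gamma_s$ is shown to be a gradient, $u_{t_s}=K_s=\partial\gamma_s=\partial\frac{\delta H_s}{\delta u}$ is automatically of the Hamiltonian form \eqref{def:Hamiltonian}.

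The heart of the matter is a single operator identity: \emph{the operator $\sigma_3'\partial$ is self-adjoint}, equivalently $\sigma_3'\partial+\partial\sigma_3'^*=0$. I would prove this by direct computation from $\sigma_3=2yK_3+xK_2+2\partial^{-1}u_y-u_x$. Linearizing in $u$ and composing with $\partial$ presents $\sigma_3'\partial$ as an explicit operator in $\partial_x,\partial_y$: the pair of terms $6yu\,\partial_x^2+6yu_x\partial_x$ is seen to be self-adjoint on its own, and the remaining differential part splits into a manifestly self-adjoint piece $\tfrac{y}{2}\partial_x^4-\partial_x^2$ together with the three terms $\tfrac{3y}{2}\partial_y^2$, $x\,\partial_x\partial_y$ and $2\partial_y$ that carry the explicit multipliers $x,y$. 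Computing their adjoints, each differs from itself by a first-order term, namely $3\partial_y$, $\partial_y$ and $-4\partial_y$ respectively. The crux --- and the step I expect to be the main obstacle --- is that these anomalous terms cancel exactly, $3\partial_y+\partial_y-4\partial_y=0$, so that $\sigma_3'\partial$ is self-adjoint. (The same fact can also be read off from $\gamma_s=\underset{\partial}{\mathrm{Res\,}}L^s$ of Proposition \ref{P:3-2} together with the classical variational identity for residues of powers of $L$.)

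With the lemma in hand I would prove by induction on $s$ that $\gamma_s=\partial^{-1}K_s$ is a gradient and that $\gamma_s=\frac{1}{s-1}\mathrm{grad}(\gamma_{s-1},\sigma_3)$. The base case is $\gamma_1=\partial^{-1}u_x=u$, which is the gradient of $H_1=\tfrac12(u,u)$. For the step, I would use the gradient-of-an-inner-product formula $\mathrm{grad}(\gamma_{s-1},\sigma_3)=\gamma_{s-1}'^*[\sigma_3]+\sigma_3'^*[\gamma_{s-1}]$, the induction hypothesis $\gamma_{s-1}'^*=\gamma_{s-1}'$, and the recursion $K_s=\frac{1}{s-1}\j K_{s-1},\sigma_3\k$ of Corollary \ref{thm:KP recur}. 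Substituting $K_{s-1}=\partial\gamma_{s-1}$ gives $(s-1)\partial^{-1}K_s=\gamma_{s-1}'[\sigma_3]-\partial^{-1}\sigma_3'\partial[\gamma_{s-1}]$, and the lemma converts $-\partial^{-1}\sigma_3'\partial$ into $\sigma_3'^*$, yielding $(s-1)\partial^{-1}K_s=\mathrm{grad}(\gamma_{s-1},\sigma_3)$. Since the right-hand side is the gradient of the functional $H_s=\frac{1}{s-1}(\gamma_{s-1},\sigma_3)$, Proposition \ref{P:2-1} shows $\gamma_s$ is a gradient with potential $H_s$, which establishes both the recursion and the Hamiltonian structure $u_{t_s}=\partial\frac{\delta H_s}{\delta u}$.

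Finally, to prove each $H_m$ is a conserved quantity I would verify that $\gamma_m$ is a conserved covariant of every flow $u_{t_s}=K_s$. Writing $K_s=\partial\gamma_s$ with $\gamma_s'^*=\gamma_s'$ gives $K_s'^*[\gamma_m]=-\gamma_s'[K_m]$, so the conserved-covariant condition $\gamma_m'[K_s]+K_s'^*[\gamma_m]=0$ reduces to $\partial^{-1}\bigl(K_m'[K_s]-K_s'[K_m]\bigr)=0$, i.e.\ precisely $\j K_m,K_s\k=0$, which is \eqref{KP:alg-1} of Theorem \ref{thm:KP alg}. As $\gamma_m$ is a gradient with potential $H_m$ and carries no explicit $t$ (so $\partial_tH_m|_{u=0}=0$), Proposition \ref{P:2-2} then yields that $H_m$ is conserved along the flow. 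This completes the argument and simultaneously exhibits $\{H_m\}$ as an involutive family for the whole isospectral hierarchy.
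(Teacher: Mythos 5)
Your proof is correct and follows essentially the route the paper takes: the paper states Theorem \ref{thm:iKP Ham} as known from the literature, explicitly naming the Noether identity \eqref{KP-to ham}, $\sigma_3'\partial+\partial\sigma_3'^*=0$, as the key step, and it carries out precisely your inductive argument (gradient-of-inner-product formula plus the recursion \eqref{iKP:recur}, then conservation via Proposition \ref{P:2-2}) in full detail for the semi-discrete analogue, Theorem \ref{thm:diKP Ham}. Your added contribution is the direct verification of \eqref{KP-to ham} from $\sigma_3=2yK_3+xK_2+2\partial^{-1}u_y-u_x$, which the paper leaves to the cited references; I checked the computation, including the cancellation of the anomalous terms $3\partial_y+\partial_y-4\partial_y=0$, and it is sound.
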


The key identity that leads to the above theorem is
\be
\sigma_3'\partial+\partial\sigma_3'^*=0.
\label{KP-to ham}
\ee
That means $\partial$ is a Noether operator of the non-isospectral equation $u_{t_3}=\sigma_3$.

\begin{thm}\label{T:3-3}
(1). Each equation
\be
u_{t_s}=\sigma_s
\ee
in the non-isospectral KP hierarchy \eqref{nKP:hie}
has a Hamiltonian structure
\be
u_{t_s}=\sigma_s=\partial \frac{\delta J_s}{\delta u},
\label{nKP:Ham}
\ee
where the gradient function
\begin{align}
\omega_s&=\frac{\delta J_s}{\delta u}=\partial^{-1}\sigma_s\nonumber\\
&=
\left\{\begin{array}{ll}
yu,~~& s=1,\\
\frac{1}{s-4}\,{\rm grad}(\omega_{s-1},\sigma_3),~& s>1,s\neq 4,\\
2y\gamma_4+x\gamma_3+\frac{3}{4}u_x+\frac{3}{2}\partial^{-1}u^2+\frac{3}{4}\partial^{-3}u_{yy}+u\partial^{-1}u-\frac{3}{2}\partial^{-1}u_y,~~& s=4.\\
\end{array}\right.
\end{align}
The Hamiltonian is
\be
J_s=\left\{\begin{array}{ll}
\frac{1}{2}(y u,u),~~& s=1,\\
\frac{1}{s-4}(\omega_{s-1},\sigma_3),~& s>1, s\neq 4,\\
\int_0^1(\omega_4(\lambda u),u)d\lambda,~~& s=4.
\end{array}\right.
\ee

\noindent
(2). Hamiltonians $\{H_l\}$ and $\{J_r\}$ generate a Lie algebra  w.r.t. Poisson bracket $\{\cdot, \cdot\}$ with basic structure
\bse
\label{iKP:alg cc}
\begin{align}
&\{ H_l, H_r\}=0,\label{iKP:cc-1}\\
&\{ H_l, J_r\}=l \, H_{l+r-2},\label{iKP:cc-2}\\
&\{ J_l, J_r\}=(l-r) J_{l+r-2},\label{iKP:cc-3}
\end{align}
\ese
where $l,r,s\geq 1$ and we set $H_0=J_{0}=0$. Here the  Poisson bracket is defined as
\[\{H,J\}=\Bigl(\frac{\delta H}{\delta u},\partial\frac{\delta J}{\delta u}\Bigr).\]

\noindent
(3). Each equation
\be
u_{t_{s}}=K_{s}
\ee
in the isospectral KP hierarchy \eqref{iKP:hie}
has two sets of conserved quantities
\begin{align}\label{iKP:cq}
\{H_{l}\},~~
\{I^{s}_{r}=st_s H_{s+r-2}+ J_r\}
\end{align}
and they generate a Lie algebra with basic structure
\bse\label{iKP:alg cq}
\begin{align}
& \{H_{l},H_{r}\}= 0,\label{KP:cq-1}\\
& \{H_{l},I_{r}^{s}\}= l\,H_{l+r-2},\label{KP:cq-2}\\
& \{I^{s}_{l},I_{r}^{s}\}= (l-r)I^{s}_{l+r-2},\label{KP:cq-3}
\end{align}
\ese
where $l,r,s\geq 1$ and we set $H_0=I^{s}_{0}=0$.
\end{thm}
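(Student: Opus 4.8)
The plan is to make part (1) carry the whole theorem and to obtain parts (2) and (3) as algebraic consequences. Everything rests on the fact, available from \eqref{KP-to ham}, that $\partial$ is implectic: it is skew-symmetric, and being constant-coefficient it satisfies the Jacobi identity \eqref{Jacobi} trivially since $\partial$ has no $u$-dependence. For any two gradients $f=\mathrm{grad}\,F$ and $g=\mathrm{grad}\,G$, so that $f'=f'^*$ and $g'=g'^*$, a one-line integration by parts using $\partial^*=-\partial$ gives the homomorphism identity
\be
\llbracket \partial f,\partial g\rrbracket=\partial\,\mathrm{grad}(f,\partial g),
\label{plan:hom}
\ee
where the would-be Jacobi contribution drops out precisely because $\partial$ has constant coefficients. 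This identity is the bridge that turns the flow algebra \eqref{KP:alg} into statements about gradients, Hamiltonians and Poisson brackets.

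For part (1) the external input is again \eqref{KP-to ham}. Writing $\sigma_3=\partial\omega_3$ and using $(\partial\omega_3)'=\partial\omega_3'$ together with $\partial^*=-\partial$, the Noether identity $\sigma_3'\partial+\partial\sigma_3'^*=0$ is seen to be equivalent to $\omega_3'=\omega_3'^*$, i.e. $\omega_3=\partial^{-1}\sigma_3$ is a gradient. I then prove by induction on $s$ that $\omega_s=\partial^{-1}\sigma_s$ is a gradient. The base cases are $s=1$, where $\omega_1=yu=\mathrm{grad}\,\tfrac12(yu,u)$ (up to normalization of $\sigma_1$), and the exceptional $s=4$, where the coefficient $s-4$ in the recursion vanishes and the explicitly written $\omega_4$ must be shown self-adjoint directly via Proposition \ref{P:2-1}, its potential $J_4$ then being reconstructed by \eqref{def:pot}. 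For the inductive step, assuming $\omega_{s-1}$ is a gradient, I apply \eqref{plan:hom} with $f=\omega_{s-1}$ and $g=\omega_3$; since $\partial g=\sigma_3$ and $\llbracket\sigma_{s-1},\sigma_3\rrbracket=(s-4)\sigma_s$ by \eqref{KP:alg-3}, I obtain $(s-4)\sigma_s=\partial\,\mathrm{grad}(\omega_{s-1},\sigma_3)$, whence $\omega_s=\tfrac{1}{s-4}\mathrm{grad}(\omega_{s-1},\sigma_3)=\partial^{-1}\sigma_s$ is a gradient. This closes the induction for $s\neq4$ and at once produces both the recursion and the Hamiltonian form $\sigma_s=\partial\tfrac{\delta J_s}{\delta u}$.

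Part (2) follows by applying $\partial^{-1}\mathrm{grad}$ to the three relations in \eqref{KP:alg} after writing $K_l=\partial\gamma_l$ and $\sigma_r=\partial\omega_r$. Using \eqref{plan:hom} and the definition $\{H,J\}=(\tfrac{\delta H}{\delta u},\partial\tfrac{\delta J}{\delta u})$, for instance $\partial\,\mathrm{grad}\{H_l,J_r\}=\llbracket K_l,\sigma_r\rrbracket=l\,K_{l+r-2}=\partial\,\mathrm{grad}(l\,H_{l+r-2})$, so the two functionals differ by at most a $u$-independent constant; evaluating at $u=0$, where all gradients and hence all brackets vanish by \eqref{KP:asym-a} and \eqref{KP:asym-b}, forces the constant to be zero, giving $\{H_l,J_r\}=l\,H_{l+r-2}$. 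The same step on \eqref{KP:alg-1} and \eqref{KP:alg-3} gives $\{H_l,H_r\}=0$ and $\{J_l,J_r\}=(l-r)J_{l+r-2}$. Part (3) then splits into conservation and the algebra. Conservation of $H_l$ along $u_{t_s}=K_s$ is immediate: with no explicit $t_s$ in $H_l$, its derivative is $(\gamma_l,K_s)=\{H_l,H_s\}=0$. For $I^s_r=st_sH_{s+r-2}+J_r$ the explicit $t_s$-derivative contributes $sH_{s+r-2}$, while the flow derivative contributes $st_s\{H_{s+r-2},H_s\}+\{J_r,H_s\}=-sH_{s+r-2}$ by part (2), and the two cancel. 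Finally, substituting $I^s_r=st_sH_{s+r-2}+J_r$ into the Poisson bracket and expanding with the constants of part (2) yields \eqref{iKP:alg cq}: the mixed terms combine as $st_s[(s+l-2)-(s+r-2)]H_{s+l+r-4}=st_s(l-r)H_{s+l+r-4}$, which recombines with $(l-r)J_{l+r-2}$ into $(l-r)I^s_{l+r-2}$.

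The one genuinely nontrivial computation, and the main obstacle, is the exceptional index $s=4$ in part (1): because the recursion degenerates there, $\omega_4$ cannot be manufactured by \eqref{plan:hom} and must be verified by hand to be self-adjoint and to satisfy $\partial\omega_4=\sigma_4$. Carrying this out in the presence of the explicit $x,y$-dependence of $\omega_4$ is the step that anchors the whole induction, and it is the place where care with the inner product \eqref{def:inn prod} and with $\partial^{-1}$ of explicitly $x,y$-dependent functions is essential.
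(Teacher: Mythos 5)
Your proposal is correct, and it is worth noting that the paper never actually proves Theorem \ref{T:3-3}: the result is quoted from the literature \cite{OF-PLA-1982,Case-PNAS-1984,Case-JMP-1985,CLB-JPA-1988}, with only the remark that the key identity is \eqref{KP-to ham}, and the method is exhibited in full only on the discrete analogues (Theorems \ref{thm:diKP Ham}, \ref{thm:dnKP Ham}, \ref{T:4-4}, Corollary \ref{C:4-3}, Appendix \ref{A:B}). Measured against those proofs, your part (1) is the same argument in different packaging: your identity $\llbracket\partial f,\partial g\rrbracket=\partial\,\mathrm{grad}(f,\partial g)$ is precisely the paper's inline manipulation $\partial_{\b x}^{-1}\llbracket\b\sigma_s,\b\sigma_2\rrbracket=\mathrm{grad}(\b\omega_s,\b\sigma_2)$ assembled from Lemma \ref{lem:grad} and the Noether identity \eqref{sig-par}, except that the paper's version needs only the Noether identity for the master symmetry plus the induction hypothesis, while your symmetric form also needs $\omega_3$ to be a gradient --- which, as you correctly observe, is equivalent to \eqref{KP-to ham}, so nothing is lost. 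Where you genuinely depart is the logical order of (2) and (3): the paper proves conservation first (the Noether operator $\partial$ maps the symmetries $K_l$, $\tau^s_r$ to conserved covariants, and Proposition \ref{P:2-2} lifts these to the conserved quantities $H_l$, $I^s_r$) and only afterwards obtains $\{H_l,H_r\}=0$ as $dH_l/dt_r=0$; you instead derive the whole Poisson algebra purely from the flow algebra \eqref{KP:alg} and read conservation off involutivity, which is self-contained and arguably cleaner, and your disposal of integration constants by evaluating at $u=0$ replaces the paper's Appendix \ref{A:B} argument that the constant, a priori $t_s$-dependent, must vanish because both sides are conserved. Two caveats, neither fatal: the direct verification at the exceptional index --- that the displayed $\omega_4$ satisfies $\omega_4'=\omega_4'^*$ and $\partial\omega_4=\sigma_4$ --- is announced but not executed; the paper likewise dismisses the corresponding discrete check (at $s=3$) with the words ``by direct verification'', so you match its level of rigor, but your induction for all $s\geq 5$ does hang on that computation. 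Finally, you rightly flag the normalization wrinkle: since $\sigma_1=2yK_1$, one has $\partial^{-1}\sigma_1=2yu$ rather than the $yu$ printed in the theorem --- a defect of the statement itself, not of your argument.
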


\section{The D$\Delta$KP system}
\label{S:4}
In this section, we will construct the  D$\Delta$KP hierarchy and discuss their recursive structure,
symmetries, Hamiltonian structures and conserved quantities.

\subsection{The D$\Delta$KP hierarchy}\label{S:4-1}

Let us start from the following linear triad
\bse
\begin{align}
& \b L\phi=\eta\phi,~~\eta_{\b t_m}=0,\\
& \phi_{\b x}=\b A_1\phi,~~\b A_{1}=h^{-1}\Delta+\b u_{0},\\
& \phi_{\b t_{m}}=\b A_{m}\phi,~~ (m=1,2,\cdots),
\end{align}
\ese
and the compatibility condition reads
\bse\label{diKP:com}
\begin{align}
&\b L_{\b x}=[\b A_{1},\b L],\label{diKP:com-1} \\
&\b L_{\b t_{m}}=[\b A_{m},\b L],\label{diKP:com-2} \\
&\b A_{1,\b t_m}-\b A_{m,\b x}+[\b A_1,\b A_m]=0,\label{diKP:com-3}
\end{align}
\ese
for $m=1,2,\cdots$,
where $\b L$ is the pseudo-difference operator \eqref{dKP:L}, $\b A_{m}=(\b L^{m})_{+}$ with the form
\be\label{diKP:A bc}
\b A_{m}=h^{-m}\Delta^{m}+\sum^{m}_{j=1}h^{-(m-j)}\b a_j\Delta^{m-j},~~~\b A_{s}|_{\b{\mathbf{u}}=\mathbf{0}}=h^{-m}\Delta^{m}.
\ee
Here $\b{\mathbf{u}}=(\b u_0, \b u_1, \cdots)$.
The first three of $\b A_{m}$ are
\bse\label{diKP:A}
\begin{align}
\b A_{1}&=h^{-1}\Delta+\b u_{0},\label{diKP:A-1}\\
\b A_{2}&=h^{-2}\Delta^{2}+h^{-1}({\Delta}\b u_{0}+2\b u_{0})\Delta+(h^{-1}{\Delta}\b u_{0}+\b u_{0}^{2}+{\Delta}\b u_{1}+2\b u_{1}),\label{diKP:A-2} \\
\b A_{3}&=h^{-3}\Delta^{3}+h^{-2}\b a_{1}\Delta^{2}+h^{-1}\b a_{2}\Delta+\b a_{3},\label{diKP:A-3}
\end{align}
\ese
where
\bse
\begin{align*}
\b a_{1}&=\Delta^{2}\b u_{0}+3{\Delta}\b u_{0}+3\b u_{0},\\
\b a_{2}&=2h^{-1}\Delta^{2}\b u_{0}+3h^{-1}{\Delta}\b u_{0}+3\b u_{0}^{2}+\b u_{0}{\Delta}\b u_{0}
+{\Delta}\b u_{0}^{2}+3\b u_{1}+3{\Delta}\b u_{1}+\Delta^{2}\b u_{1},\\
\b a_{3}&=h^{-2}\Delta^{2}\b u_{0}+\b u_{0}^{3}+h^{-1}\b u_{0}{\Delta}\b u_{0}+h^{-1}{\Delta}\b u_{0}^2
+5\b u_{0}\b u_{1}+({\Delta}\b u_{0}){\Delta}\b u_{1}+3\b u_{0}{\Delta}\b u_{1} \nonumber\\
&~~~+\b u_{1}{\Delta}\b u_{0}+\b u_{1}E^{-1}\b u_{0}+2h^{-1}\Delta^{2}\b u_{1}+3h^{-1}{\Delta}\b u_{1}
+3\b u_{2}+3{\Delta}\b u_{2}+\Delta^{2}\b u_{2}.
\end{align*}
\ese
Equation \eqref{diKP:com-1} yields
\bse
\begin{align}
\b u_{0,\b x}&={\Delta}\b u_{1},\\
\b u_{1,\b x}&=h^{-1}{\Delta}\b u_{1}+{\Delta}\b u_{2}+\b u_{0}\b u_{1}-\b u_{1}E^{-1}\b u_{0},\\
&~~\cdots\cdots,\nonumber
\end{align}
\ese
which will be used to express $\b u_j (j>0)$ by $\b u_0$,
i.e.
\bse\label{dKP:uj}
\begin{align}
\b u_{1}&=\Delta^{-1}\frac{\partial{\b u_{0}}}{\partial\b x},\label{dKP:u1}\\
\b u_{2}&=\Delta^{-2}\frac{\partial^{2}{\b u_{0}}}{\partial{\b x^{2}}}-h^{-1}\Delta^{-1}\frac{\partial{\b u_{0}}}{\partial{\b x}}
-\Delta^{-1}\biggl(\b u_0\Delta^{-1}\frac{\partial{\b u_0}}{\partial\b x}\biggr)
+\Delta^{-1}\biggl(\biggl(\Delta^{-1}\frac{\partial{\b u_0}}{\partial\b x}\biggr)E^{-1}\b u_0\biggr),\label{dKP:u2}\\
&~~\cdots\cdots.\nonumber
\end{align}
\ese
Equation \eqref{diKP:com-2} actually plays a role to determine the operator $\b A_m$.
In fact, if starting from the assumption \eqref{diKP:A bc} with unknown $\{\b a_j\}$, then \eqref{diKP:com-2}
uniquely determines $\b A_m=(\b L^{m})_{+}$.
With $\{\b A_m\}$ ready, equation \eqref{diKP:com-3} provides the isospectral D$\Delta$KP hierarchy (with $\b u_0=\b u$)
\be\label{diKP:hie}
\b u_{\b t_m}=\b K_m=\b A_{m,\b x}-[\b A_1,\b A_m],~~m=1,2, \cdots.
\ee
The first three equations are
\bse\label{diKP:flow}
\begin{align}
\b u_{\b t_{1}}=\b K_{1}&=\b u_{\b x},\label{diKP:flow-1}\\
\b u_{\b t_{2}}=\b K_{2}&=(1+2\Delta^{-1})\b u_{\b x\b x}-2h^{-1}\b u_{\b x}+2\b u \b u_{\b x},\label{diKP:flow-2}\\
\b u_{\b t_{3}}=\b K_{3}&=(3\Delta^{-2}+3\Delta^{-1}+1)\b u_{\b x\b x\b x}+3\Delta^{-1}\b u_{\b x}^2+3\b u\Delta^{-1}\b u_{\b x\b x} \nonumber \\
&~~~-6 h^{-1}\Delta^{-1}\b u_{\b x\b x} +3h^{-2}\b u_{\b x}+3\b u_{\b x}\Delta^{-1}\b u_{\b x}+3\Delta^{-1}(\b u\b u_{\b x\b x}) \nonumber \\
&~~~+3\b u\b u_{\b x\b x}-3h^{-1}\b u_{\b x\b x}+3\b u_{\b x}^2+3\b u^2\b u_{\b x}-6h^{-1}\b u\b u_{\b x},\label{diKP:flow-3}
\end{align}
\ese
in which \eqref{diKP:flow-2}, i.e. \eqref{DDKP}, is first derived in \cite{DJM-JPSJ-1982-II}
from a discrete Sato's approach and is referred to as the D$\Delta$KP equation.

\subsection{The non-isospectral D$\Delta$KP hierarchy}\label{S:4.2}

For the
non-isospectral case, we set
\be\label{dnKP:eta}
\eta_{\b t_{m}}=h\eta^{m}+\eta^{m-1},
\ee
and assume that
\be
\b B_{m}=\sum^{m}_{j=0}h^{-(m-j)}\b b_{j}\Delta^{m-j}
\ee
with unknowns $\{\b b_{j}\}$. Consider the Lax triad
\bse\label{dnKP:triad}
\begin{align}
& \b L\phi =\eta\phi, \label{dnKP:triad-1}\\
& \phi_{\b x} =\b A_1\phi, \label{dnKP:triad-2}\\
& \phi_{\b t_{m}} =\b B_{m}\phi,~~ (m=1,2,\cdots), \label{dnKP:triad-3}
\end{align}
\ese
together with \eqref{dnKP:eta}.
The compatibility reads
\bse\label{dnKP:com}
\begin{align}
& \b L_{\b x}=[\b A_{1},\b L],\label{dnKP:com-1}\\
& \b L_{\b t_{m}}=[\b B_{m},\b L]+h\b L^{m}+\b L^{m-1},\label{dnKP:com-2}\\
& \b A_{1,\b t_m}-\b B_{m,\b x}+[\b A_1,\b B_m]=0.\label{dnKP:com-3}
\end{align}
\ese
Looking at \eqref{dnKP:com-2} and \eqref{dnKP:com-3} asymptotically, i.e. \eqref{dnKP:com-2}$|_{\mathbf{\b u}=\mathbf{0}}$
and \eqref{dnKP:com-3}$|_{\mathbf{\b u}=\mathbf{0}}$, from them  one can find
\begin{align*}
& (\Delta\b b_0)|_{\mathbf{\b u}=\mathbf{0}}=0,~~(\Delta\b b_1)|_{\mathbf{\b u}=\mathbf{0}}=h,~~(\Delta\b b_j)|_{\mathbf{\b u}=\mathbf{0}}=0,
~(j=2,3,\cdots,m);\\
&(\partial_{\b x}\b b_0)|_{\mathbf{\b u}=\mathbf{0}}=h,~~(\partial_{\b x}\b b_1)|_{\mathbf{\b u}=\mathbf{0}}=1,
~~(\partial_{\b x}\b b_j)|_{\mathbf{\b u}=\mathbf{0}}=0,~(j=2,3,\cdots,m).
\end{align*}
This gives the necessary asymptotic condition for $\b B_m$:\footnote{ In \cite{SZZC-MPLB-2010}
the asymptotic condition for $B_m$ is
$B_{m}|_{\b{\mathbf{u}}=\mathbf{0}}=h^{-(m-1)}\b x\Delta^{m}+h^{-(m-2)}n\Delta^{m-1}$.
We note that this is not sufficient due to missing \eqref{dnKP:triad-2} in the Lax triad \eqref{dnKP:triad}.
}
\be
\b B_{m}|_{\b{\mathbf{u}}=\mathbf{0}}=h^{-(m-1)}\b x\Delta^{m}+h^{-(m-1)}(\b x +h n)\Delta^{m-1}.
\label{nKP:B bc}
\ee
Then, with this condition, $\b B_m$ can uniquely be determined by \eqref{dnKP:com-2} and here we give the first three of them:
\bse
\begin{align}
\b B_{1}&=h\b x\b A_{1}+\b x+hn,\\
\b B_{2}&=h\b x\b A_{2}+(\b x+hn)\b A_{1}+h\Delta^{-1}\b u_{0},\\
\b B_{3}&=h\b x\b A_{3}+(\b x+hn)\b A_{2}+\Delta^{-1}\b u_{0}\Delta+h\b u_{0}\Delta^{-1}\b u_{0}\nonumber  \\
&~~~+2h\Delta^{-1}\b u_{1}-\Delta^{-1}\b u_{0}+h\Delta^{-1}\b u_{0}^{2},
\end{align}
\ese
where $\b A_j=(\b L^j)_+$.

Now, \eqref{dnKP:com-1} provides transform relation as same as
\eqref{dKP:uj}, and \eqref{dnKP:com-3} provides the non-isospectral D$\Delta$KP hierarchy (with $\b u_{0}=\b u$)
\be\label{dnKP:hie}
\b u_{t_m}=\b\sigma_m=\b B_{m,\b x}-[\b A_1,\b B_m],
\ee
i.e.
\bse
\label{dnKP:flow}
\begin{align}
\b u_{\b t_{1}}=\b \sigma_{1}&=h\b x\b K_{1}+h\b u,
\label{dnKP:flow-1}\\
\b u_{\b t_{2}}=\b \sigma_{2}&=h\b x\b K_{2}+(\b x+hn)\b K_1+h\b u_{\b x}+3h\Delta^{-1}\b u_{\b x}+h\b u^{2}-\b u,
\label{dnKP:flow-2}\\
\b u_{\b t_{3}}=\b \sigma_{3}&=h\b x\b K_{3}+(\b x+hn)\b K_2+5h\Delta^{-2}\b u_{\b x\b x}
-6\Delta^{-1}\b u_{\b x}+5h\Delta^{-1}(\b u\b u_{\b x}) \nonumber \\
&~~~+h\b u_{\b x}\Delta^{-1}\b u+4h\b u\Delta^{-1}\b u_{\b x}-2\b u^2+h^{-1}\b u+h\b u^3+3h\b u\b u_{\b x} \nonumber \\
&~~~+3h\Delta^{-1}\b u_{\b x\b x}+h\b u_{\b x\b x}-2\b u_{\b x},
\label{dnKP:flow-3}\\
&~~\cdots \cdots,\nonumber
\end{align}
\ese
where $\{\b K_{j}\}$ are the isospectral D$\Delta$KP flows defined in \eqref{diKP:hie}.

$\{\b K_m\}$ and $\{\b{\sigma}_m\}$ are respectively called the isospectral and non-isospectral D$\Delta$KP flows.
For them we have
\begin{prop}\label{P:4-1}
The isospectral and non-isospectral D$\Delta$KP flows
$\{\b K_s\}$ and $\{\b{\sigma}_s\}$
can be expressed through the following zero curvature representations together with asymptotic conditions,
\bse\label{dKP:zcr}
\begin{align}
&\b K_s =\b A_{s,\b x}-[\b A_1, \b A_s],~~~\b K_s|_{\b u=0}=0,~~ \b A_{s}|_{\mathbf{\b u}=\mathbf{0}}=h^{-s}\Delta^{s},\label{diKP:zcr}\\
& \b \sigma_s =\b B_{s,\b x}-[\b A_1, \b B_s],~~~\b \sigma_s|_{\b u=0}=0,~~\b
B_{s}|_{{\mathbf{\b u}}=\mathbf{0}}=h^{-(s-1)}\b x\Delta^{s}+h^{-(s-1)}(\b x +h n)\Delta^{s-1},\label{dnKP:zcr}
\end{align}
\ese
for $s=1,2,\cdots$.
\end{prop}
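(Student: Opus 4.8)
The plan is to recognise that the two zero curvature representations are simply the third compatibility equations \eqref{diKP:com-3} and \eqref{dnKP:com-3} written out, so that the only genuine labour lies in checking the asymptotic data. First I would note that $\b A_1=h^{-1}\Delta+\b u_0$ has $\b t_m$-derivative $\b A_{1,\b t_m}=\b u_{0,\b t_m}$, which is a pure multiplication (zeroth-order) operator. Feeding this into \eqref{diKP:com-3} identifies the isospectral flow $\b u_{\b t_m}=\b K_m$ with $\b A_{m,\b x}-[\b A_1,\b A_m]$, and feeding it into \eqref{dnKP:com-3} identifies the non-isospectral flow $\b u_{\b t_m}=\b\sigma_m$ with $\b B_{m,\b x}-[\b A_1,\b B_m]$; these are precisely \eqref{diKP:zcr} and \eqref{dnKP:zcr}. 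Because $\b A_{1,\b t_m}$ is already zeroth-order, the operator identities automatically force the right-hand sides to collapse to multiplication operators, so no separate order count is required.

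Next I would settle the isospectral asymptotics. Setting $\b u=0$ forces every coefficient $\b u_j=0$ through the elimination formulae \eqref{dKP:uj}, so that $\b A_s$ degenerates to the constant-coefficient operator $h^{-s}\Delta^s$ recorded in \eqref{diKP:A bc}. Then $\b A_{s,\b x}=0$ and $[\b A_1,\b A_s]=[h^{-1}\Delta,h^{-s}\Delta^s]=0$ since powers of $\Delta$ commute, and therefore $\b K_s|_{\b u=0}=0$.

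The main obstacle is the non-isospectral asymptotic $\b\sigma_s|_{\b u=0}=0$, which is not automatic because $\b B_s|_{\b{\mathbf{u}}=\mathbf{0}}=h^{-(s-1)}\b x\Delta^s+h^{-(s-1)}(\b x+hn)\Delta^{s-1}$ carries explicit $\b x$ and $n$ dependence. Differentiating the explicit $\b x$ gives $\b B_{s,\b x}|_{\b u=0}=h^{-(s-1)}(\Delta^s+\Delta^{s-1})$, and I must show the commutator term reproduces this exactly. Since $\b x$ is independent of $n$ it commutes with $\Delta$, so the $\b x\Delta^s$ and $\b x\Delta^{s-1}$ pieces drop out and only $hn\Delta^{s-1}$ contributes. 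Using the discrete Leibniz rule \eqref{D-1} in the form $\Delta\circ n=(n+1)\Delta+1$ one finds $[\Delta,n\Delta^{s-1}]=\Delta^s+\Delta^{s-1}$, whence $[\b A_1,\b B_s]|_{\b u=0}=h^{-(s-1)}(\Delta^s+\Delta^{s-1})$. This cancels $\b B_{s,\b x}|_{\b u=0}$ term by term and yields $\b\sigma_s|_{\b u=0}=0$, exactly mirroring the continuous cancellation between $B_{s,y}$ and $[A_2,B_s]$ behind \eqref{KP:asym-b}. The one point demanding care is to keep the non-commutativity of $n$ with $\Delta$ cleanly separated from the commutativity of $\b x$ with $\Delta$.
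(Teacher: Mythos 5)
Your proposal is correct and follows essentially the same route as the paper: Proposition \ref{P:4-1} is a summary of the constructions in Secs.\ref{S:4-1}--\ref{S:4.2}, where the zero curvature representations are exactly the compatibility equations \eqref{diKP:com-3} and \eqref{dnKP:com-3}, and your commutator computation $[\Delta,n\Delta^{s-1}]=\Delta^{s}+\Delta^{s-1}$ is precisely the cancellation by which the paper derives the asymptotic condition \eqref{nKP:B bc} from \eqref{dnKP:com-2}$|_{\b{\mathbf{u}}=\mathbf{0}}$ and \eqref{dnKP:com-3}$|_{\b{\mathbf{u}}=\mathbf{0}}$. Your explicit verification that $\b\sigma_s|_{\b u=0}=0$, with the $\b x$-dependent pieces commuting past $\Delta$ and only the $hn\Delta^{s-1}$ term contributing, is a correct fleshing-out of what the paper leaves implicit.
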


Similar to Proposition \ref{P:3-2}, we have
\begin{prop}\label{P:4-2}
The isospectral D$\Delta$KP flows $\{\b K_s\}$ defined by \eqref{diKP:zcr} can be expressed in terms of the pseudo-difference operator $\b L$ as
\be
\b K_s=\Delta \,\underset{\Delta}{\mathrm{Res\,}}\b L^s.
\label{bK-Res}
\ee
\end{prop}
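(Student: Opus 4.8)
The plan is to follow the proof of Proposition \ref{P:3-2} verbatim in structure, with $h^{-1}\Delta$ and $\b A_1$ playing the roles that $\partial$ and $A_2$ play in the continuous setting, and the discrete Leibniz rule \eqref{dKP:Leibnitz} replacing the ordinary one. First I would start from the zero curvature representation $\b K_s=\b A_{s,\b x}-[\b A_1,\b A_s]$ of \eqref{diKP:zcr}, keeping in mind that the right-hand side is actually a multiplication operator equal to its own $\Delta^0$ part, because $\b A_1=h^{-1}\Delta+\b u_0$ gives $\b A_{1,\b t_m}=\b u_{0,\b t_m}$, a scalar. Substituting $\b A_s=(\b L^s)_+=\b L^s-(\b L^s)_-$, the pieces built from $\b L^s$ assemble into $(\b L^s)_{\b x}-[\b A_1,\b L^s]$, which vanishes: the compatibility \eqref{diKP:com-1} reads $\b L_{\b x}=[\b A_1,\b L]$, and telescoping $(\b L^s)_{\b x}=\sum_{i}\b L^i\b L_{\b x}\b L^{s-1-i}$ turns this into $[\b A_1,\b L^s]$. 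What survives is
\be
\b K_s=\bigl[\,[\b A_1,(\b L^s)_-]-((\b L^s)_-)_{\b x}\,\bigr]_0,
\ee
where $[\,\cdot\,]_0$ extracts the $\Delta^0$ coefficient.

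Next I would isolate the $\Delta^0$ contribution on the right. Writing $(\b L^s)_-=\sum_{j\ge 1}\b c_j\Delta^{-j}$ (so that the sought residue is $\b c_1$), its $\b x$-derivative again carries only powers $\Delta^{-j}$, $j\ge 1$, and drops out of $[\,\cdot\,]_0$. Splitting $\b A_1=h^{-1}\Delta+\b u_0$, the multiplication part contributes $[\b u_0,(\b L^s)_-]$, which has no $\Delta^0$ term: multiplying by $\b u_0$ does not raise the order, and commuting $\b u_0$ through each $\Delta^{-j}$ via \eqref{dKP:Leibnitz} only generates powers $\Delta^{-j-i}$ with $i\ge 0$. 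Hence
\be
\b K_s=h^{-1}\bigl[\Delta,(\b L^s)_-\bigr]_0.
\ee
A direct computation with $\Delta\,\b c_j=(E\b c_j)\Delta+(\Delta\b c_j)$ gives $[\Delta,\b c_j\Delta^{-j}]=(\Delta\b c_j)\Delta^{1-j}+(\Delta\b c_j)\Delta^{-j}$, whose $\Delta^0$ part appears only for $j=1$ and equals $\Delta\b c_1$. Reading off the residue in the natural $h^{-1}\Delta$ grading of \eqref{dKP:L}, i.e. $\underset{\Delta}{\mathrm{Res\,}}\b L^s=h^{-1}\b c_1$, we obtain $\b K_s=\Delta\,\underset{\Delta}{\mathrm{Res\,}}\b L^s$, which one can check directly against $\b K_1=\b u_{\b x}$.

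The conceptual skeleton is identical to the continuous case, so the main obstacle is purely computational bookkeeping with \eqref{dKP:Leibnitz}: unlike the ordinary Leibniz rule, the discrete one involves argument shifts $g(n+s-i)$ and infinite sums, so one must check carefully that (i) $[\b u_0,(\b L^s)_-]$ genuinely has vanishing $\Delta^0$ part and (ii) within $[\Delta,(\b L^s)_-]$ only the $\b c_1\Delta^{-1}$ term of $(\b L^s)_-$ feeds the $\Delta^0$ coefficient. Once these two points are verified, the residue identification and the continuum-limit normalization of $\underset{\Delta}{\mathrm{Res\,}}$ finish the proof.
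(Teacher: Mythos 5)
Your proposal is correct and is precisely the discrete transcription of the paper's proof of Proposition \ref{P:3-2}, which is exactly the argument the paper has in mind when it states that Proposition \ref{P:4-2} is ``similar'' and skips the proof: substitute $\b A_s=\b L^s-(\b L^s)_-$ into the zero curvature representation, kill $(\b L^s)_{\b x}-[\b A_1,\b L^s]$ via the compatibility \eqref{diKP:com-1}, and read off the $\Delta^0$ coefficient. Your two bookkeeping checks --- that $[\b u_0,(\b L^s)_-]$ and $((\b L^s)_-)_{\b x}$ contribute nothing at order $\Delta^0$, and that the residue must be normalized via the grading $\b L=h^{-1}\Delta+\b u_0+\sum_{j\geq 1}\b u_j(h^{-1}\Delta)^{-j}$, so that $\underset{\Delta}{\mathrm{Res}}\,\b L^s=h^{-1}\b c_1$ and the factor $h^{-1}$ from $\b A_1=h^{-1}\Delta+\b u_0$ is absorbed (consistent with $\b K_1=\b u_{\b x}$) --- are exactly the details in which the discrete case differs from the continuous one, and you handle both correctly.
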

Proof is skipped.

\subsection{Algebra of flows, recursive structure and symmetries }
\label{S:4.3}

The proof for the results of this subsection is similar to the continuous  case (see Sec.\ref{S:3.4}).
We will just list these results without giving proofs.

\begin{lem}\label{lem:dKP-1}
For the function $\b X=\b X(\b u)\in \b{\mathcal{F}}$ and difference operator
\begin{align*}
\b N=\b a_0\Delta^m+\b a_1\Delta^{m-1}+\cdots+\b a_{m-1}\Delta+\b a_m,~~\b N|_{\b u=0}=0
\end{align*}
living on $\b{\mathcal{F}}$,
the equation
\be\label{dKP:test eq}
\b X-\b N_{\b x}+[\b A_1, \b N]=0
\ee
only admits zero solution $\b X=0,~\b N=0$.
\end{lem}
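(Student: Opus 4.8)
The plan is to mimic the proof of Lemma~\ref{lem:KP-1} by a degree-by-degree comparison of the coefficients of powers of the difference operator $\Delta$, exploiting that $\b A_1=h^{-1}\Delta+\b u_0$ is itself a first-order difference operator with an explicit leading symbol. First I would write both sides of \eqref{dKP:test eq} as finite combinations of $\Delta^{k}$ using the discrete Leibniz rule \eqref{dKP:Leibnitz}, so that the equation becomes a statement that each coefficient operator (a function of $n$, $\b x$ and the $\b u_j$) vanishes. The highest-order term on the left comes from the commutator $[\b A_1,\b N]$ and from $-\b N_{\b x}$, both of which have order at most $m$; I would check that the genuine top-order contribution isolates $\b a_0$.

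The key computation is to extract the coefficient of $\Delta^{m}$. Since $\b A_1=h^{-1}\Delta+\b u_0$, the commutator $[\b A_1,\b N]$ contributes a term $[h^{-1}\Delta,\,\b a_0\Delta^m]$; by \eqref{D-1} we have $\Delta(\b a_0 f)=(E\b a_0)\Delta f+(\Delta\b a_0)f$, so the leading discrepancy in this commutator produces a factor proportional to $h^{-1}\Delta\b a_0$ at order $\Delta^{m}$, while $-\b N_{\b x}$ contributes $-\b a_{0,\b x}\Delta^m$. Setting the total $\Delta^{m}$-coefficient to zero therefore gives a first-order linear difference-differential relation for $\b a_0$ that, combined with the asymptotic normalization $\b N|_{\b u=0}=0$ and the rapid-decrease assumption on $\b{\mathcal{F}}$, forces $\b a_0=0$. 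One then repeats the argument inductively: with $\b a_0=0$ the new leading order is $m-1$, the coefficient of $\Delta^{m-1}$ again yields a forced relation for $\b a_1$, and so on down to $\b a_m$. Once every $\b a_j$ vanishes we have $\b N=0$, and \eqref{dKP:test eq} immediately gives $\b X=0$.

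The main obstacle, and the one place where the discrete case is genuinely more delicate than the continuous Lemma~\ref{lem:KP-1}, is that the difference commutator does not simply peel off one order the way $[\partial^2,\partial^k]$ does. Here $\b A_1$ carries the extra multiplication term $\b u_0$ and the shift hidden in $\Delta=E-1$ acts nontrivially on coefficients, so $[\b A_1,\b N]$ mixes orders $m$ down through $0$ and the coefficient of each $\Delta^{m-j}$ is not a pure derivative but a combination of a $\b x$-derivative, a difference $\Delta\b a_j$, and lower-order data already known to vanish. I would therefore set up the induction carefully, showing at each stage that the as-yet-unknown coefficient $\b a_j$ enters the $\Delta^{m-j}$-coefficient only through an invertible first-order operator (schematically $h^{-1}\Delta-\partial_{\b x}$ acting on $\b a_j$), while all other contributions involve $\b a_0,\dots,\b a_{j-1}$ which have been shown to vanish. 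Injectivity of that operator on the rapidly-decreasing class $\b{\mathcal{S}}$, together with the boundary normalization at $\b u=0$, is exactly what guarantees $\b a_j=0$ and keeps the induction going.
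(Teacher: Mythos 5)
Your overall skeleton (expand \eqref{dKP:test eq} in powers of $\Delta$ via the discrete Leibniz rule \eqref{dKP:Leibnitz}, induct downward, and use the normalization $\b N|_{\b u=0}=0$ together with rapid decrease) is the right one and is what the paper intends, but your order bookkeeping is off by one, and this breaks the induction exactly where you set it up. For $\b A_1=h^{-1}\Delta+\b u_0$ the commutator does \emph{not} have order at most $m$: using $\Delta\,\b a_0=(E\b a_0)\Delta+(\Delta\b a_0)$ from \eqref{D-1} one computes
\[
[h^{-1}\Delta,\ \b a_0\Delta^m]=h^{-1}(\Delta\b a_0)\,\Delta^{m+1}+h^{-1}(\Delta\b a_0)\,\Delta^m,
\]
so the genuine top order of the left-hand side of \eqref{dKP:test eq} is $\Delta^{m+1}$, and its coefficient is $h^{-1}\Delta\b a_0$ \emph{alone}: $\b X$, $-\b N_{\b x}$ and $[\b u_0,\b N]$ all live at order $\le m$. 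Unlike the continuous situation $[\partial,a_0\partial^m]=a_{0,x}\partial^m$, the shift in the discrete Leibniz rule prevents the leading terms from cancelling; this is a simplification rather than a delicacy, since it hands you the pure difference equation $\Delta\b a_0=0$, whence $\b a_0$ is independent of $n$ and vanishes by the normalization and decay — no mixed difference-differential operator ever enters.

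The concrete failure in your version: you place the relation for $\b a_0$ at order $\Delta^m$ and read it as $(h^{-1}\Delta-\partial_{\b x})\b a_0=0$. But the coefficient of $\Delta^m$ in \eqref{dKP:test eq} is in fact
\[
h^{-1}\Delta\b a_1+h^{-1}\Delta\b a_0-\b a_{0,\b x}+\b a_0\bigl(\b u_0-E^m\b u_0\bigr),
\]
i.e.\ it contains the \emph{next} unknown $\b a_1$ through $h^{-1}\Delta\b a_1$. The same happens at every stage: the coefficient of $\Delta^{m-j}$ couples $\b a_{j+1}$ (via $h^{-1}\Delta\b a_{j+1}$) with $\b a_j$ (via $-\b a_{j,\b x}$ and the $\b u_0$-commutator terms), so under your indexing no single unknown is ever isolated, your claimed invertibility of $h^{-1}\Delta-\partial_{\b x}$ on $\b{\mathcal{S}}$ is beside the point, and the induction cannot close. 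The repair is the paper's intended argument (cf.\ the proofs of Lemma \ref{lem:KP-1} and Lemma \ref{lem:5-1}, whose difference analogue this is): compare coefficients at order $\Delta^{m+1-j}$; once $\b a_0=\cdots=\b a_{j-1}=0$, that coefficient is exactly $h^{-1}\Delta\b a_j$, giving $\Delta\b a_j=0$ and hence $\b a_j=0$ by $\b N|_{\b u=0}=0$ and decay; descending to $j=m$ yields $\b N=0$, and the order-zero remainder of \eqref{dKP:test eq} then forces $\b X=0$. The $\partial_{\b x}$-contributions always sit one order below the unknown they involve, and by the time they appear the corresponding coefficient has already been shown to vanish.
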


\begin{lem} \label{lem:dKP-2}
Suppose that
\bse
\begin{align}
{\langle}\b A_{l},\b A_{r}{\rangle}&=\b A_{l}'[\b K_{r}]-\b A_{r}'[\b K_{l}]+[\b A_{l},\b A_{r}],\\
{\langle}\bar A_{l},\b B_{r}{\rangle}&=\b A_{l}'[\b \sigma_{r}]-\b B_{r}'[\b K_{l}]+[\b A_{l},\b B_{r}],\\
{\langle}\b B_{l},\b B_{r}{\rangle}&=\b B_{l}'[\b \sigma_{r}]-\b B_{r}'[\b \sigma_{l}]+[\b B_{l},\b B_{r}].
\end{align}
\ese
Then we have
\bse
\begin{align}
 & \llbracket{\b K_{l},\b K_{r}}\rrbracket={\langle}\b A_{l},\b A_{r}\rangle_{\b x}-[\b A_1,{\langle}\b A_{l},\b A_{r}\rangle],\\
 & \llbracket{\b K_{l},\b \sigma_{r}}\rrbracket={\langle}\b A_{l},\b B_{r}\rangle_{\b x}-[\b A_1,{\langle}\b A_{l},\b B_{r}\rangle],\\
 & \llbracket{\b \sigma_{l},\b \sigma_{r}}\rrbracket={\langle}\b B_{l},\b B_{r}\rangle_{\b x}- [\b A_1,{\langle}\b B_{l},\b B_{r}\rangle],
\end{align}
\ese
and
\bse
\begin{align}
&{\langle}\b A_{l},\b A_{r}\rangle|_{\b u=0}=0,\\
&{\langle}\b A_{l},\b B_{r}\rangle|_{\b u=0}=h^{-(l+r-2)}\,l\,(\Delta^{l+r-1}+\Delta^{l+r-2}),\\
&{\langle}\b B_{l},\b B_{r}\rangle|_{\b u=0}=h^{-(l+r-3)}(l-r)\bigl(\b x \Delta^{l+r-1}+(2\b x+hn)\Delta^{l+r-2}+(\b x+hn)\Delta^{l+r-3}\bigr).
\end{align}
\ese
\end{lem}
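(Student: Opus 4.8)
The plan is to run the discrete analogue of the argument used for Lemma~\ref{lem:KP-2}, taking the zero curvature representations of Proposition~\ref{P:4-1} as the starting point. The only structural difference from the continuous case is the absence of the factor $\tfrac12$ in $\bar K_s=\bar A_{s,\bar x}-[\bar A_1,\bar A_s]$ and $\bar\sigma_s=\bar B_{s,\bar x}-[\bar A_1,\bar B_s]$, so every identity below carries the same shape with that factor removed. First I would record the elementary fact that, since $\bar A_1=h^{-1}\Delta+\bar u_0$ depends on $\bar u=\bar u_0$ only through its multiplication term, its G\^ateaux derivative is $\bar A_1'[g]=g$ for every $g\in\bar{\mathcal F}$. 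This is what lets the spectral operator drop out of the derivative computations.

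To obtain, say, the mixed relation $\llbracket \bar K_l,\bar\sigma_r\rrbracket=\langle\bar A_l,\bar B_r\rangle_{\bar x}-[\bar A_1,\langle\bar A_l,\bar B_r\rangle]$, I would differentiate the two zero curvature representations along the appropriate flows. Using $\bar A_1'[g]=g$ one gets
\[
\bar K_l'[\bar\sigma_r]=(\bar A_l'[\bar\sigma_r])_{\bar x}-[\bar\sigma_r,\bar A_l]-[\bar A_1,\bar A_l'[\bar\sigma_r]],
\]
and the symmetric formula for $\bar\sigma_r'[\bar K_l]$ with $\bar A_l,\bar\sigma_r$ replaced by $\bar B_r,\bar K_l$. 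Substituting $\bar\sigma_r=\bar B_{r,\bar x}-[\bar A_1,\bar B_r]$ and $\bar K_l=\bar A_{l,\bar x}-[\bar A_1,\bar A_l]$ into the single commutators, subtracting, and collecting the $\bar x$-derivatives into $\langle\bar A_l,\bar B_r\rangle_{\bar x}$ reduces the claim to the operator identity
\[
[[\bar A_1,\bar B_r],\bar A_l]-[[\bar A_1,\bar A_l],\bar B_r]=-[\bar A_1,[\bar A_l,\bar B_r]],
\]
which is exactly the Jacobi identity for the commutator applied to the triple $(\bar A_1,\bar A_l,\bar B_r)$. The $K$--$K$ and $\sigma$--$\sigma$ relations follow by the same bookkeeping with the relevant pair of operators.

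The asymptotic data are then read off by setting $\bar u=0$ in the three brackets. Because $\bar K_s|_{\bar u=0}=\bar\sigma_s|_{\bar u=0}=0$ and the G\^ateaux derivative is linear in its direction, every term of the form $\bar A_l'[\bar\sigma_r]$ or $\bar B_r'[\bar K_l]$ vanishes at $\bar u=0$, so each $\langle\cdot,\cdot\rangle$ collapses to the commutator of the leading operators evaluated with $\bar A_s|_{\bar u=0}=h^{-s}\Delta^s$ and $\bar B_s|_{\bar u=0}=h^{-(s-1)}\bar x\Delta^s+h^{-(s-1)}(\bar x+hn)\Delta^{s-1}$. The $K$--$K$ case gives $[h^{-l}\Delta^l,h^{-r}\Delta^r]=0$ at once, while the mixed case requires commuting $\Delta^l$ past the variable $n$; here I would use the discrete Leibniz rule \eqref{dKP:Leibnitz} in the operator form $[\Delta^l,n]=lE\Delta^{l-1}$, after which $E=1+\Delta$ produces the two-term expression $h^{-(l+r-2)}l(\Delta^{l+r-1}+\Delta^{l+r-2})$. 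Note that $\Delta$ commutes with the continuous variable $\bar x$, so the $\bar x\Delta^r$ pieces contribute nothing. The $\sigma$--$\sigma$ case is the same computation carried out on both arguments and yields the stated $(l-r)$-multiple.

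The main obstacle is precisely this last discrete commutator bookkeeping: the noncommutativity of $\Delta$ with $n$ is what splits each continuous monomial $\partial^{l+r-2}$ into the pair $\Delta^{l+r-1}+\Delta^{l+r-2}$, and one must keep track of the fact that $\bar x$ commutes with $\Delta$ whereas $n$ does not. Once the single identity $[\Delta^l,n]=lE\Delta^{l-1}$ is extracted from \eqref{dKP:Leibnitz}, the remainder is a routine, if lengthy, rearrangement that parallels the continuous proof of Lemma~\ref{lem:KP-2} line by line.
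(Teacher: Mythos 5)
Your proposal is correct and takes essentially the same route as the paper: the authors prove only the continuous analogue (Lemma \ref{lem:KP-2}) by differentiating the zero curvature representations, substituting, and invoking the Jacobi identity, and state that the discrete case is proved similarly --- which is exactly your computation, with $\bar A_1'[g]=g$ playing the role of $A_2'[g]=2g$ and the factor $\tfrac12$ absent. Your explicit handling of the asymptotic data via $[\Delta^l,n]=lE\Delta^{l-1}$ and $E=1+\Delta$ (with $\bar x$ commuting with $\Delta$) correctly supplies the evaluation at $\bar u=0$ that the paper leaves implicit, and it reproduces the stated coefficients.
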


\begin{thm}\label{thm:dKP alg}
The flows
$\{\b K_{l}\}$ and $\{\b \sigma_{r}\}$  span a Lie algebra $\b {\mathbf{X}}$ with basic structure
\bse\label{dKP:alg}
\begin{align}
&{\llbracket}\b K_{l},\b K_{r}{\rrbracket}=0,\label{dKP:alg-1}\\
&{\llbracket}\b K_{l},\b \sigma_{r}{\rrbracket}=l\,(h\b K_{l+r-1}+\b K_{l+r-2}),\label{dKP:alg-2}\\
&{\llbracket}\b \sigma_{l},\b \sigma_{r}{\rrbracket}=(l-r)(h\b\sigma_{l+r-1}+\b \sigma_{l+r-2}),\label{dKP:alg-3}
\end{align}
\ese
where $l,r\geq 1$ and we set $\b K_0=\b \sigma_0=0$.
\end{thm}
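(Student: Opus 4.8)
The plan is to mirror the continuous proof of Theorem \ref{thm:KP alg}, feeding the output of Lemma \ref{lem:dKP-2} into the uniqueness result of Lemma \ref{lem:dKP-1}. The only genuinely algebraic observation needed at the outset is that the product $\llbracket\cdot,\cdot\rrbracket$ defined in \eqref{def:Lie product} is already a Lie bracket on $\bar{\mathcal{F}}$ (it is antisymmetric and obeys the Jacobi identity), so to prove that $\{\bar K_l\}$ and $\{\bar\sigma_r\}$ span a Lie algebra $\bar{\mathbf{X}}$ it suffices to establish the three closure relations \eqref{dKP:alg-1}--\eqref{dKP:alg-3}; the Jacobi identity for $\bar{\mathbf{X}}$ is then automatic as the restriction of a Lie bracket to a subspace closed under it.

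For \eqref{dKP:alg-1}, Lemma \ref{lem:dKP-2} writes $\llbracket\bar K_l,\bar K_r\rrbracket=\langle\bar A_l,\bar A_r\rangle_{\bar x}-[\bar A_1,\langle\bar A_l,\bar A_r\rangle]$ with $\langle\bar A_l,\bar A_r\rangle|_{\bar u=0}=0$. Setting $\bar X=\llbracket\bar K_l,\bar K_r\rrbracket$ and $\bar N=\langle\bar A_l,\bar A_r\rangle$, this is exactly the test equation \eqref{dKP:test eq} of Lemma \ref{lem:dKP-1} with vanishing asymptotic data, so the uniqueness forces $\bar X=0$.

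For \eqref{dKP:alg-2} and \eqref{dKP:alg-3} I would use the zero curvature representations \eqref{diKP:zcr} and \eqref{dnKP:zcr} to recast the claimed right-hand sides into the same zero-curvature shape. Since $\bar K_s=\bar A_{s,\bar x}-[\bar A_1,\bar A_s]$, the operator $\bar M=l(h\bar A_{l+r-1}+\bar A_{l+r-2})$ satisfies $\bar M_{\bar x}-[\bar A_1,\bar M]=l(h\bar K_{l+r-1}+\bar K_{l+r-2})$; subtracting this from the Lemma \ref{lem:dKP-2} expression for $\llbracket\bar K_l,\bar\sigma_r\rrbracket$ produces the test equation \eqref{dKP:test eq} with $\bar X=\llbracket\bar K_l,\bar\sigma_r\rrbracket-l(h\bar K_{l+r-1}+\bar K_{l+r-2})$ and $\bar N=\langle\bar A_l,\bar B_r\rangle-\bar M$. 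The decisive point is the asymptotic check: the value $\langle\bar A_l,\bar B_r\rangle|_{\bar u=0}=h^{-(l+r-2)}l(\Delta^{l+r-1}+\Delta^{l+r-2})$ supplied by Lemma \ref{lem:dKP-2} must match $\bar M|_{\bar u=0}=l(h\cdot h^{-(l+r-1)}\Delta^{l+r-1}+h^{-(l+r-2)}\Delta^{l+r-2})$, and since $h\cdot h^{-(l+r-1)}=h^{-(l+r-2)}$ the two coincide, giving $\bar N|_{\bar u=0}=0$; Lemma \ref{lem:dKP-1} then closes the argument. The third relation is handled identically with $\bar M'=(l-r)(h\bar B_{l+r-1}+\bar B_{l+r-2})$, using the asymptotic data $\bar B_s|_{\bar u=0}=h^{-(s-1)}\bar x\Delta^s+h^{-(s-1)}(\bar x+hn)\Delta^{s-1}$ from \eqref{dnKP:zcr}.

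I expect the main obstacle to be precisely this last asymptotic verification for \eqref{dKP:alg-3}: one must expand $\bar M'|_{\bar u=0}$ and confirm that the $\bar x$- and $n$-dependent leading terms reorganize exactly into the three-term expression $h^{-(l+r-3)}(l-r)(\bar x\Delta^{l+r-1}+(2\bar x+hn)\Delta^{l+r-2}+(\bar x+hn)\Delta^{l+r-3})$ recorded in Lemma \ref{lem:dKP-2}. The coefficient $2\bar x+hn$ in the middle term is where the $\bar x\Delta^{l+r-2}$ contribution coming from $h\bar B_{l+r-1}$ and the $\bar x\Delta^{l+r-2}$ contribution coming from $\bar B_{l+r-2}$ must add together correctly, so careful bookkeeping of the $h$-powers and the shifted $\Delta$-degrees is essential; everything else is the same mechanical substitution into \eqref{dKP:test eq} as in the continuous case.
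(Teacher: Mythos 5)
Your proposal is correct and takes essentially the same route as the paper, which proves Theorem \ref{thm:dKP alg} by feeding the zero curvature expressions and asymptotic data of Lemma \ref{lem:dKP-2} into the uniqueness statement of Lemma \ref{lem:dKP-1}, exactly mirroring the continuous argument of Sec.\ref{S:3.4} (the paper states the proof is similar and omits it). Your asymptotic bookkeeping is the one detail the paper leaves implicit, and it checks out: $\bigl(h\b B_{l+r-1}+\b B_{l+r-2}\bigr)\big|_{\b{\mathbf{u}}=\mathbf{0}}$ indeed equals $h^{-(l+r-3)}\bigl(\b x\Delta^{l+r-1}+(2\b x+hn)\Delta^{l+r-2}+(\b x+hn)\Delta^{l+r-3}\bigr)$, matching the data of Lemma \ref{lem:dKP-2} so that Lemma \ref{lem:dKP-1} applies.
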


\begin{cor}\label{thm:diKP sym}
Each equation
\be
\b u_{t_{s}}=\b K_{s}
\ee
in the isospectral D$\Delta$KP hierarchy \eqref{diKP:hie}
possesses two sets of symmetries
\be\label{diKP:sym}
 \{\b K_{l}\},~~
 \{\b\tau^{s}_{r}=s\b t_{s}(h\b K_{s+r-1}+\b K_{s+r-2})+\b \sigma_{r}\},
\ee
which generate a Lie algebra with basic structure
\bse\label{diKP:alg sym}
\begin{align}
& {\llbracket}\b K_{l},\b K_{r}{\rrbracket}=0,\\
& {\llbracket}\b K_{l},\b \tau^{s}_{r}{\rrbracket}=l\,(h\b K_{l+r-1}+\b K_{l+r-2}),\\
& {\llbracket}\b \tau^{s}_{l},\b \tau^{s}_{r}{\rrbracket}=(l-r)(h\b\tau^{s}_{l+r-1}+\b\tau^{s}_{l+r-2}),\label{diKP:alg sym-c}
\end{align}
\ese
where $l,r,s\geq 1$ and we set $\b K_0=\b \tau^{s}_{0}=0$.
\end{cor}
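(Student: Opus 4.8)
The plan is to reduce the whole statement to Theorem \ref{thm:dKP alg} together with the definition of a symmetry, exploiting only the bilinearity and antisymmetry of $\j\cdot,\cdot\k$. The one structural point to keep in mind throughout is that $\b t_s$ enters $\b\tau^s_r$ as a purely explicit multiplicative parameter: the Lie product is built from G\^ateaux derivatives in $\b u$, so $\b t_s$ is inert under $\j\cdot,\cdot\k$ and factors out, whereas $\t\partial_{\b t_s}$ acts precisely on this explicit $\b t_s$-dependence. Writing $\b P_m=h\b K_{s+m-1}+\b K_{s+m-2}$, so that $\b\tau^s_r=s\b t_s\b P_r+\b\sigma_r$, will keep the index bookkeeping transparent.

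First I would verify that both families solve the symmetry equation $\t\partial_{\b t_s}\kappa=\j\b K_s,\kappa\k$ of the flow $\b u_{\b t_s}=\b K_s$. For $\{\b K_l\}$ this is immediate: $\b K_l$ carries no explicit $\b t_s$, so $\t\partial_{\b t_s}\b K_l=0$, while $\j\b K_s,\b K_l\k=0$ by \eqref{dKP:alg-1}. For $\{\b\tau^s_r\}$ the explicit derivative gives $\t\partial_{\b t_s}\b\tau^s_r=s\b P_r$; on the bracket side the $\b t_s\b P_r$ piece is annihilated by \eqref{dKP:alg-1}, leaving $\j\b K_s,\b\sigma_r\k=s\b P_r$ by \eqref{dKP:alg-2}, so the two sides agree.

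With the symmetry property in hand the algebra relations follow by expanding the brackets. The relations for $\j\b K_l,\b K_r\k$ and $\j\b K_l,\b\tau^s_r\k$ collapse at once to \eqref{dKP:alg-1} and \eqref{dKP:alg-2} (the parameter term again killed by \eqref{dKP:alg-1}). The substantive computation is the last relation. Expanding
\[
\j\b\tau^s_l,\b\tau^s_r\k=s^2\b t_s^2\,\j\b P_l,\b P_r\k
+s\b t_s\bigl(\j\b P_l,\b\sigma_r\k-\j\b P_r,\b\sigma_l\k\bigr)
+\j\b\sigma_l,\b\sigma_r\k,
\]
the first term vanishes by \eqref{dKP:alg-1}, the last equals $(l-r)(h\b\sigma_{l+r-1}+\b\sigma_{l+r-2})$ by \eqref{dKP:alg-3}, and I would evaluate the middle difference term by term using $\j\b K_m,\b\sigma_r\k=m(h\b K_{m+r-1}+\b K_{m+r-2})$. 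One finds $\j\b P_l,\b\sigma_r\k-\j\b P_r,\b\sigma_l\k=(l-r)(h\b P_{l+r-1}+\b P_{l+r-2})$, the $s$-dependent cross terms conspiring so that the surviving coefficient is exactly $(l-r)$. Reassembling yields $(l-r)\bigl(h(s\b t_s\b P_{l+r-1}+\b\sigma_{l+r-1})+(s\b t_s\b P_{l+r-2}+\b\sigma_{l+r-2})\bigr)=(l-r)(h\b\tau^s_{l+r-1}+\b\tau^s_{l+r-2})$, which is \eqref{diKP:alg sym-c}.

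The only real obstacle is this index-and-$h$ bookkeeping in the middle difference: tracking the two index shifts and the two powers of $h$ simultaneously. Once the identity $\j\b P_l,\b\sigma_r\k-\j\b P_r,\b\sigma_l\k=(l-r)(h\b P_{l+r-1}+\b P_{l+r-2})$ is confirmed, the recombination back into $\b\tau^s$'s is forced by the definition $\b\tau^s_m=s\b t_s\b P_m+\b\sigma_m$. No structural input beyond Theorem \ref{thm:dKP alg} is required, exactly mirroring the continuous proof of Corollary \ref{thm:iKP sym}.
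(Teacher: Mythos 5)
Your proposal is correct and follows exactly the route the paper intends: the paper states this corollary as an immediate consequence of Theorem \ref{thm:dKP alg} (mirroring how Corollary \ref{thm:iKP sym} ``directly'' follows from Theorem \ref{thm:KP alg}), and your expansion---treating $\b t_s$ as inert under $\j\cdot,\cdot\k$, verifying the symmetry condition $\t\partial_{\b t_s}\b\tau^s_r=s(h\b K_{s+r-1}+\b K_{s+r-2})=\j\b K_s,\b\tau^s_r\k$ via \eqref{dKP:alg-1}--\eqref{dKP:alg-2}, and computing $\j\b P_l,\b\sigma_r\k-\j\b P_r,\b\sigma_l\k=(l-r)(h\b P_{l+r-1}+\b P_{l+r-2})$ so the $s$-dependent terms cancel---is precisely the bookkeeping the paper leaves implicit. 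Your index and $h$-power accounting checks out (including the edge cases, where $(l-r)=0$ kills any undefined flow), so nothing is missing.
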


\begin{cor}
\label{cor:dKP recur}
$\b \sigma_2$ is a  master symmetry and acts as a flows generator via the following relation
\begin{subequations}
\begin{align}
&\b K_{s+1}=\frac{1}{h}\biggl(\frac{1}{s}\llbracket\b K_{s},\b\sigma_2\rrbracket-\b K_{s}\biggr),
\label{diKP:recur}\\
&\b \sigma_{s+1}=\frac{1}{h}\biggl(\frac{1}{s-2}\llbracket\b \sigma_{s},\b\sigma_2\rrbracket-\b\sigma_s\biggr),~~(s\neq 2),
\label{dnKP:recur}
\end{align}
\end{subequations}
with initial flows $\b K_1=\b u_{\b x}$ given in \eqref{diKP:flow} and $\b \sigma_1, \b \sigma_3$ given in \eqref{dnKP:flow}.
\end{cor}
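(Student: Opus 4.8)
The plan is to read both recursive relations directly off the Lie-algebra structure established in Theorem \ref{thm:dKP alg}; no computation is required beyond that result. The essential observation is that the master symmetry $\bar{\sigma}_2$ sits in the distinguished slot $r=2$ of the bracket relations \eqref{dKP:alg-2} and \eqref{dKP:alg-3}, where the index shift $l+r-2$ collapses to $l$ and $l+r-1$ to $l+1$. Consequently, taking the Lie product with $\bar{\sigma}_2$ raises the flow index by one, which is exactly the hallmark of a master symmetry.

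First I would specialize \eqref{dKP:alg-2} to $r=2$, obtaining
$$\llbracket \bar{K}_s,\bar{\sigma}_2\rrbracket = s\bigl(h\bar{K}_{s+1}+\bar{K}_s\bigr),$$
and solving for $\bar{K}_{s+1}$ gives \eqref{diKP:recur} at once. Likewise, setting $r=2$ in \eqref{dKP:alg-3} yields
$$\llbracket \bar{\sigma}_s,\bar{\sigma}_2\rrbracket = (s-2)\bigl(h\bar{\sigma}_{s+1}+\bar{\sigma}_s\bigr),$$
so that solving for $\bar{\sigma}_{s+1}$ produces \eqref{dnKP:recur}, valid precisely when $s\neq 2$ so that the factor $s-2$ is invertible. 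Thus both recursions are nothing more than algebraic rearrangements of the bracket relations.

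It then remains to justify the term \emph{generator}, namely that these relations, seeded by the explicitly listed flows, reproduce the whole hierarchy. Starting from $\bar{K}_1=\bar{u}_{\bar{x}}$ of \eqref{diKP:flow}, relation \eqref{diKP:recur} with $s=1,2,\dots$ successively determines $\bar{K}_2,\bar{K}_3,\dots$, each step being well defined since $\llbracket\,\cdot\,,\bar{\sigma}_2\rrbracket$ acts on $\bar{\mathcal{F}}$. On the non-isospectral side, \eqref{dnKP:recur} generates $\bar{\sigma}_4,\bar{\sigma}_5,\dots$ from $\bar{\sigma}_3$ of \eqref{dnKP:flow}. The one point requiring genuine care --- and the main obstacle to a naive single-seed statement --- is the failure of \eqref{dnKP:recur} at $s=2$: there the bracket relation \eqref{dKP:alg-3} degenerates to $\llbracket\bar{\sigma}_2,\bar{\sigma}_2\rrbracket=0$, an identity that carries no information about $\bar{\sigma}_3$. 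Hence $\bar{\sigma}_2$ alone cannot generate $\bar{\sigma}_3$, which is exactly why $\bar{\sigma}_3$ must be supplied as an independent initial flow; once it is, the recursion closes and yields every $\bar{\sigma}_s$ with $s\ge 3$. All remaining assertions reduce to the already-established structure \eqref{dKP:alg}, so the proof involves no substantial calculation.
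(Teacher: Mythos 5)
Your proposal is correct and takes essentially the same route as the paper: the paper offers no separate proof, presenting this corollary (like its continuous analogue, Corollary \ref{thm:KP recur}) as a direct consequence of the algebra \eqref{dKP:alg}, obtained exactly as you do by setting $r=2$ in \eqref{dKP:alg-2} and \eqref{dKP:alg-3} and solving for $\bar{K}_{s+1}$ and $\bar{\sigma}_{s+1}$. Your explanation of why $s=2$ degenerates (the bracket $\llbracket\bar{\sigma}_2,\bar{\sigma}_2\rrbracket=0$ carries no information, so $\bar{\sigma}_3$ must be supplied as a seed) correctly accounts for the initial flows listed in the statement.
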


\subsection{Hamiltonian structures and conserved quantities}
\label{S:4.4}

For (1+1)-dimensional Lax integrable systems, they usually have their own recursion operators
that play crucial roles in investigating integrable characteristics (cf. \cite{FF-PD-1981,ZC-JPA-2002,ZC-SAM-2010-I,ZC-SAM-2010-II}).
For  the isospectral D$\Delta$KP hierarchy, so far there is no explicit recursion operator
but their recursive structure \eqref{diKP:recur} will play a similar role.
We will show that
each member in the isospectral D$\Delta$KP hierarchy \eqref{diKP:hie} and non-isospectral D$\Delta$KP hierarchy \eqref{dnKP:hie}
has a Hamiltonian structure, and the Hamiltonians lead to two sets of  conserved quantities
for the isospectral D$\Delta$KP hierarchy \eqref{diKP:hie}.
Let us prove this step by step.

\begin{lem}
\label{lem:grad}
The following formula
\be
\mathrm{grad}(\b\gamma,\b\sigma)=\b\gamma'^*\b\sigma+\b\sigma'^*\b\gamma
\ee
holds for any $\b\gamma,\b\sigma\in\b{\mathcal{F}}$.
\end{lem}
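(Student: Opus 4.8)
The plan is to establish the identity
\be
\mathrm{grad}(\b\gamma,\b\sigma)=\b\gamma'^*\b\sigma+\b\sigma'^*\b\gamma
\ee
directly from the definition of the gradient via the G\^ateaux derivative and the self-adjointness characterization of gradients in Proposition \ref{P:2-1}. First I would recall that for the functional $H=(\b\gamma,\b\sigma)$ the gradient $\mathrm{grad}\,H$ is, by definition \eqref{def:grad}, the function $\b f\in\b{\mathcal{F}}$ satisfying $H'[\b g]=(\b f,\b g)$ for all $\b g\in\b{\mathcal{F}}$. So the whole proof reduces to computing the directional derivative $H'[\b g]$ and rearranging it into the form $(\,\cdot\,,\b g)$.

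The key computation is to differentiate $(\b\gamma,\b\sigma)$ along the direction $\b g$ using the product structure of the inner product \eqref{dKP:inn prod}. Applying the Leibniz rule for the G\^ateaux derivative gives
\be
H'[\b g]=(\b\gamma'[\b g],\b\sigma)+(\b\gamma,\b\sigma'[\b g]).
\ee
Now I would move the linear operators $\b\gamma'$ and $\b\sigma'$ off the slot containing $\b g$ by using the adjoint relative to the inner product, so that
\be
(\b\gamma'[\b g],\b\sigma)=(\b g,\b\gamma'^*\b\sigma),\qquad
(\b\gamma,\b\sigma'[\b g])=(\b g,\b\sigma'^*\b\gamma).
\ee
Adding these and using symmetry of the inner product yields $H'[\b g]=(\b\gamma'^*\b\sigma+\b\sigma'^*\b\gamma,\b g)$, and comparing with \eqref{def:grad} identifies the bracketed expression as $\mathrm{grad}\,H$, which is exactly the claimed formula.

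The only point requiring genuine care — and what I expect to be the main obstacle — is the legitimacy of the adjoint manipulation in the semi-discrete setting: one must confirm that the adjoint $\b\gamma'^*$ is taken with respect to the \emph{same} inner product \eqref{dKP:inn prod} (with its summation over $n$ and integration over $\b x$) that defines the gradient, and that all boundary terms from summation-by-parts in $n$ and integration-by-parts in $\b x$ vanish. This is precisely where the rapidly-decreasing assumption on $\b{\mathcal{S}}$ is used, so I would note that the passage $(\b\gamma'[\b g],\b\sigma)=(\b g,\b\gamma'^*\b\sigma)$ is valid on $\b{\mathcal{F}}$ by the definition of the adjoint operator together with the decay of the fields, which guarantees no contributions at $n\to\pm\infty$ or $\b x\to\pm\infty$. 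With that justified, the self-consistency with Proposition \ref{P:2-1} (that $\mathrm{grad}\,H$ is a genuine gradient, i.e. its derivative is self-adjoint) follows automatically, and the identity holds for all $\b\gamma,\b\sigma\in\b{\mathcal{F}}$ as stated.
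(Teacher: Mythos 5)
Your proposal is correct and follows essentially the same route as the paper, which proves the lemma by the single computation $(\b\gamma,\b\sigma)'[\b g]=(\b\gamma'[\b g],\b\sigma)+(\b\gamma,\b\sigma'[\b g])=(\b\gamma'^*\b\sigma+\b\sigma'^*\b\gamma,\b g)$ for all $\b g\in\b{\mathcal{F}}$ and then reads off the gradient from the definition \eqref{def:grad}. Your added remark on the vanishing of boundary terms in the summation-by-parts and integration-by-parts, justified by the rapidly-decreasing class $\b{\mathcal{S}}$, merely makes explicit what the paper leaves implicit.
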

In fact, one can verify that
\begin{align*}
(\b\gamma,\b\sigma)'[\b g]=(\b\gamma'[\b g],\b\sigma)+(\b\gamma,\b\sigma'[\b g])
=(\b\gamma'^*\b\sigma+\b\sigma'^*\b\gamma,\b g),~~\forall \b g=\b g(\b u)\in \b{\mathcal{F}}.
\end{align*}
\begin{lem}\label{lem:master eq}
$\partial_{\b x}$ and $\b\sigma_2$ satisfy
\be
\b\sigma_2'\partial_{\b x}+\partial_{\b x}\b\sigma_2'^*=0,
\label{sig-par}
\ee
i.e. $\partial_{\b x}$ is a Noether operator of the master symmetry equation $\b u_{\b t_2}=\b \sigma_2$.
\end{lem}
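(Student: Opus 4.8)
The plan is to recognise the claimed identity \eqref{sig-par} as precisely the Noether-operator condition \eqref{def:Noether} for the constant operator $\Gamma=\partial_{\b x}$ attached to the flow $\b u_{\b t_2}=\b\sigma_2$. Since $\partial_{\b x}$ depends neither on $\b u$ nor explicitly on $\b t_2$, the terms $\t\partial_{\b t_2}\Gamma$ and $\Gamma'[\b\sigma_2]$ both vanish and \eqref{def:Noether} collapses to $-\partial_{\b x}\b\sigma_2'^*-\b\sigma_2'\partial_{\b x}=0$, which is exactly \eqref{sig-par}. It therefore suffices to verify this operator identity directly, and I would do so by computing the G\^ateaux derivative $\b\sigma_2'$ and its adjoint explicitly as pseudo-difference operators.

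First I would differentiate the explicit expression \eqref{dnKP:flow-2} for $\b\sigma_2$. Using $\b K_1'=\partial_{\b x}$ and, from \eqref{diKP:flow-2}, $\b K_2'=(1+2\Delta^{-1})\partial_{\b x}^2-2h^{-1}\partial_{\b x}+2\b u_{\b x}+2\b u\partial_{\b x}$ (with $2\b u_{\b x}$ and $2\b u$ understood as multiplication operators), together with $(h\b u^2)'=2h\b u$, $(h\b u_{\b x})'=h\partial_{\b x}$, $(3h\Delta^{-1}\b u_{\b x})'=3h\Delta^{-1}\partial_{\b x}$ and $(-\b u)'=-1$, I obtain $\b\sigma_2'$ as a finite sum of terms of the shape $(\text{coefficient})\,\Delta^a\partial_{\b x}^b$, whose coefficients may involve $\b x$, $n$, $\b u$ and $\b u_{\b x}$. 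The factors $h\b x$ and $\b x+hn$ are carried along untouched, since they are independent of $\b u$.

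Next I would pass to the adjoint with respect to the inner product \eqref{dKP:inn prod}. The rules needed are $\partial_{\b x}^*=-\partial_{\b x}$ (integration by parts in $\b x$), $E^*=E^{-1}$ and hence $\Delta^*=E^{-1}-1=-E^{-1}\Delta$ and $(\Delta^{-1})^*=-\Delta^{-1}E$ (summation by parts in $n$), while multiplication by any function, including $\b x$ and $n$, is self-adjoint and products reverse order under $*$. Applying these term by term yields $\b\sigma_2'^*$, after which I would substitute into $\b\sigma_2'\partial_{\b x}+\partial_{\b x}\b\sigma_2'^*$ and collapse it to zero. The cancellation is governed by the structural identity $\Delta^{-1}(1-E)=-1$ together with the commutators $[\partial_{\b x},\b x]=1$ and $[\Delta,n]=E$. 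As a sanity check, the top-order ($\partial_{\b x}^3$) contribution is $h[(1+2\Delta^{-1})+(1-2\Delta^{-1}E)]\,\b x\,\partial_{\b x}^3 = h[2+2\Delta^{-1}(1-E)]\,\b x\,\partial_{\b x}^3 = 0$, which both confirms the adjoint rules and exhibits the mechanism by which the remaining terms telescope.

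I expect the main obstacle to be the bookkeeping forced by the non-constant, $\b u$-independent coefficients $\b x$ and $n$: because $\partial_{\b x}$ does not commute with $\b x$ and $\Delta$ does not commute with $n$, forming the products $\b\sigma_2'\partial_{\b x}$ and $\partial_{\b x}\b\sigma_2'^*$ and taking adjoints of the mixed terms such as $h\b x(1+2\Delta^{-1})\partial_{\b x}^2$ and $3h\Delta^{-1}\partial_{\b x}$ generates a cascade of commutator corrections of every order down to the zeroth. The entire content of the proof is showing that these corrections, once $\Delta^{-1}(1-E)=-1$ is used repeatedly, cancel exactly. Lemma \ref{lem:grad} is not required for this verification itself, but it is the tool that will subsequently convert this Noether property into the Hamiltonian structure established in the following theorem.
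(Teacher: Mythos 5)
Your proposal is correct and takes exactly the route the paper intends: the paper states only that \eqref{sig-par} follows by a ``lengthy but direct calculation'' and skips it, and your plan---differentiate \eqref{dnKP:flow-2} to get $\b\sigma_2'=h\b x\b K_2'+(\b x+hn)\partial_{\b x}+h\partial_{\b x}+3h\Delta^{-1}\partial_{\b x}+2h\b u-1$, pass to adjoints via $\partial_{\b x}^*=-\partial_{\b x}$, $(\Delta^{-1})^*=-\Delta^{-1}E$, and cancel using $\Delta^{-1}(1-E)=-1$---is precisely that calculation, and carrying it out term by term does collapse $\b\sigma_2'\partial_{\b x}+\partial_{\b x}\b\sigma_2'^*$ to zero: the $\partial_{\b x}^3$ terms vanish as in your sanity check, the residual $-3h(1+2\Delta^{-1})\partial_{\b x}^2$ from $h\b x(1+2\Delta^{-1})\partial_{\b x}^2$ cancels against the contribution $3h(1+2\Delta^{-1})\partial_{\b x}^2$ of $h\partial_{\b x}+3h\Delta^{-1}\partial_{\b x}$, the first-order corrections $+4\partial_{\b x}$ and $-2\partial_{\b x}$ from $-2\b x\partial_{\b x}$ and $(\b x+hn)\partial_{\b x}$ cancel the $-2\partial_{\b x}$ from $2h\b u-1$, and the leftover $4h\b u\partial_{\b x}+2h\b u_{\b x}$ is annihilated by the contribution $-4h\b u\partial_{\b x}-2h\b u_{\b x}$ of $2h\b x(\b u_{\b x}+\b u\partial_{\b x})$. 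Your framing of \eqref{sig-par} as the Noether condition \eqref{def:Noether} with constant $\Gamma=\partial_{\b x}$ (so $\t\partial_{\b t_2}\Gamma=\Gamma'[\b\sigma_2]=0$) is likewise the correct reading, so your outline in fact supplies the verification the paper omits.
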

This identity is important for getting Hamiltonian structures for both isospectral and non-isospectral D$\Delta$KP hierarchies.
The proof of \eqref{sig-par} will lead to lengthy but direct calculation and here we skip it.

Now we arrive at the first main theorem of this subsection.

\begin{thm}
\label{thm:diKP Ham}
Each equation
\be
\b u_{t_{s}}=\b K_{s}
\label{diKP:eq}
\ee
in the isospectral D$\Delta$KP hierarchy \eqref{diKP:hie} has a Hamiltonian structure
\be
\b u_{\b t_s}=\b K_s=\partial_{\b x}\frac{\delta \b H_s}{\delta \b u},
\label{diKP:Ham}
\ee
where
\be
\frac{\delta \b H_s}{\delta \b u}=\b \gamma_s=\partial_{\b x}^{-1} \b K_s.
\ee
$\b \gamma_s$ can also be determined through
\be
\b \gamma_{s}=
\left\{
\begin{array}{ll}
\b u, & ~s=1,\\
\frac{1}{h}\bigl(\frac{1}{s-1}\,{\rm grad}(\b\gamma_{s-1},\b\sigma_2)-\b\gamma_{s-1}\bigr), & ~s>1.
\end{array}
\right.
\label{diKP:gamma recur}
\ee
The Hamiltonian $\b H_s$ can be given by
\be
\b H_{s}=
\left\{
\begin{array}{ll}
\frac{1}{2}(\b u, \b u), & ~s=1,\\
\frac{1}{h}\bigl(\frac{1}{s-1}(\b\gamma_{s-1},\b\sigma_2)-\b H_{s-1}\bigr), & ~s>1.

\end{array}
\right.
\label{diKP:H recur}
\ee
\end{thm}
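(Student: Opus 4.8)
The plan is to proceed by induction on $s$, transcribing the argument behind the continuous Theorem~\ref{thm:iKP Ham} with the continuous master symmetry $\sigma_3$ replaced by $\b\sigma_2$ and the algebra \eqref{KP:alg} replaced by its $h$-deformed counterpart \eqref{dKP:alg}. The inductive claim at stage $s$ is that $\b\gamma_s=\partial_{\b x}^{-1}\b K_s$ is a gradient function (equivalently, by Proposition~\ref{P:2-1}, that $\b\gamma_s'^*=\b\gamma_s'$) satisfying \eqref{diKP:gamma recur}, with potential $\b H_s$ given by \eqref{diKP:H recur}. For the base case $s=1$ I would note that $\b\gamma_1=\b u$ has $\b\gamma_1'=\mathrm{id}$, which is self-adjoint, so Proposition~\ref{P:2-1} identifies $\b H_1=\frac{1}{2}(\b u,\b u)$ as its potential; since $\partial_{\b x}\b\gamma_1=\b u_{\b x}=\b K_1$, the Hamiltonian form \eqref{diKP:Ham} holds for $s=1$.

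The heart of the induction is the operator identity
\[
\partial_{\b x}^{-1}\llbracket\b K_{s-1},\b\sigma_2\rrbracket=\mathrm{grad}(\b\gamma_{s-1},\b\sigma_2).
\]
To establish it I would expand the bracket by \eqref{def:Lie product} as $\b K_{s-1}'[\b\sigma_2]-\b\sigma_2'[\b K_{s-1}]$ and substitute $\b K_{s-1}=\partial_{\b x}\b\gamma_{s-1}$. The first term becomes $\partial_{\b x}^{-1}\b K_{s-1}'[\b\sigma_2]=\b\gamma_{s-1}'[\b\sigma_2]=\b\gamma_{s-1}'^*\b\sigma_2$, using the inductive self-adjointness of $\b\gamma_{s-1}'$. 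For the second term, $-\partial_{\b x}^{-1}\b\sigma_2'[\b K_{s-1}]=-\partial_{\b x}^{-1}\b\sigma_2'\partial_{\b x}\b\gamma_{s-1}$, the Noether identity \eqref{sig-par} of Lemma~\ref{lem:master eq} supplies $\b\sigma_2'\partial_{\b x}=-\partial_{\b x}\b\sigma_2'^*$, turning it into $\b\sigma_2'^*\b\gamma_{s-1}$. Adding the two and invoking Lemma~\ref{lem:grad} gives the right-hand side. Feeding this into the flow recursion \eqref{diKP:recur} written at index $s-1$, namely $\b K_s=\frac{1}{h}\bigl(\frac{1}{s-1}\llbracket\b K_{s-1},\b\sigma_2\rrbracket-\b K_{s-1}\bigr)$, and applying $\partial_{\b x}^{-1}$ yields \eqref{diKP:gamma recur}. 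Because its right-hand side is a linear combination of the functional gradient $\mathrm{grad}(\b\gamma_{s-1},\b\sigma_2)$ and the gradient $\b\gamma_{s-1}$ (by the inductive hypothesis), self-adjointness is inherited and $\b\gamma_s$ is again a gradient; this closes the induction and delivers \eqref{diKP:Ham}.

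To verify the Hamiltonian recursion \eqref{diKP:H recur} I would apply $\mathrm{grad}$ to its right-hand side. By Lemma~\ref{lem:grad} the functional $(\b\gamma_{s-1},\b\sigma_2)$ has gradient $\mathrm{grad}(\b\gamma_{s-1},\b\sigma_2)$, while $\mathrm{grad}\,\b H_{s-1}=\b\gamma_{s-1}$ by the inductive hypothesis; linearity of $\mathrm{grad}$ then reproduces precisely the right-hand side of \eqref{diKP:gamma recur}, so $\mathrm{grad}\,\b H_s=\b\gamma_s$. Hence $\b H_s$ is the potential of $\b\gamma_s$ and the two recursions are consistent.

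The step I expect to be the main obstacle is the central identity $\partial_{\b x}^{-1}\llbracket\b K_{s-1},\b\sigma_2\rrbracket=\mathrm{grad}(\b\gamma_{s-1},\b\sigma_2)$: it leans on the Noether property \eqref{sig-par}, whose own justification (only asserted in Lemma~\ref{lem:master eq}) is the genuinely lengthy direct computation, and on carrying the self-adjointness of $\b\gamma_{s-1}'$ through every stage. A secondary point to keep honest is the well-definedness of $\partial_{\b x}^{-1}\b K_s$ inside the rapidly-decreasing class $\b{\mathcal{F}}$, which is what guarantees $\b\gamma_s\in\b{\mathcal{F}}$ at each step.
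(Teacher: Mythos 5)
Your proposal is correct and follows essentially the same route as the paper's own proof: induction on $s$ using the flow recursion \eqref{diKP:recur}, with the key step $\partial_{\b x}^{-1}\llbracket\b K_{s-1},\b\sigma_2\rrbracket=\mathrm{grad}(\b\gamma_{s-1},\b\sigma_2)$ obtained exactly as in the paper from the Noether identity \eqref{sig-par} of Lemma \ref{lem:master eq}, the inductive self-adjointness $\b\gamma_{s-1}'=\b\gamma_{s-1}'^*$, and Lemma \ref{lem:grad}, followed by Proposition \ref{P:2-1} for the potentials. Your explicit check that $\mathrm{grad}$ applied to \eqref{diKP:H recur} reproduces \eqref{diKP:gamma recur}, and your remark on the well-definedness of $\partial_{\b x}^{-1}\b K_s$ in $\b{\mathcal{F}}$, only make explicit what the paper leaves implicit.
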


\begin{proof}
Obviously, $\partial_{\b x}$ is an implectic operator.
Next we need to prove $\b \gamma_s$ is a gradient function.
Let us do that by means of mathematical inductive method.
Obviously, $\gamma_1=\b u$ is a gradient function.
We suppose $\b\gamma_s$ is a gradient function, i.e. $\b\gamma_s'=\b\gamma_s'^*$.
Then, from the recursive relation \eqref{diKP:recur} we have
\begin{align*}
\b\gamma_{s+1}&=\partial_{\b x}^{-1} \b K_{s+1} \\
&=\frac{1}{h}\biggl(\frac{1}{s}\,\partial_{\b x}^{-1}{\llbracket}\b K_{s},\b\sigma_{2}{\rrbracket}-\partial_{\b x}^{-1}\b K_s\biggr) \\
&=\frac{1}{h}\biggl(\frac{1}{s}\,\partial_{\b x}^{-1}\big((\partial_{\b x}\b\gamma_s)'[\b\sigma_2]-\b\sigma_2'[\partial_{\b x}\b\gamma_s]\big)
-\b\gamma_s\biggr)\\
&=\frac{1}{h}\biggl(\frac{1}{s}\,\partial_{\b x}^{-1}(\partial_{\b x}\b\gamma_s' \b\sigma_2 -\b\sigma_2'\partial_{\b x}\,\b\gamma_s)-\b\gamma_s\biggr).
\end{align*}
It then follows from Lemma \ref{lem:master eq} that
\begin{align*}
\b\gamma_{s+1}&=\frac{1}{h}\biggl(\frac{1}{s}\,\partial_{\b x}^{-1}(\partial_{\b x}\,\b\gamma_s' \b\sigma_2
+ \partial_{\b x}\b\sigma_2'^*\,\b\gamma_s)-\b\gamma_s\biggr)\\
&=\frac{1}{h}\biggl(\frac{1}{s}\,(\b\gamma_s' \b\sigma_2 + \b\sigma_2'^*\,\b\gamma_s)-\b\gamma_s\biggr)\\
&=\frac{1}{h}\biggl(\frac{1}{s}\,{\rm grad}(\b\gamma_{s},\b\sigma_2)-\b\gamma_s\biggr).
\end{align*}
Here we also made use of Lemma \ref{lem:grad}.
This means if $\b \gamma_s$ is a gradient function, so is $\b\gamma_{s+1}$.
For the Hamiltonians, $\b H_1=\frac{1}{2}(\b u, \b u)$
is derived from $\b \gamma_1=\b u$ and Proposition \ref{P:2-1}.
$\b H_s$ for $s>1$ follows from the recursive relation of $\b \gamma_s$ given in \eqref{diKP:gamma recur}.
We complete the proof.
\end{proof}

The non-isospectral D$\Delta$KP hierarchy \eqref{dnKP:hie} also have their own Hamiltonian structures.

\begin{thm}
\label{thm:dnKP Ham}
Each equation
\be
\b u_{t_{s}}=\b \sigma_{s}
\label{dnKP:eq}
\ee
in the non-isospectral D$\Delta$KP hierarchy \eqref{dnKP:hie} has a Hamiltonian structure
\be
\b u_{\b t}=\b \sigma_s=\partial_{\b x}\frac{\delta \b J_s}{\delta \b u},
\label{dnKP:Ham}
\ee
where
\be
\frac{\delta \b J_s}{\delta \b u}=\b \omega_s=\partial_{\b x}^{-1} \b \sigma_s.
\ee
and
\begin{subequations}
\begin{align}
& \b \omega_1=h\b x\b u, \\
& \b \omega_2=h\b x\b\gamma_2+(\b x+hn)\b\gamma_1+h\Delta^{-1}\b u,\\
& \b \omega_3=h\b x\b\gamma_3+(\b x+hn)\b\gamma_2+2h\Delta^{-2}\b u_{\b x}+h\Delta^{-1}\b u^2+h\b u\Delta^{-1}\b u-2\Delta^{-1}\b u, \\
& \b \omega_{s}=\frac{1}{h}\biggl(\frac{1}{s-3}\,{\rm grad}(\b\omega_{s-1},\b\sigma_2)-\b\omega_{s-1}\biggr),~~s=4,5,\cdots .
\label{diKP:omega recur}
\end{align}
\end{subequations}
The Hamiltonian $\b J_s$ can be given by
\be
\b J_{s}=
\left\{
\begin{array}{ll}
\frac{h}{2}(\b x\b u, \b u), & ~s=1,\\
\frac{1}{h}\bigl(\frac{1}{s-3}(\b\omega_{s-1},\b\sigma_2)-\b J_{s-1}\bigr), & ~s>1,~s\neq 3,\\
\int^{1}_{0}(\b \omega_3(\lambda \b u), \b u)d \lambda, & ~s=3.
\end{array}
\right.
\label{diKP:J recur}
\ee
\end{thm}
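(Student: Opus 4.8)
The plan is to mirror the proof of Theorem~\ref{thm:diKP Ham} as closely as possible, since the non-isospectral statement is the structural twin of the isospectral one. First I would observe that $\partial_{\b x}$ is trivially skew-symmetric and (being constant-coefficient in $\b x$) trivially satisfies the Jacobi identity \eqref{Jacobi}, so it is an implectic operator; hence establishing the Hamiltonian form \eqref{dnKP:Ham} amounts precisely to showing that each $\b\omega_s=\partial_{\b x}^{-1}\b\sigma_s$ is a gradient function, i.e. $\b\omega_s'=\b\omega_s'^*$. Once that is in hand, Proposition~\ref{P:2-1} guarantees the existence of the potential $\b J_s$, and formula \eqref{def:pot} yields the $s=3$ case $\b J_3=\int_0^1(\b\omega_3(\lambda\b u),\b u)\,d\lambda$ directly.

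The main induction runs on the recursion \eqref{diKP:omega recur} for $s\ge 4$, and it is formally identical to the induction in Theorem~\ref{thm:diKP Ham}. Assuming $\b\omega_{s-1}'=\b\omega_{s-1}'^*$, I would apply $\partial_{\b x}^{-1}$ to the recursion \eqref{dnKP:recur} for $\b\sigma_s$, expand the Lie product $\j\b\sigma_{s-1},\b\sigma_2\k=\b\sigma_{s-1}'[\b\sigma_2]-\b\sigma_2'[\b\sigma_{s-1}]$ after writing $\b\sigma_{s-1}=\partial_{\b x}\b\omega_{s-1}$, and then invoke the Noether identity \eqref{sig-par} of Lemma~\ref{lem:master eq} to convert $\b\sigma_2'\partial_{\b x}$ into $-\partial_{\b x}\b\sigma_2'^*$. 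The $\partial_{\b x}$ factors cancel against the leading $\partial_{\b x}^{-1}$, leaving $\b\omega_s=\frac{1}{h}\bigl(\frac{1}{s-3}(\b\omega_{s-1}'\b\sigma_2+\b\sigma_2'^*\b\omega_{s-1})-\b\omega_{s-1}\bigr)$, and Lemma~\ref{lem:grad} identifies the bracketed combination as $\mathrm{grad}(\b\omega_{s-1},\b\sigma_2)$, confirming both that $\b\omega_s$ is a gradient and that it agrees with \eqref{diKP:omega recur}. The $s>1,\,s\ne 3$ Hamiltonian formula in \eqref{diKP:J recur} then follows from the potential of a gradient of the form $\frac{1}{s-3}\mathrm{grad}(\b\omega_{s-1},\b\sigma_2)$ up to the additive $-\b\omega_{s-1}$ term, exactly as $\b H_s$ was obtained from $\b\gamma_s$.

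The genuinely different part is the base of the induction: unlike the isospectral chain, the recursion \eqref{diKP:omega recur} has its denominator vanish at $s=3$, so I cannot generate $\b\omega_1,\b\omega_2,\b\omega_3$ from it. Instead I would verify the closed forms for $\b\omega_1,\b\omega_2,\b\omega_3$ listed in the theorem by the definition $\b\omega_s=\partial_{\b x}^{-1}\b\sigma_s$ applied to the explicit flows \eqref{dnKP:flow}, and check the self-adjointness $\b\omega_s'=\b\omega_s'^*$ for these three directly. I expect the self-adjointness check on $\b\omega_2$ and especially $\b\omega_3$ to be the main obstacle, since these carry the explicit multipliers $\b x$, $hn$ and the nonlocal tails $\Delta^{-1}\b u$, $h\b u\Delta^{-1}\b u$, whose Gateaux derivatives must be shown skew-free after forming the adjoint; this is a finite but delicate computation. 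Once $\b\omega_3$ is confirmed as a gradient, Proposition~\ref{P:2-1} supplies $\b J_3$ via \eqref{def:pot}, and the induction on \eqref{diKP:omega recur} takes over for all $s\ge 4$, completing part~(1); I would note that parts~(2) and~(3), on the Poisson-bracket algebra and the two sets of conserved quantities, parallel Theorem~\ref{T:3-3} of the continuous case and the symmetry algebra of Corollary~\ref{thm:diKP sym}, and would be handled by the same translation of the flow algebra \eqref{dKP:alg} into Hamiltonians through Proposition~\ref{P:2-2-0}.
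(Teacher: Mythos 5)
Your proposal matches the paper's proof essentially verbatim: the same induction built on the recursion \eqref{dnKP:recur}, applying $\partial_{\b x}^{-1}$, the Noether identity of Lemma \ref{lem:master eq} and the gradient formula of Lemma \ref{lem:grad} to show each $\b\omega_s$ is a gradient function, with the same direct verification of $\b\omega_s'=\b\omega_s'^*$ for $s=1,2,3$ (forced because $\b\sigma_3$ is not reachable from the recursion) and Proposition \ref{P:2-1} supplying the potentials $\b J_s$, including the $s=3$ integral formula. The only slip is cosmetic: the theorem at hand has no parts (2)--(3) --- the Poisson-bracket algebra and conserved quantities belong to Theorem \ref{T:4-4} --- but this does not affect your argument for the stated result.
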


\begin{proof}
The proof is quite similar to the previous theorem. Here we need to start from the recursive relation \eqref{dnKP:recur}.
We note that $\b\sigma_3$ can not be derived from \eqref{dnKP:recur}.
By direct verification we can find $\b\omega_s'=\b\omega_s'^*$ holds for $s=1,2,3$.
Now we suppose that $\b\omega_s$ is a gradient function.
Then, if $s>2$, from the recursive relation \eqref{dnKP:recur} we have
\begin{align*}
\b\omega_{s+1}&=\partial_{\b x}^{-1} \b \sigma_{s+1}
=\frac{1}{h}\biggl(\frac{1}{s-2}\,\partial_{\b x}^{-1}{\llbracket}\b\sigma_{s},\b\sigma_{2}{\rrbracket}-\partial_{\b x}^{-1}\b \sigma_s\biggr) \\
&=\frac{1}{h}\biggl(\frac{1}{s-2}\,\partial_{\b x}^{-1}(\partial_{\b x}\,\b\omega_s' \b\sigma_2 -\b\sigma_2'\partial_{\b x}\,\b\omega_s)-\b\omega_s\biggr)\\
&=\frac{1}{h}\biggl(\frac{1}{s-2}\,\partial_{\b x}^{-1}(\partial_{\b x}\,\b\omega_s' \b\sigma_2 + \partial_{\b x}\b\sigma_2'^*\,\b\omega_s)-\b\omega_s\biggr)\\
&=\frac{1}{h}\biggl(\frac{1}{s-2}\,(\b\omega_s' \b\sigma_2 + \b\sigma_2'^*\,\b\omega_s)-\b\omega_s\biggr)\\
&=\frac{1}{h}\biggl(\frac{1}{s-2}\,{\rm grad}(\b\omega_{s},\b\sigma_2)-\b\omega_s\biggr),
\end{align*}
where we have made use of Lemma \ref{lem:master eq} and Lemma \ref{lem:grad}.
Since $\b \omega_s$ is a gradient function, so is $\b\omega_{s+1}$,
and therefore \eqref{dnKP:Ham} holds.
The Hamiltonian $\b J_s$ is defined following Proposition \ref{P:2-1} and Lemma \ref{lem:grad}.
\end{proof}

\begin{cor}\label{C:4-3}
$\partial_{\b x}$ is a Noether operator for both isospectral   D$\Delta$KP hierarchy \eqref{diKP:hie}
and non-isospectral D$\Delta$KP hierarchy \eqref{dnKP:hie}.
\end{cor}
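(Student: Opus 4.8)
The plan is to reduce the defining relation \eqref{def:Noether} of a Noether operator to the operator identity $\b K_s'\partial_{\b x}+\partial_{\b x}\b K_s'^*=0$ (together with its non-isospectral analogue), and then to read off this identity directly from the Hamiltonian structures established in Theorems \ref{thm:diKP Ham} and \ref{thm:dnKP Ham}.

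First I would specialize \eqref{def:Noether} to $\Gamma=\partial_{\b x}$. Because $\partial_{\b x}$ has constant coefficients, it carries neither an explicit $\b t_s$ dependence nor any dependence on $\b u$, so both $\tilde\partial_{\b t_s}\partial_{\b x}$ and $\partial_{\b x}'[\,\cdot\,]$ vanish. Hence for the isospectral flow $\b u_{\b t_s}=\b K_s$ the Noether condition collapses to
\[
\b K_s'\partial_{\b x}+\partial_{\b x}\b K_s'^*=0,
\]
and for the non-isospectral flow $\b u_{\b t_s}=\b \sigma_s$ it collapses to
\[
\b \sigma_s'\partial_{\b x}+\partial_{\b x}\b \sigma_s'^*=0.
\]

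Next I would substitute the Hamiltonian representations. Theorem \ref{thm:diKP Ham} gives $\b K_s=\partial_{\b x}\b\gamma_s$ with $\b\gamma_s=\partial_{\b x}^{-1}\b K_s$ a gradient function, whence $\b K_s'=\partial_{\b x}\b\gamma_s'$ and, using $\partial_{\b x}^*=-\partial_{\b x}$ for the inner product \eqref{dKP:inn prod}, $\b K_s'^*=-\b\gamma_s'^*\partial_{\b x}$. Therefore
\[
\b K_s'\partial_{\b x}+\partial_{\b x}\b K_s'^*
=\partial_{\b x}\b\gamma_s'\partial_{\b x}-\partial_{\b x}\b\gamma_s'^*\partial_{\b x}
=\partial_{\b x}(\b\gamma_s'-\b\gamma_s'^*)\partial_{\b x}=0,
\]
since $\b\gamma_s'=\b\gamma_s'^*$. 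The non-isospectral case is word for word the same, with $\b \sigma_s=\partial_{\b x}\b\omega_s$ and the gradient identity $\b\omega_s'=\b\omega_s'^*$ furnished by Theorem \ref{thm:dnKP Ham}.

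In this way the corollary becomes a direct consequence of the two preceding Hamiltonian theorems: once $\partial_{\b x}^{-1}\b K_s$ and $\partial_{\b x}^{-1}\b \sigma_s$ are known to be gradients, the Noether property requires no further computation. I do not expect a genuine obstacle here, since the substantive work---the skew-symmetry of $\partial_{\b x}$ together with Lemma \ref{lem:master eq} and the inductive gradient arguments---has already been carried out. The only points needing mild care are the reduction of \eqref{def:Noether} (checking that the constant operator $\partial_{\b x}$ produces no $\tilde\partial_{\b t_s}$ or Gâteaux term) and the elementary relation $\partial_{\b x}^*=-\partial_{\b x}$ obtained by summation/integration by parts in the inner product \eqref{dKP:inn prod}.
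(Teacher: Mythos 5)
Your proposal is correct and takes essentially the same route as the paper: the paper likewise reduces the Noether condition to the identity $\b K_s'\partial_{\b x}+\partial_{\b x}\b K_s'^*=0$ (and its analogue $\b \sigma_s'\partial_{\b x}+\partial_{\b x}\b \sigma_s'^*=0$), and verifies it by writing $\b K_s'=(\partial_{\b x}\b\gamma_s)'=\partial_{\b x}\b\gamma_s'$ and invoking the gradient identities $\b\gamma_s'=\b\gamma_s'^*$ and $\b\omega_s'=\b\omega_s'^*$ from Theorems \ref{thm:diKP Ham} and \ref{thm:dnKP Ham}. The details you make explicit---the vanishing of $\tilde\partial_{\b t_s}\partial_{\b x}$ and of the G\^ateaux term for the constant-coefficient operator $\partial_{\b x}$, and the skew-adjointness $\partial_{\b x}^*=-\partial_{\b x}$ supplying the sign in $\b K_s'^*=-\b\gamma_s'^*\partial_{\b x}$---are exactly the steps the paper uses implicitly.
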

\begin{proof}
Consider an arbitrary isospectral equation
\be
\b u_{\b t_s}=\b K_s
\ee
in \eqref{diKP:hie}.
We only need to prove
\be
\b K_s' \partial_{\b x}+\partial_{\b x} \b K_s'^*=0.
\label{4.42}
\ee
In fact,
\begin{align*}
\b K_s' \partial_{\b x}+\partial_{\b x} \b K_s'^*
=(\partial_{\b x}\b\gamma_s)'\partial_{\b x}+\partial_{\b x}(\partial_{\b x}\b\gamma_s)'^*
=\partial_{\b x}\b\gamma_s'\partial_{\b x}-\partial_{\b x}\b\gamma_s'^*\partial_{\b x},
\end{align*}
which is zero due to $\b\gamma_s$ being a gradient function,
i.e. $\b\gamma_s'=\b\gamma_s'^*$.
In a same way and noting that $\b\omega_s'=\b\omega_s'^*$, we can prove that
\be
\b \sigma_s' \partial_{\b x}+\partial_{\b x} \b \sigma_s'^*=0,
\label{eq-sg-m}
\ee
which means $\partial_{\b x}$ is also a Noether operator of the  isospectral   equation
$\b u_{\b t_s}=\b \sigma_s$.
\end{proof}

Now we reach to the final theorem of this section.

\begin{thm}\label{T:4-4}
Each equation
\be
\b u_{\b t_{s}}=\b K_{s}
\label{eq-s}
\ee
in the isospectral D$\Delta$KP hierarchy \eqref{diKP:hie}
has two sets of conserved quantities
\be\label{diKP:cq}
\{\b H_{l}\},~~
\{\b I^{s}_{r}=s\b t_s (h\b H_{s+r-1}+\b H_{s+r-2})+ \b J_r\},
\ee
where $\b H_l$ and $\b J_r$ are defined in \eqref{diKP:H recur} and \eqref{diKP:J recur}, respectively.
They generate a Lie algebra w.r.t. Poisson bracket $\{\cdot, \cdot\}$ with basic structure
\bse\label{diKP:alg cq}
\begin{align}
& \{\b H_{l},\b H_{r}\}= 0,\label{diKP:cq-1}\\
&\{\b H_l,\b I_r^s\}=l\,(h\b H_{l+r-1}+\b H_{l+r-2}),\label{diKP:cq-2}\\
&\{\b I_l^s,\b I_r^s\}=(l-r)(h\b I_{l+r-1}^s+\b I_{l+r-2}^s),\label{diKP:cq-3}
\end{align}
\ese
where $l,r,s\geq 1$ and we set $\b H_0=\b I^{s}_{0}=0$.
\end{thm}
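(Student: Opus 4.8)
The plan is to exploit the Lie-algebra homomorphism between the D$\Delta$KP flows and their Hamiltonians, so that the flow algebra of Theorem \ref{thm:dKP alg} transports directly into a Poisson algebra of the $\b H_l$ and $\b J_r$, and then to reduce the stated $\{\b H_l\}$--$\{\b I^s_r\}$ relations to that intermediate algebra by a purely bilinear computation. Throughout I use that $\b K_s=\partial_{\b x}\,\mathrm{grad}\,\b H_s$ and $\b\sigma_s=\partial_{\b x}\,\mathrm{grad}\,\b J_s$ from Theorems \ref{thm:diKP Ham} and \ref{thm:dnKP Ham}, that $\b\gamma_s,\b\omega_s$ are genuine gradient functions, and that $\partial_{\b x}$ is skew-symmetric and injective on rapidly-decreasing functions.

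The central step is the homomorphism identity
\be
\llbracket \partial_{\b x}\,\mathrm{grad}\,\b P,\ \partial_{\b x}\,\mathrm{grad}\,\b Q\rrbracket=\partial_{\b x}\,\mathrm{grad}\,\{\b P,\b Q\},
\ee
valid for any functionals $\b P,\b Q$ admitting gradients. To prove it I set $\b p=\mathrm{grad}\,\b P$, $\b q=\mathrm{grad}\,\b Q$, expand $\{\b P,\b Q\}'[\b g]=(\b p'[\b g],\partial_{\b x}\b q)+(\b p,\partial_{\b x}\b q'[\b g])$, and move adjoints using $\b p'^*=\b p'$, $\b q'^*=\b q'$ together with $\partial_{\b x}^*=-\partial_{\b x}$; this yields $\mathrm{grad}\,\{\b P,\b Q\}=\b p'[\partial_{\b x}\b q]-\b q'[\partial_{\b x}\b p]$, while the constant coefficients of $\partial_{\b x}$ give $\llbracket \partial_{\b x}\b p,\partial_{\b x}\b q\rrbracket=\partial_{\b x}(\b p'[\partial_{\b x}\b q]-\b q'[\partial_{\b x}\b p])$, whence the identity. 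Applying it with $\b P,\b Q$ chosen among the $\b H_l,\b J_r$, the left-hand sides become $\llbracket\b K_l,\b K_r\rrbracket$, $\llbracket\b K_l,\b\sigma_r\rrbracket$, $\llbracket\b\sigma_l,\b\sigma_r\rrbracket$, which are known from Theorem \ref{thm:dKP alg}. Cancelling the injective $\partial_{\b x}$, matching gradients, and fixing the integration constants by the normalization that all functionals vanish at $\b u=0$, I obtain the intermediate Poisson algebra $\{\b H_l,\b H_r\}=0$, $\{\b H_l,\b J_r\}=l(h\b H_{l+r-1}+\b H_{l+r-2})$ and $\{\b J_l,\b J_r\}=(l-r)(h\b J_{l+r-1}+\b J_{l+r-2})$.

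For conservation I use that along $\b u_{\b t_s}=\b K_s$ any $I$ with a gradient obeys $\tfrac{d I}{d\b t_s}=\t\partial_{\b t_s}I+(\mathrm{grad}\,I,\b K_s)=\t\partial_{\b t_s}I+\{I,\b H_s\}$, the last equality being the definition of the Poisson bracket with $\b K_s=\partial_{\b x}\,\mathrm{grad}\,\b H_s$. Since $\b H_l$ has no explicit $\b t_s$ and $\{\b H_l,\b H_s\}=0$, it is conserved. For $\b I^s_r$ the explicit derivative $\t\partial_{\b t_s}\b I^s_r=s(h\b H_{s+r-1}+\b H_{s+r-2})$ is cancelled exactly by $\{\b I^s_r,\b H_s\}=-s(h\b H_{s+r-1}+\b H_{s+r-2})$, read off from \eqref{diKP:cq-2} with $l=s$; hence $\b I^s_r$ is conserved too. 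This cancellation is the precise counterpart of the one making $\b\tau^s_r$ a symmetry in Corollary \ref{thm:diKP sym}.

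Finally the stated algebra follows by bilinearity of $\{\cdot,\cdot\}$, treating $\b t_s$ as a constant. The relation $\{\b H_l,\b H_r\}=0$ is immediate, and in $\{\b H_l,\b I^s_r\}$ every $\b H$--$\b H$ bracket drops out, leaving $\{\b H_l,\b J_r\}=l(h\b H_{l+r-1}+\b H_{l+r-2})$, i.e. \eqref{diKP:cq-2}. The only real work is $\{\b I^s_l,\b I^s_r\}$: writing $M_l=h\b H_{s+l-1}+\b H_{s+l-2}$ and expanding $\b I^s_l=s\b t_s M_l+\b J_l$ into four brackets, the $M_l$--$M_r$ term vanishes, the $\b J_l$--$\b J_r$ term reproduces the $\b J$-part of the target, and the two cross terms, combined through antisymmetry and $\{\b H_a,\b J_b\}=a(h\b H_{a+b-1}+\b H_{a+b-2})$, collapse to $s\b t_s(l-r)(h^2\b H_{s+l+r-2}+2h\b H_{s+l+r-3}+\b H_{s+l+r-4})$, which is exactly the $\b H$-part of $(l-r)(h\b I^s_{l+r-1}+\b I^s_{l+r-2})$. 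I expect the main obstacle to be this index bookkeeping in the cross terms, where the shifted structure constants $l(h\,\cdot_{\,l+r-1}+\cdot_{\,l+r-2})$ must conspire to close the algebra; establishing the homomorphism identity and justifying the cancellation of the integration constant are the other delicate points, but both are controlled by Lemma \ref{lem:grad} and the vanishing-at-$\b u=0$ normalization.
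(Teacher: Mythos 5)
Your proof is correct, and its load-bearing algebraic step coincides with the paper's: both pull the flow algebra \eqref{dKP:alg} back through $\partial_{\b x}$ to the Poisson level, and your homomorphism identity $\j \partial_{\b x}\,\mathrm{grad}\,\b P,\,\partial_{\b x}\,\mathrm{grad}\,\b Q\k=\partial_{\b x}\,\mathrm{grad}\,\{\b P,\b Q\}$ is precisely the paper's inline manipulation (self-adjointness $\b\gamma_l'=\b\gamma_l'^*$, $\b\omega_r'=\b\omega_r'^*$, the Noether identity \eqref{eq-sg-m}, and Lemma \ref{lem:grad}) packaged once and for all. Where you genuinely diverge is the conservation argument and the logical order. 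The paper proves conservation first, with no Poisson computation: $\{\b K_l\}$ and $\{\b\tau^s_r\}$ are symmetries, the Noether operator $\partial_{\b x}$ (Corollary \ref{C:4-3}) maps them to conserved covariants $\b\gamma_l$, $\b\vartheta^s_r$, and Proposition \ref{P:2-2} lifts these to their potentials $\b H_l$, $\b I^s_r$; only afterwards is $\{\b H_l,\b H_r\}=\frac{d\b H_l}{d\b t_r}=0$ deduced \emph{from} conservation. You reverse the dependency: the homomorphism yields all of \eqref{diKP:alg-cc} first (including $\{\b H_l,\b H_r\}=0$, since $\j\b K_l,\b K_r\k=0$ forces $\mathrm{grad}\,\{\b H_l,\b H_r\}=0$ and the constant dies at $\b u=0$), and conservation then drops out of $\frac{dI}{d\b t_s}=\t\partial_{\b t_s}I+\{I,\b H_s\}$ with the explicit cancellation $s(h\b H_{s+r-1}+\b H_{s+r-2})-s(h\b H_{s+r-1}+\b H_{s+r-2})=0$ for $\b I^s_r$ --- the Hamiltonian mirror of the $\b\tau^s_r$ cancellation, as you note. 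Your route buys independence from the symmetry/conserved-covariant machinery, and treats the integration constant more carefully than the paper's main text does (the paper addresses it only in Appendix B, killing a possible $\b t_s$-dependent constant by a conservation argument rather than your vanishing-at-$\b u=0$ normalization; both are sound). The paper's route buys economy by reusing already-proved structure. Your closing bilinear expansion of $\{\b I^s_l,\b I^s_r\}$, which the paper leaves unwritten, checks out: the cross terms sum to $s\b t_s(l-r)\bigl(h^2\b H_{s+l+r-2}+2h\b H_{s+l+r-3}+\b H_{s+l+r-4}\bigr)$, exactly the $\b H$-part of $(l-r)(h\b I^s_{l+r-1}+\b I^s_{l+r-2})$. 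One cosmetic point: in the conservation step you quote \eqref{diKP:cq-2} with $l=s$ before formally deriving it; since \eqref{diKP:cq-2} follows from \eqref{diKP:alg-cc} by bilinearity alone, the circularity is purely presentational, but the steps should be stated in that order.
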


\begin{proof}
First, let us prove that both $\b H_l$ and $\b I^s_r$ are conserved quantities of equation \eqref{eq-s}.
Noting that
\[
 \{\b K_{l}\},~~
 \{\b\tau^{s}_{r}=s\b t_{s}(h\b K_{s+r-1}+\b K_{s+r-2})+\b \sigma_{r}\}
\]
are symmetries of equation \eqref{eq-s},
and $\partial_{\b x}$ is a Noether operator of \eqref{eq-s} (i.e. $\partial_{\b x}^{-1}$
maps symmetries to conserved covariants for \eqref{eq-s}),
\be
 \{\b \gamma_{l}=\partial^{-1}_{\b x} \b K_l\},~~
 \{\b\vartheta^{s}_{r}=\partial^{-1}_{\b x} \b\tau^{s}_{r}=s\b t_{s}(h\b \gamma_{s+r-1}+\b \gamma_{s+r-2})+\b \omega_{r}\}
\ee
are conserved covariants of equation \eqref{eq-s}.
Since both $\{\b \gamma_{l}\}$ and $\{\b\vartheta^{s}_{r}\}$ are gradient functions
and their potentials are respectively $\{\b H_l\}$ and $\{\b I^s_r\}$
defined in \eqref{diKP:cq},
both $\{\b H_l\}$ and $\{\b I^s_r\}$ are conserved quantities of equation \eqref{eq-s}
thanks to Proposition \ref{P:2-2}.
In addition, obviously, $\b H_l$ is a conserved quantity of the whole isospectral D$\Delta$KP hierarchy \eqref{diKP:hie}
because $\b K_l$ is a symmetry of the whole isospectral  hierarchy.

Next, Let us prove the following relation:
\bse\label{diKP:alg-cc}
\begin{align}
&\{\b H_l,\b H_r\}=0,\label{diKP:cc-1}\\
&\{\b H_l,\b J_r\}=l\,(h\b H_{l+r-1}+\b H_{l+r-2}),\label{diKP:cc-2}\\
&\{\b J_l,\b J_r\}=(l-r)(h\b J_{l+r-1}+\b J_{l+r-2}).\label{diKP:cc-3}
\end{align}
\ese
In fact,
\[
\{\b H_l,\b H_r\} =\Bigl(\frac{\delta\b H_l}{\delta \b u},\,\partial_{\b x}\frac{\delta\b H_r}{\delta \b u}\Bigr)
=(\b \gamma_l,\,\partial_{\b x}\b \gamma_r)=(\b\gamma_l,\b K_r)
 =\b H_l'[\b K_r]=\b H_l'[\b u_{\b t_r}]=\frac{d\b H_l}{d{\b t_r}}.
\]
Since $\b H_l$ is a conserved quantity of the whole isospectral D$\Delta$KP hierarchy,
we know that $\frac{d \b H_l}{d {\b t_r}}=0$ and consequently $\{\b H_l,\b H_r\}=0$.
This also means the Hamiltonians $\{\b H_l\}$  of equation \eqref{eq-s} are involutive  w.r.t. the Poisson bracket $\{\cdot,\,\cdot\}$.
To derive \eqref{diKP:cc-2}, let us look at the relation
\[\llbracket \b K_l,\b\sigma_r \rrbracket=l\,(h\b K_{l+r-1}+\b K_{l+r-2}).\]
On one hand,
\[\partial_{\b x}^{-1}\llbracket \b K_l,\b\sigma_r \rrbracket
=\partial_{\b x}^{-1}(\partial_{\b x}\b\gamma_l'\b\sigma_r-\b\sigma_r'\partial_{\b x}\b\gamma_l)
=\b\gamma_l'^*\b\sigma_r+\b\sigma_r'^*\b\gamma_l=\mathrm{grad}(\b\gamma_l,\b\sigma_r),
\]
where we have made use of $\b\gamma_l'=\b\gamma_l'^*$, \eqref{eq-sg-m} and Lemma \ref{lem:grad}.
On the other hand,
\[l\,\partial_{\b x}^{-1}(h\b K_{l+r-1}+\b K_{l+r-2})=l\,(h\b \gamma_{l+r-1}+\b \gamma_{l+r-2}).\]
Thus we have
\[
(\b\gamma_l,\b\sigma_r)=l\,(h\b H_{l+r-1}+\b H_{l+r-2}).
\]
Meanwhile, noting that
\[
\{\b H_l,\b J_r\} =\Bigl(\frac{\delta\b H_l}{\delta \b u},\,\partial_{\b x}\frac{\delta\b J_r}{\delta \b u}\Bigr)
=(\b \gamma_l,\,\partial_{\b x}\b \omega_r)=(\b\gamma_l,\b \sigma_r),
\]
we immediately get \eqref{diKP:cc-2}.
\eqref{diKP:cc-3} can be proved similarly. From the relation
\[\llbracket \b \sigma_l,\b\sigma_r \rrbracket=(l-r)(h\b \sigma_{l+r-1}+\b \sigma_{l+r-2})\]
we have
\[
(\b\omega_l,\b\sigma_r)=(l-r)(h\b J_{l+r-1}+\b J_{l+r-2}).
\]
Besides,
\[
\{\b J_l,\b J_r\} =\Bigl(\frac{\delta\b J_l}{\delta \b u},\,\partial_{\b x}\frac{\delta\b J_r}{\delta \b u}\Bigr)
=(\b \omega_l,\,\partial_{\b x}\b \omega_r)=(\b\omega_l,\b \sigma_r).
\]
A combination of the above two formulae yields \eqref{diKP:cc-3}.

In the final step, the   relation \eqref{diKP:alg cq} can easily be verified
by using the algebra \eqref{diKP:alg-cc}.
Obviously,  \eqref{dKP:alg}, \eqref{diKP:alg sym}, \eqref{diKP:alg cq} and \eqref{diKP:alg-cc} are of same structures.
We complete the proof.
\end{proof}

\section{Continuum limits}\label{S:5}

\subsection{Backgrounds}

Let us write the KP equation and the D$\Delta$KP equation below,
\be
u_{t_3}=\frac{1}{4}u_{xxx}+3uu_x+\frac{3}{4}\partial_x^{-1}u_{yy},
\label{KP-s5}
\ee
\begin{equation}
\b u_{\b t_2}=(1+2\Delta^{-1})\b u_{\b x\b x}-2h^{-1}\b u_{\b x}+2\b u\b u_{\b x}.
\label{DDKP-s5}
\end{equation}
Following Miwa's transformation, or in practice, comparing exponential parts in the solution of these two equations,
one can introduce coordinates relation
\be
\label{Miwa trans}
x=\b x+\tau, ~~y=\b t_{2}-\frac{h}{2}\tau,~~t_3=\frac{h^{2}}{3}\tau, ~~~
\mathrm{with}~  \tau=nh.
\ee
The continuum limit is then conducted through replacing $\b u$ by $hu$ and
taking $n\to \infty$ and $h\to 0$ simultaneously.
The result is that the KP equation \eqref{KP-s5} appears as the leading term of the D$\Delta$KP equation \eqref{DDKP-s5}.
Similar relationship exists in non-commutative case\cite{LNT-ncKP}.

However, the continuum limit \eqref{Miwa trans} does not fit the whole D$\Delta$KP hierarchy.
It also breaks both basic algebraic structures and the Hamiltonian structure of the D$\Delta$KP equation.
In fact, to keep the Hamiltonian structure in a continuum limit, one at least needs $\b t_m\propto t_m$.
We need a new scheme for continuum limits.

\subsection{Plan for continuum limit}
\label{S:5.2}

Our plan for continuum limit is as following,
\begin{itemize}
\item{$n\rightarrow\infty$ and $h\rightarrow0$ simultaneously such that $nh$ is finite.}
\item{Introduce auxiliary continuous variable\footnote{In fact, we can take $\tau=\tau_0+nh$ with constant $\tau_0$.
Here we take $\tau_0=0$ for convenience and without loss of generality.}
  \begin{equation}
  \tau=nh,
  \label{tau}
  \end{equation}
  and thus, function $f(n+j)$ is mapped to $f(\tau+jh)$.}
\item{Define  coordinates relation
  \begin{equation}
  x=\b x+\tau,~~y=-\frac{1}{2}h\tau,~~t_m=\b t_m,
  \label{coord-rela}
  \end{equation}
  based on which one has
  \begin{equation}
  \partial_{\b x}=\partial_x, ~~\partial_{\tau} =\partial_x-\frac{1}{2}h\partial_y,~~ \partial_{\b t_m}=\partial_{t_m}.
  \end{equation}
  }
\item{ Define functions relation
  \begin{subequations}
  \begin{align}
  & \b u_0(n, \b x, \b t_m)= \b u(n, \b x, \b t_m)=h\,u(x, y, t_m),\\
  & \b u_j(n, \b x, \b t_m)= u_{j+1}(x, y, t_m),~~(j=1,2,\cdots).
  \end{align}
  \label{ub-u-rela}
  \end{subequations}
  }
\end{itemize}

\subsection{Pseudo-difference operator and D$\Delta$KP equation}
\label{S:5.3}

Under the continuum limit plan given in the above subsection, the pseudo-difference operator $\b L$
and pseudo-differential operator $L$ satisfy
\be
\b L=L+O(h).
\label{Lb-L}
\ee
In fact, acting $\Delta$ on a test function $f(n)$ and making use of Taylor expansion, one finds
\begin{align}
\Delta &=h \partial_{\tau}+\frac{1}{2!}h^2\partial_{\tau}^2+\frac{1}{3!}h^3\partial_{\tau}^3+O(h^4) \nonumber \\
&=h\partial_x+\frac{h^2 }{2}(\partial_x^2-\partial_y)+\frac{h^3 }{6}(\partial_x^3-3\partial_x\partial_y)+O(h^4),
\label{Delta-1}
\end{align}
and further,
\bse
\begin{align}
\Delta^{-1}&=h^{-1}\partial_x^{-1}+\left(\frac{1}{2}\partial_x^{-2}\partial_y
-\frac{1}{2}\right)+h\left(\frac{1}{2}\partial_x^{-3}\partial_y^2+\frac{1}{12}\partial_x\right)+O(h^2),\\
\Delta^{-2}&=h^{-2}\partial_x^{-2}+h^{-1}\left(\partial_x^{-3}\partial_y-\partial_x^{-1}\right)
+\left(\frac{3}{4}\partial_x^{-4}\partial_y^2-\frac{1}{2}\partial_x^{-2}\partial_y-\frac{5}{12}\right)+O(h),\\
&~~\cdots\cdots. \nonumber
\end{align}
\ese
Thus it is clear that
\be
h^{-j}\Delta^{j} =\partial_x^{j}+O(h),~~ j\in \mathbb{Z}.
\ee
Making use of this together with the relation \eqref{ub-u-rela} one immediately reaches to \eqref{Lb-L}.

Let us have a look at some lower order flows. In the continuum limit designed in Sec.\ref{S:5.2}, we find
\bse
\begin{align}
\b K_1=&h u_x=h\, K_1, \\
\b K_2=&h u_y+O(h^2)=h\, K_2+O(h^2), \\
\b K_3=&h\biggl(\frac{1}{4}u_{xxx}+3uu_x+\frac{3}{4}\partial_x^{-1}u_{yy}\biggr)+O(h^2)=h\, K_3+O(h^2).
\end{align}
\ese
It is not the so-called D$\Delta$KP equation $\b u_{\b t_2}=\b K_2$ but the next member in the D$\Delta$KP hierarchy,
i.e. $\b u_{\b t_3}=\b K_3$ that goes to the continuous KP equation $u_{t_3}=K_3$ in our continuum limit.

For the first three  non-isospectral flows, we find
\bse
\begin{align}
\b\sigma_1&=h(2yK_{1})+O(h)=h\, \sigma_1+O(h^2), \label{sigma1-b-nb}\\
\b\sigma_2&=h(2yK_{2}+xK_1+2u)+O(h^2)=h\, \sigma_2+O(h^2), \\
\b\sigma_3&=h(2yK_{3}+xK_2+2\partial_x^{-1}u_{y}-u_x)+O(h^2)=h\, \sigma_3+O(h^2).
\end{align}
\ese
Let us, taking \eqref{sigma1-b-nb} as an example, explain how the variable $y$ appears. In fact,
\begin{align*}
\b \sigma_1= h\b x \b u_{\b x} &=h^2(x-\tau)u_x=h^2 xu_x+2h y u_x\\
                               &=h\,\sigma_1+O(h^2).
\end{align*}

In brief, we have seen that, in our continuum limit, the first three D$\Delta$KP isospectral and non-isospectral flows
go to their continuous counterparts and the leading terms are of $O(h)$.

\subsection{Degrees}

In order to investigate the continuum limit of the whole  D$\Delta$KP hierarchies together with their
integrable properties, let us introduce \textit{degrees} for functions (cf.\cite{ZC-SAM-2010-II}).

\begin{defn}
Under the plan described in Sec.\ref{S:5.2}, a function $\b f(n,\b x, \b t_m)$ (or an operator $\b P(\b u, \Delta)$) can be expanded into
a series in terms of $h$,
where the order of the leading term is called the \textit{degree} of $\b f(n,\b x, \b t_m)$,
denoted by $\mathrm{deg}~\b f$.
\end{defn}

By this definition and previous discussion, we have
\begin{subequations}
\begin{align}
&\mathrm{deg}\, \b L=0,\\
&\mathrm{deg}\, \Delta^{j}=j,~~ j\in \mathbb{Z},\\
&\mathrm{deg}\, \b u=1,~~\mathrm{deg}\, \b u_j=0,~~(j=1,2,\cdots),
\end{align}
\end{subequations}
and
\[\mathrm{deg}\, \b K_j=1,~~ \mathrm{deg}\, \b \sigma_j=1,~~ (j=1,2,3).\]

Hereafter in this paper, by continuum limit we mean the one we designed in Sec.\ref{S:5.2},
without any confusion.
Let us first give some properties about degrees of functions and operations.

\begin{prop}
\label{P:5-1}
For the functions $\b f(\b u), ~\b g(\b u)$, it holds that
\begin{subequations}
\begin{align}
& \mathrm{deg}\,\b f\cdot \b g =\mathrm{deg}\, \b f+\mathrm{deg}\,\b g,\\
& \deg(\b f+\b g)\geq \mathrm{min}\{\deg \b f, \, \deg \b g\}. \label{deg-fg}
\end{align}
\end{subequations}
\end{prop}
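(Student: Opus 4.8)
The plan is to recognize the $h$-degree as an order function (a discrete valuation) on the ring of functions that admit an expansion in powers of $h$, so that both assertions follow simply from tracking leading terms. First I would fix notation by writing, directly from the definition of degree,
\be
\b f=h^{d_f}f_0+h^{d_f+1}f_1+\cdots,\qquad \b g=h^{d_g}g_0+h^{d_g+1}g_1+\cdots,
\ee
where $d_f=\deg\b f$ and $d_g=\deg\b g$, and where the leading coefficients $f_0$ and $g_0$ are, by the very meaning of ``leading term'', nonzero functions in $\b{\mathcal{F}}$.

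For the product rule I would simply multiply the two series. The lowest power of $h$ that can appear is $h^{d_f+d_g}$, with coefficient $f_0g_0$. The one point that genuinely needs to be checked---and the main (though mild) obstacle---is that $f_0g_0$ does not vanish: since $f_0$ and $g_0$ are honest, not-identically-zero functions, their pointwise product is again not identically zero, i.e. the function space carries no zero divisors. Granting this, the leading term of $\b f\cdot \b g$ is exactly of order $h^{d_f+d_g}$, which gives $\deg\,\b f\cdot \b g=\deg\b f+\deg\b g$.

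For the additive estimate I would set $d=\min\{d_f,d_g\}$ and collect powers of $h$ in $\b f+\b g$. Since neither summand contains a power of $h$ below $h^{d}$, the sum cannot either, so $\deg(\b f+\b g)\ge d$ at once. I would then record the two cases: when $d_f\neq d_g$ the coefficient of $h^{d}$ is precisely the nonzero leading coefficient of the lower-degree summand, so equality holds; when $d_f=d_g=d$ the coefficient of $h^{d}$ is $f_0+g_0$, which may cancel, and then the degree strictly exceeds $d$. In every case the inequality \eqref{deg-fg} follows, with strictness arising exactly from cancellation of equal-order leading terms. None of these steps requires more than bookkeeping of the expansions, so the whole argument is short once the no-zero-divisor remark is in place.
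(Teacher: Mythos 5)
Your argument is correct, and there is in fact nothing in the paper to compare it against: Proposition \ref{P:5-1} is stated without proof, as an immediate consequence of the definition of degree, so your write-up supplies detail the authors left implicit. The series bookkeeping is exactly the intended reading --- the lowest power in $\b f\cdot\b g$ is $h^{d_f+d_g}$ with coefficient $f_0g_0$, and your two-case analysis of the sum, with strictness arising precisely from cancellation of equal-order leading terms, is the right way to see why the additive statement is only an inequality while the multiplicative one is an equality. The one place to be more careful is the assertion that ``the function space carries no zero divisors.'' For arbitrary elements of $\b{\mathcal{F}}$, i.e.\ general functions of $\b u$, this is false: two nonzero smooth functions with disjoint supports multiply to zero. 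What rescues the product formula in the context of this paper is that every object to which it is actually applied --- the flows $\b K_m$, $\b\sigma_m$, the coefficients of $\b A_m$ and $\b B_m$, the gradients $\b\gamma_m$, $\b\omega_m$ --- is a differential-difference polynomial expression in $\b u$ (possibly involving $\Delta^{-1}$ and $\partial_{\b x}^{-1}$), so the leading coefficients $f_0,g_0$ lie in a ring of differential polynomials, which is an integral domain. You should therefore either restrict the proposition to that class of functions, or read $\mathrm{deg}\,\b f\cdot\b g=\mathrm{deg}\,\b f+\mathrm{deg}\,\b g$ with the implicit hypothesis $f_0g_0\neq 0$; with that emendation your proof is complete, and the additive estimate \eqref{deg-fg} needs no such caveat since it is only an inequality.
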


\begin{prop}
\label{P:5-2}
For the functions $\b f(\b u)$ and $\b g(\b u)$ satisfying $\b f(\b u)|_{\b u=0}=0$ and $\b g(\b u)|_{\b u=0}=0$,
suppose that in continuum limit
\[\b f(\b u)=f(u)h^i+O(h^{i+1}),~~~
\b g(\b u)=g(u)h^j+O(h^{j+1}),
\]
i.e.
\[\deg \b f= i,~~\deg \b g=j.\]
It then holds that
\begin{align}
&\llbracket \b f(\b u),\b g(\b u)\rrbracket_{\b u}=\llbracket f(u),g(u)\rrbracket_{u}\, h^{i+j-1}+O(h^{i+j}),\label{f-g-deg1}\\
&\deg \llbracket \b f(\b u),\b g(\b u)\rrbracket_{\b u} \, \geq \deg\b f(\b u)+\deg\b g(\b u)-1.\label{f-g-deg2}
\end{align}
Here the subscripts $\b u$ and $u$ indicate the Lie brackets $\{ \llbracket \cdot,\,\cdot \rrbracket\}$
are defined based on the G\^ateaux derivatives w.r.t. $\b u$ and $u$, respectively.
\end{prop}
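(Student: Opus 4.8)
The plan is to reduce everything to the definition of the Lie product \eqref{def:Lie product}, namely $\llbracket \b f,\b g\rrbracket_{\b u}=\b f'[\b g]-\b g'[\b f]$, and to track how a single G\^ateaux derivative with respect to $\b u$ rescales under the continuum limit. The one fact I would isolate first is that perturbing $\b u$ in a direction $\b g$ is, after the substitution $\b u=hu$ coming from \eqref{ub-u-rela}, the same as perturbing $u$ in the direction $\b g/h$; concretely $\b u+\varepsilon\b g=h\bigl(u+\varepsilon\,\b g/h\bigr)$. This single algebraic rewriting is where the crucial factor $h^{-1}$ enters, and it is the source of the $-1$ in the exponent $i+j-1$.

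Then I would compute $\b f'[\b g]$ directly. Using $\b f'[\b g]=\frac{d}{d\varepsilon}\b f(\b u+\varepsilon\b g)\bigr|_{\varepsilon=0}$ from \eqref{def:G deriv} together with the rewriting above, one has $\b f'[\b g]=\frac{d}{d\varepsilon}\b f\bigl(h(u+\varepsilon\,\b g/h)\bigr)\bigr|_{\varepsilon=0}$. Applying the hypothesised expansion $\b f(h\,w)=f(w)h^i+O(h^{i+1})$ with $w=u+\varepsilon\,\b g/h$, and then differentiating in $\varepsilon$, yields $\b f'[\b g]=h^i\,f'[\b g/h]+O(h^{i+1})$, where the inner derivative is now taken with respect to $u$. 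Finally, inserting $\b g/h=g(u)h^{j-1}+O(h^j)$ and using the linearity of the G\^ateaux derivative in its direction gives $\b f'[\b g]=h^{i+j-1}f'[g]+O(h^{i+j})$.

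The same computation with $\b f$ and $\b g$ interchanged gives $\b g'[\b f]=h^{i+j-1}g'[f]+O(h^{i+j})$, and subtracting the two produces
\[
\llbracket \b f,\b g\rrbracket_{\b u}=h^{i+j-1}\bigl(f'[g]-g'[f]\bigr)+O(h^{i+j})=h^{i+j-1}\llbracket f,g\rrbracket_{u}+O(h^{i+j}),
\]
which is exactly \eqref{f-g-deg1}. The degree estimate \eqref{f-g-deg2} is then immediate: formula \eqref{f-g-deg1} shows that no power of $h$ below $h^{i+j-1}$ occurs, whence $\deg\llbracket\b f,\b g\rrbracket_{\b u}\geq i+j-1=\deg\b f+\deg\b g-1$, with strict inequality exactly when the continuous bracket $\llbracket f,g\rrbracket_u$ happens to vanish and the true leading term sits at higher order.

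The step I expect to be the genuine obstacle is the second one: justifying that the expansion $\b f(h\,w)=f(w)h^i+O(h^{i+1})$ may be applied at the perturbed argument $w=u+\varepsilon\,\b g/h$ and then differentiated, i.e.\ that the $h$-expansion is uniform enough to commute with $\frac{d}{d\varepsilon}\bigr|_{\varepsilon=0}$. In the present formal, rapidly-decreasing setting this is routine, but it is the only point where one must argue rather than merely compute; the bookkeeping of the remainders (the $O(h^{i+1})$ tail of $\b f$ contributes at order $h^{i+1}\cdot h^{j-1}=h^{i+j}$, and is therefore absorbed into $O(h^{i+j})$) should likewise be verified there.
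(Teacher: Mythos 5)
Your proposal is correct and follows essentially the same route as the paper's own proof: rewrite $\b u+\varepsilon\b g=h\bigl(u+\varepsilon(g(u)h^{j-1}+O(h^j))\bigr)$, apply the expansion $\b f(hw)=f(w)h^i+O(h^{i+1})$ at the perturbed argument, differentiate in $\varepsilon$ to get $\b f'[\b g]=f'[g]h^{i+j-1}+O(h^{i+j})$, symmetrize and subtract, then deduce \eqref{f-g-deg2} from \eqref{deg-fg}. The uniformity point you flag is genuine but is treated as routine in the paper's formal setting, exactly as you anticipate.
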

\begin{proof}
Noting that $\b u=h u$, we have
\begin{subequations}
\begin{align}
\b f'[\b g]&=\frac{d}{d\varepsilon}\b f(\b u+\varepsilon \b g(\b u))|_{\varepsilon=0}\nonumber\\
& = \frac{d}{d\varepsilon}\b f(h u+\varepsilon (g(u)h^j+O(h^{j+1}))) |_{\varepsilon=0} \nonumber \\
&= \frac{d}{d\varepsilon}\b f(h (u+\varepsilon (g(u)h^{j-1}+O(h^{j}))))|_{\varepsilon=0} \nonumber\\
&= \frac{d}{d\varepsilon}\Bigl(f(u+\varepsilon (g(u)h^{j-1}+O(h^{j})))h^i+\cdots\Bigr)\Bigr|_{\varepsilon=0}\nonumber \\
&=f'[g]h^{i+j-1}+O(h^{i+j}). \label{fg}
\end{align}
Similarly,
\be
 \b g'[\b f]=g'[f]h^{i+j-1}+O(h^{i+j}),
\ee
\end{subequations}
which, together with \eqref{fg}, yields \eqref{f-g-deg1}.
\eqref{f-g-deg2} is correct in light of \eqref{deg-fg}.
\end{proof}

\begin{prop}
\label{P:5-2+1}
If in continuum limit,
\[\b f(\b u)=f(u)h^i+O(h^{i+1}),~~~
\b g(\b u)=g(u)h^j+O(h^{j+1}),
\]
then
\begin{subequations}
\begin{align}
& (\b f(\b u),\b g(\b u))=(f(u),g(u))h^{i+j}+O(h^{i+j+1}),\label{eq-deg-inner}\\
& \deg (\b f(\b u),\b g(\b u))=\deg \b f(\b u)+\deg \b g(\b u).
\end{align}
\end{subequations}
Here on l.h.s. and r.h.s. of \eqref{eq-deg-inner} the inner products are defined
as \eqref{dKP:inn prod} for semi-discrete case and \eqref{def:inn prod} for continuous case, respectively.
This proposition also means that the degree of the semi-discrete inner product  \eqref{dKP:inn prod} is zero.
\end{prop}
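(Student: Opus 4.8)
The plan is to isolate the two independent sources of $h$-powers in the semi-discrete inner product \eqref{dKP:inn prod}: those carried by the integrand $\b f\b g$, and those produced by the summation--integration operation $\frac{h^2}{2}\sum_n\int\,\mathrm{d}\b x$ itself. For the integrand, Proposition \ref{P:5-1} applied to the hypotheses gives immediately
\[
\b f(\b u)\,\b g(\b u)=f(u)g(u)\,h^{i+j}+O(h^{i+j+1}),
\]
so $\deg(\b f\b g)=i+j$. The whole assertion then reduces to the claim that the operation $\frac{h^2}{2}\sum_n\int\,\mathrm{d}\b x$ has degree zero, i.e. that, when applied to a leading term of degree $0$, it reproduces the continuous double integral \eqref{def:inn prod} up to a correction one order higher in $h$.

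To establish the degree-zero property I would treat $\tau=nh$ as the genuine continuous variable of the limit and split the prefactor as $\frac{h^2}{2}=\frac{h}{2}\cdot h$. Writing $F(\tau)=\int_{-\infty}^{+\infty}f(u)g(u)\,\mathrm{d}\b x$, where $u=u(x,y,t_m)$ is tied to $(\b x,\tau)$ through \eqref{coord-rela}, the factor $h\sum_n$ is the Riemann sum of $\int F(\tau)\,\mathrm{d}\tau$ with mesh $h$. The change of variables $(\b x,\tau)\mapsto(x,y)$ given by $x=\b x+\tau,\ y=-\frac{1}{2}h\tau$ has Jacobian of modulus $\frac{1}{2}h$, so $\mathrm{d}x\,\mathrm{d}y=\frac{1}{2}h\,\mathrm{d}\b x\,\mathrm{d}\tau$ and hence $\frac{h}{2}\int\!\!\int fg\,\mathrm{d}\b x\,\mathrm{d}\tau=\int\!\!\int fg\,\mathrm{d}x\,\mathrm{d}y=(f,g)$ exactly. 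A crucial observation makes the discretisation harmless: since at fixed $\tau$ one has $\partial_{\b x}=\partial_x$, the $\b x$-integral of the total derivative $\partial_x(fg)$ vanishes, whence
\[
\frac{\mathrm{d}F}{\mathrm{d}\tau}=\int_{-\infty}^{+\infty}\partial_x(fg)\,\mathrm{d}\b x-\frac{h}{2}\int_{-\infty}^{+\infty}\partial_y(fg)\,\mathrm{d}\b x=-\frac{h}{2}\int_{-\infty}^{+\infty}\partial_y(fg)\,\mathrm{d}\b x=O(h).
\]
Thus $F$ is slowly varying along $\tau$, so the Riemann sum differs from $\int F\,\mathrm{d}\tau$ only by a subleading amount, and $\frac{h^2}{2}\sum_n\int fg\,\mathrm{d}\b x=(f,g)+O(h)$.

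The one genuine obstacle is the quantitative Riemann-sum estimate $h\sum_n F(nh)=\int F\,\mathrm{d}\tau+O(h)$: one must guarantee both absolute convergence of the sum and a controlled sampling error, uniformly after the $\b x$-integration. This is precisely where the standing hypothesis $\b u\in\b{\mathcal{S}}$ (rapidly decreasing) is indispensable, since it justifies the vanishing of the boundary terms used above and bounds the total variation of $F$, so that the correction is genuinely $O(h)$ relative to the leading behaviour. Feeding this back into the integrand expansion --- the $h^{i+j}$ term yielding $(f,g)h^{i+j}+O(h^{i+j+1})$ and the $O(h^{i+j+1})$ remainder surviving as $O(h^{i+j+1})$ because the operation preserves degree --- gives \eqref{eq-deg-inner}; the degree identity then follows at once whenever the leading pairing $(f,g)$ is nonzero.
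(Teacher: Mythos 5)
Your proof is correct and follows essentially the same route as the paper's: expand the integrand to $fg\,h^{i+j}+O(h^{i+j+1})$, convert $\frac{h^2}{2}\sum_n\int\mathrm{d}\b x$ into $\frac{h}{2}\int\!\!\int\mathrm{d}\b x\,\mathrm{d}\tau$, and apply the change of variables \eqref{coord-rela}, whose Jacobian factor $\frac{h}{2}$ cancels the remaining prefactor to give \eqref{eq-deg-inner}. The only difference is that you explicitly justify the Riemann-sum replacement $h\sum_n\to\int\mathrm{d}\tau$ (via the total-derivative observation $\mathrm{d}F/\mathrm{d}\tau=O(h)$ and the rapid-decay hypothesis), a step the paper performs silently, and you correctly flag that the degree identity requires the leading pairing $(f,g)$ to be nonzero.
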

\begin{proof}
First,
\begin{align*}
(\b f(\b u),\b g(\b u))&=\frac{h^2}{2}\sum_{n=-\infty}^{+\infty}\int_{-\infty}^{+\infty}\b f(\b u)\b g(\b u)\,\mathrm{d}\b x\\
&=\frac{h^2}{2}\sum_{n=-\infty}^{+\infty}\int_{-\infty}^{+\infty}( f(u) g(u)h^{i+j}+O(h^{i+j+1}))\,\mathrm{d}\b x\\
&=\frac{h}{2}\int_{-\infty}^{+\infty}\int_{-\infty}^{+\infty}( f(u) g(u)h^{i+j}+O(h^{i+j+1}))\,d\b x \mathrm{d}\tau.
\end{align*}
Next, from the coordinates transformation \eqref{coord-rela} we have the Jacobian
\[J=\frac{\partial(\b x,\tau)}{\partial(x,y)}=-\frac{2}{h}.\]
Then we have
\begin{align*}
(\b f(\b u),\b g(\b u))&=\frac{h}{2}\int_{-\infty}^{+\infty}\int_{-\infty}^{+\infty}( f(u) g(u)h^{i+j}+O(h^{i+j+1}))\,|J|\mathrm{d} x \mathrm{d} y\\
&= \int_{-\infty}^{+\infty}\int_{-\infty}^{+\infty}( f(u) g(u)h^{i+j}+O(h^{i+j+1}))\, \mathrm{d} x \mathrm{d} y\\
&=(f(u),g(u))h^{i+j}+O(h^{i+j+1}).
\end{align*}
This ends the proof.
\end{proof}

\begin{prop}
\label{P:5-2+2}
In  continuum limit if
\[\b \gamma (\b u)=\frac{\delta \b H(\b u)}{\delta \b u}=\gamma(u)h^i+O(h^{i+1}),\]
then we have
\be
\deg \b H(\b u)=\deg \b \gamma (\b u) +1.
\ee
In addition, if $\gamma(u)$ is also a gradient function, we can define
\be
H(u)=\int^1_0(  \gamma(\lambda  u),  u)\mathrm{d}\lambda,
\label{eq-H(u)}
\ee
and then we have
\be
\b H(\b u)=H(u)h^{i+1}+O(h^{i+2}), ~~ \gamma (u)=\frac{\delta  H(u)}{\delta  u}.
\label{eq-bH-H}
\ee
\end{prop}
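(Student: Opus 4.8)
The plan is to reduce the degree statement for the functional $\b H$ to the inner-product degree result already established in Proposition \ref{P:5-2+1}, by representing $\b H$ through the potential integral formula. First I would recall that, by hypothesis, $\b\gamma=\frac{\delta\b H}{\delta\b u}$ is a gradient function, so the semi-discrete analog of Proposition \ref{P:2-1} furnishes its potential (normalized to vanish at $\b u=0$, as all the Hamiltonians in Sec.\ref{S:4.4} are) explicitly as
\be
\b H(\b u)=\int_0^1(\b\gamma(\lambda\b u),\b u)\,\mathrm{d}\lambda.
\ee
This converts an assertion about a functional into one about the semi-discrete inner product, to which Proposition \ref{P:5-2+1} applies directly.

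Next I would track the degree of the integrand. Since the continuum limit imposes $\b u=hu$, one has $\lambda\b u=h(\lambda u)$, so evaluating the hypothesis $\b\gamma(\b u)=\gamma(u)h^i+O(h^{i+1})$ at $\lambda\b u$ gives $\b\gamma(\lambda\b u)=\gamma(\lambda u)h^i+O(h^{i+1})$, i.e. $\deg\b\gamma(\lambda\b u)=i$. Because $\deg\b u=1$, Proposition \ref{P:5-2+1} yields
\be
(\b\gamma(\lambda\b u),\b u)=(\gamma(\lambda u),u)h^{i+1}+O(h^{i+2}),
\ee
with the remainder uniform in $\lambda\in[0,1]$. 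Integrating over $\lambda$ and interchanging the integral with the $h$-expansion then produces
\be
\b H(\b u)=\Bigl(\int_0^1(\gamma(\lambda u),u)\,\mathrm{d}\lambda\Bigr)h^{i+1}+O(h^{i+2})=H(u)h^{i+1}+O(h^{i+2}),
\ee
which is simultaneously the claimed leading-term formula in \eqref{eq-bH-H} and the degree identity $\deg\b H=i+1=\deg\b\gamma+1$, where $H(u)$ is exactly \eqref{eq-H(u)}.

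Finally, for the relation $\gamma=\frac{\delta H}{\delta u}$ I would simply invoke Proposition \ref{P:2-1} in the continuous setting: under the added assumption that $\gamma(u)$ is a gradient function, that proposition guarantees that $H(u)=\int_0^1(\gamma(\lambda u),u)\,\mathrm{d}\lambda$ is precisely the potential of $\gamma$, hence $\gamma=\frac{\delta H}{\delta u}$. The only genuinely delicate point in the argument is the interchange of the $\lambda$-integral with the $h$-expansion; this requires the $O(h^{i+2})$ remainder in the integrand to be uniform in $\lambda$, which is automatic in the formal-expansion framework used throughout this section, but it is the step I would be most careful to justify. Everything else is a direct bookkeeping of degrees through the two potential formulae.
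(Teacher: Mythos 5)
Your proposal is correct and follows essentially the same route as the paper's proof: both represent $\b H$ through the potential integral formula of Proposition \ref{P:2-1}, apply the inner-product degree result of Proposition \ref{P:5-2+1} to $(\b\gamma(\lambda\b u),\b u)$, and integrate over $\lambda$ to obtain the expansion $\b H(\b u)=H(u)h^{i+1}+O(h^{i+2})$ and the degree identity. Your additional remarks --- the explicit scaling $\lambda\b u=h(\lambda u)$, the uniformity of the remainder in $\lambda$, and the explicit appeal to Proposition \ref{P:2-1} for $\gamma=\frac{\delta H}{\delta u}$ --- merely make precise steps the paper leaves implicit.
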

\begin{proof}
Following Proposition \ref{P:5-2+1} and noting that
\be
\b H(\b u)=\int^1_0(\b \gamma(\lambda \b u), \b u)\mathrm{d}\lambda
=h^{i+1}\int^1_0(  \gamma(\lambda  u),  u)\mathrm{d}\lambda +O(h^{i+2}),
\label{(5.23)}
\ee
one has
\[\deg \b H(\b u)=\deg \gamma (\b u)+\deg \b u= \deg \b \gamma (\b u) +1.\]
If $\gamma(u)$ is a gradient function, after defining $H(u)$ in \eqref{eq-H(u)},
from \eqref{(5.23)} we reach to \eqref{eq-bH-H}.
\end{proof}

\begin{prop}
\label{P:5-2+3}
Suppose that in  continuum limit
\[\b \gamma (\b u)=\frac{\delta \b H(\b u)}{\delta \b u}
=\gamma(u)h^i+O(h^{i+1}),~~~ \b \vartheta (\b u)=\frac{\delta \b I(\b u)}{\delta \b u}=\vartheta(u)h^j+O(h^{j+1}),\]
and both $\gamma(u)$ and $\vartheta(u)$ are still gradient functions.
Then, according to Proposition \ref{P:5-2+2} we have
\be \b H(\b u)=H(u)h^{i+1}+O(h^{i+2}),~~~ \b I(\b u)=I(u)h^{j+1}+O(h^{j+2})
\ee
with
$\gamma (u)=\frac{\delta  H(u)}{\delta  u},~ \vartheta (u)=\frac{\delta  I(u)}{\delta  u}$,
and further
\begin{subequations}
\begin{align}
& \{\b H(\b u),\b I(\b u)\}=\{H(u),I(u)\}h^{i+j}+O(h^{i+j+1}),\\
& \deg \{\b H(\b u),\b I(\b u)\}=\deg \b H(\b u)+\deg \b I(\b u)-2.\label{eq-deg-poisson}
\end{align}
\end{subequations}
\end{prop}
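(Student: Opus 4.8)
The plan is to express both Poisson brackets through a single inner product and then invoke the degree propositions already in hand. In the semi-discrete setting the bracket reads $\{\b H,\b I\}=\bigl(\frac{\delta\b H}{\delta\b u},\,\partial_{\b x}\frac{\delta\b I}{\delta\b u}\bigr)=(\b\gamma,\,\partial_{\b x}\b\vartheta)$, exactly as computed in the proof of Theorem~\ref{T:4-4}, while on the continuous side $\{H,I\}=(\gamma,\,\partial_x\vartheta)$ by the definition given in Theorem~\ref{T:3-3}. Thus the whole assertion becomes a statement about how the expression $(\b\gamma,\,\partial_{\b x}\b\vartheta)$ behaves as $h\to0$.

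First I would analyze the action of $\partial_{\b x}$. The coordinate relation \eqref{coord-rela} gives the exact identity $\partial_{\b x}=\partial_x$, and $\partial_x$ is a fixed, $h$-independent operator of degree zero; hence differentiating the expansion $\b\vartheta=\vartheta(u)h^j+O(h^{j+1})$ term by term produces $\partial_{\b x}\b\vartheta=(\partial_x\vartheta)h^j+O(h^{j+1})$. In particular $\deg(\partial_{\b x}\b\vartheta)=j$, i.e. applying the implectic operator leaves the degree unchanged.

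Next I would insert $\b f=\b\gamma$ (degree $i$) and $\b g=\partial_{\b x}\b\vartheta$ (degree $j$) into Proposition~\ref{P:5-2+1}, obtaining
\[(\b\gamma,\,\partial_{\b x}\b\vartheta)=(\gamma,\,\partial_x\vartheta)\,h^{i+j}+O(h^{i+j+1}).\]
Recognizing the leading coefficient $(\gamma,\partial_x\vartheta)=\{H,I\}$ then yields the first equality. For the degree identity I would combine the resulting $\deg\{\b H,\b I\}=i+j$ with $\deg\b H=i+1$ and $\deg\b I=j+1$, both of which Proposition~\ref{P:5-2+2} already supplies; this gives $\deg\{\b H,\b I\}=(i+1)+(j+1)-2=\deg\b H+\deg\b I-2$.

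I do not expect a genuine obstacle. The only point needing care is that $\partial_{\b x}$ be truly degree-preserving, and this is immediate because $\partial_{\b x}=\partial_x$ is an exact rather than merely leading-order identity under the continuum limit plan of Sec.~\ref{S:5.2}, so no lower-order contributions can appear after differentiation. (Should $\partial_x\vartheta$ happen to vanish, the leading coefficient drops out and one recovers only the inequality $\deg\{\b H,\b I\}\geq i+j$; the statement is understood in the generic situation where the displayed leading terms survive.)
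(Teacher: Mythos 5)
Your proposal is correct and follows essentially the same route as the paper: both write $\{\b H,\b I\}=(\b\gamma,\partial_{\b x}\b\vartheta)$, use $\partial_{\b x}=\partial_x$ together with Proposition \ref{P:5-2+1} to extract the leading term $(\gamma,\partial_x\vartheta)h^{i+j}=\{H,I\}h^{i+j}$, and read off the degree identity from Proposition \ref{P:5-2+2}. Your closing remark on the degenerate case where the leading coefficient vanishes is a sound refinement the paper leaves implicit, but it does not alter the argument.
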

\begin{proof}
Following \eqref{eq-deg-inner} and Proposition \ref{P:5-2+2}, one has
\begin{align*}
\{\b H(\b u),\b I(\b u)\}=(\b \gamma(\b u),\partial_{\b x}\b \vartheta(\b u))
=&(\gamma(u),\partial_{x}  \vartheta( u))h^{i+j}+O(h^{i+j+1})\\
=&\{H(u),I(u)\}h^{i+j}+O(h^{i+j+1}),
\end{align*}
which also indicates the degree relation \eqref{eq-deg-poisson}.
\end{proof}

Besides, the following lemmas will be helpful for investigating the degrees of $\b A_m,~\b B_m,~\b K_j$ and $\b\sigma_j$.

\begin{lem}\label{lem:5-1}
Suppose that $\b W_m$ is a difference operator
\[\b W_m=\sum^{m}_{j=0} \b w_j(\b{\mathbf{u}}) \Delta^{m-j},~~~ \mathrm{with}~\b W_m|_{\b{\mathbf{ u}}=0}=0.\]
If $\b W_m$ satisfies
\[ [\b W_m, \b L]=0,\]
then $\b W_m=0$.
\end{lem}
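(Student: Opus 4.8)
The plan is to mimic the coefficient-peeling argument already used in the continuous Lemma \ref{lem:KP-1}, now adapted to the pseudo-difference setting. The starting observation is that $\b L$ has the form $\b L=h^{-1}\Delta+\b u_0+O(\Delta^{-1})$, so its only term of positive order is the leading term $h^{-1}\Delta$ of order $1$. Hence the commutator $[\b W_m,\b L]$ can raise the order of $\b W_m$ by at most one, and its top-order part is governed entirely by $h^{-1}\Delta$.

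First I would isolate the coefficient of $\Delta^{m+1}$ in $[\b W_m,\b L]$. Since $\b W_m=\b w_0\Delta^m+(\text{order}\le m-1)$ and every term of $\b L$ other than $h^{-1}\Delta$ has order $\le 0$, the only contribution to $\Delta^{m+1}$ comes from $[\b w_0\Delta^m,\,h^{-1}\Delta]$. Using the discrete Leibniz rule \eqref{dKP:Leibnitz} (equivalently \eqref{D-1}), $\Delta\b w_0=(E\b w_0)\Delta+(\Delta\b w_0)$ as operators, and a direct computation gives
\[
[\b w_0\Delta^m,\,h^{-1}\Delta]=-h^{-1}(\Delta\b w_0)\bigl(\Delta^{m+1}+\Delta^{m}\bigr),
\]
whose $\Delta^{m+1}$-coefficient is $-h^{-1}(\Delta\b w_0)$. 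Note that, in contrast with the differential case, the commutator does \emph{not} drop in order, precisely because of the shift built into the discrete Leibniz rule; this non-vanishing top term is exactly what will drive the induction.

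From $[\b W_m,\b L]=0$ I then read off $\Delta\b w_0=0$, i.e. $\b w_0$ is independent of $n$. Because $\{\b u_j\}\in\b{\mathcal S}$ are rapidly decreasing, $\b{\mathbf u}(n)\to 0$ as $n\to\infty$, while $\b W_m|_{\b{\mathbf u}=0}=0$ forces $\b w_0|_{\b{\mathbf u}=0}=0$; hence the $n$-constant function $\b w_0$ equals its limiting value, so $\b w_0=0$. With $\b w_0=0$ the operator $\b W_m$ has order at most $m-1$ and still satisfies the hypotheses of the lemma, so the same step applied to its new leading coefficient $\b w_1$ yields $\Delta\b w_1=0$ and then $\b w_1=0$. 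Iterating (an induction on the order), the coefficients $\b w_0,\b w_1,\dots,\b w_m$ vanish one after another, including the final multiplication term $\b w_m$, for which $[\b w_m,\,h^{-1}\Delta]$ again produces $-h^{-1}(\Delta\b w_m)$ at order $1$. Therefore $\b W_m=0$.

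The routine part is the bookkeeping with \eqref{dKP:Leibnitz}; the one point that needs care — and the main obstacle — is getting the top-order coefficient right, since the discrete commutator behaves differently from the differential one. The only structural input beyond that is the pair of conditions (rapid decrease together with $\b W_m|_{\b{\mathbf u}=0}=0$) used to pass from $\Delta\b w_j=0$ to $\b w_j=0$.
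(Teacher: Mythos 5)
Your proof is correct and takes essentially the same route as the paper's: the paper likewise peels off the top coefficient of $[\b W_m,\b L]$ (reading it as $(\Delta\b w_0)\Delta^{m+1}$, which matches your $-h^{-1}(\Delta\b w_0)\Delta^{m+1}$ up to the harmless constant factor), concludes $\Delta\b w_0=0$ and hence $\b w_0=0$ from the condition $\b W_m|_{\b{\mathbf{u}}=\mathbf{0}}=0$, and iterates down through $\b w_1,\dots,\b w_m$. Your explicit Leibniz-rule computation and the rapid-decrease argument for passing from $\Delta\b w_j=0$ to $\b w_j=0$ simply make precise the steps the paper's terse proof (``by integration in the light of $\b W_m|_{\b{\mathbf{u}}=\mathbf{0}}=0$'') leaves implicit.
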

\begin{proof}
Arrange the terms of $ [\b W_m, \b L]$ in terms of $\Delta$.
The highest order term reads  $(\Delta\b w_0)\Delta^{m+1}$, which indicates
$\Delta\b w_0=0$.
This yields $\b w_0=0$ due to $\b W_m|_{\b{\mathbf{ u}}=0}=0$.
Thus, in the remains the highest order term is $(\Delta\b w_1)\Delta^{m}$ which should be zero,
and then we get $\b w_1=0$ by integration in the light of  $\b W_m|_{\b{\mathbf{ u}}=0}=0$.
Repeating the procedure we will finally reach to $\b W_m=0$ and finish the proof.
\end{proof}

Similarly we can have
\begin{lem}\label{lem:5-2}
For the differential operator
\[W_m=\sum^{m}_{j=0} w_j({\mathbf{u}}) \partial^{m-j},~~~ \mathrm{with}~W_m|_{{\mathbf{ u}}=0}=0.\]
if
\[ [W_m,  L]=0,\]
then $ W_m=0$.
\end{lem}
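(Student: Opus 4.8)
The plan is to mirror the proof of Lemma~\ref{lem:5-1} closely, with the difference operator $\Delta$ replaced by $\partial$ and the discrete Leibniz rule \eqref{dKP:Leibnitz} replaced by its continuous analogue \eqref{KP:Leibnitz}. Concretely, I would write out $[W_m,L]=W_mL-LW_m$ with $L=\partial+u_2\partial^{-1}+u_3\partial^{-2}+\cdots$, collect the result in descending powers of $\partial$, and then read off the coefficients one slot at a time starting from the top.

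The key bookkeeping fact is that the negative tail $u_2\partial^{-1}+u_3\partial^{-2}+\cdots$ of $L$ only feeds powers $\partial^{m-1}$ and lower, so the two highest coefficients of $[W_m,L]$ are governed entirely by the leading term $\partial$ of $L$. Here a small twist relative to the discrete case appears: whereas in Lemma~\ref{lem:5-1} the top term $\Delta^{m+1}$ survives with coefficient $\Delta\b w_0$, in the continuous case the $\partial^{m+1}$ coefficient cancels identically ($w_0-w_0=0$), and using the operator identity $\partial\,w_0=w_0\partial+w_{0,x}$ the coefficient of $\partial^{m}$ in $[W_m,L]$ works out to $w_1-(w_1+w_{0,x})=-w_{0,x}$. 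Imposing $[W_m,L]=0$ therefore forces $w_{0,x}=0$.

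The main step is then an induction on the order. Assuming $w_0=\cdots=w_{k-1}=0$, the same computation shows that neither the tail of $L$ nor the already-vanished coefficients contaminate the $\partial^{m-k}$ slot, so the coefficient of $\partial^{m-k}$ in $[W_m,L]$ collapses to $-w_{k,x}$, whence $w_{k,x}=0$; iterating through $k=0,1,\dots,m$ then gives $W_m=0$. The only point that needs genuine care --- the analogue of the ``integration in the light of $\b W_m|_{\b{\mathbf{u}}=0}=0$'' used in Lemma~\ref{lem:5-1} --- is the passage from $w_{k,x}=0$ to $w_k=0$: since each $w_k=w_k(\mathbf{u})$ lies in $\mathcal{F}$, so that $w_k|_{\mathbf{u}=0}=0$, and since $\{u_j\}\in\mathcal{S}$ is rapidly decreasing so that $w_k\to0$ as $x\to\pm\infty$, a function that is constant in $x$ and vanishes at spatial infinity must be identically zero. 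This is exactly where the hypothesis $W_m|_{\mathbf{u}=0}=0$ together with the function space $\mathcal{S}$ enters, and it is the part of the argument I would state explicitly; the remaining coefficient extractions are the routine Leibniz-rule computations that I would not write out in full.
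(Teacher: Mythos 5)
Your proposal is correct and follows essentially the same route as the paper, which proves Lemma \ref{lem:5-2} only by analogy with Lemma \ref{lem:5-1}: compare coefficients of $[W_m,L]$ in descending powers of $\partial$, extract $w_{k,x}=0$ slot by slot, and recover $w_k=0$ by ``integration'' using $W_m|_{\mathbf{u}=0}=0$. You correctly flesh out the one genuine point of divergence from the discrete proof --- that the $\partial^{m+1}$ coefficient cancels identically so the first constraint $-w_{0,x}=0$ appears at order $\partial^{m}$ --- and you make explicit the decay argument the paper leaves implicit.
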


Now let us present more results on degrees.

\begin{prop}\label{P:5-3}
For the difference operator
\[\b W_m=\sum^{m}_{j=0} \b w_j(\b{\mathbf{u}}) \Delta^{m-j},\]
we have
\be
\deg [\b W_m, \b L]\geq \deg \b W_m, \label{deg-WL-1}
\ee
and if $\b W_m|_{\b{\mathbf{ u}}=0}=0$,
then
\be
\deg [\b W_m, \b L]= \deg \b W_m,
\label{deg-WL-2}
\ee
\end{prop}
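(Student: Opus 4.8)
The plan is to isolate the leading term of $\b W_m$ in the continuum expansion and then reduce the assertion to the non-degeneracy Lemma~\ref{lem:5-2}. First I would set $d=\deg \b W_m$ and expand each coefficient as $\b w_j(\b{\mathbf{u}})=w_j(\mathbf{u})h^{e_j}+O(h^{e_j+1})$ with $e_j=\deg \b w_j$, while \eqref{Delta-1} gives $\Delta^{m-j}=h^{m-j}\partial_x^{m-j}+O(h^{m-j+1})$. Hence the term $\b w_j\Delta^{m-j}$ has leading order $e_j+m-j$ with leading part $w_j(\mathbf{u})\partial_x^{m-j}$. The decisive observation is that the leading parts coming from distinct $j$ are differential operators of distinct orders $m-j$ in $\partial_x$, hence linearly independent, so no cancellation can occur at the minimal order $d=\min_j(e_j+m-j)$. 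Consequently
\[\b W_m=W_m\,h^{d}+O(h^{d+1}),\qquad W_m=\sum_{j:\,e_j+m-j=d}w_j(\mathbf{u})\,\partial_x^{m-j}\neq 0,\]
where $W_m$ is a genuine differential operator in $\partial_x$.

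Next I would substitute this into the commutator. Since $\b L=L+O(h)$ by \eqref{Lb-L}, multiplying the two expansions and subtracting yields
\[[\b W_m,\b L]=[W_m,L]\,h^{d}+O(h^{d+1}).\]
Whatever the coefficient $[W_m,L]$ turns out to be, the commutator can carry no term of order lower than $h^{d}$, so $\deg[\b W_m,\b L]\ge d=\deg \b W_m$, which is \eqref{deg-WL-1}.

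For the equality \eqref{deg-WL-2} it remains to exclude the vanishing of $[W_m,L]$. I would first transfer the hypothesis $\b W_m|_{\b{\mathbf{u}}=0}=0$ to the leading part: independence of the $\Delta^{m-j}$ forces each $\b w_j|_{\b{\mathbf{u}}=0}=0$, and since $\b{\mathbf{u}}=0$ is equivalent to $\mathbf{u}=0$ under \eqref{ub-u-rela}, every $h$-coefficient of $\b w_j$ vanishes at $\mathbf{u}=0$; in particular $W_m|_{\mathbf{u}=0}=0$. As $W_m\neq 0$, the contrapositive of Lemma~\ref{lem:5-2} gives $[W_m,L]\neq 0$, so the $h^{d}$ term in the commutator survives and $\deg[\b W_m,\b L]=d=\deg \b W_m$.

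The one point demanding care, and the step I expect to be the main obstacle, is the bookkeeping that pins down the coefficient of $h^{d}$ in $[\b W_m,\b L]$ as exactly $[W_m,L]$: one must verify that the $O(h)$ tails in $\b L$ and in the Leibniz expansion of the $\Delta^{m-j}$ (which bring in $\partial_y$ through \eqref{Delta-1}) contribute only at orders $h^{d+1}$ and higher, and therefore can neither cancel nor deform the leading operator. Once this is settled the inequality is immediate and the equality is precisely an application of Lemma~\ref{lem:5-2}.
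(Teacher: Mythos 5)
Your proof is correct and follows essentially the same route as the paper: both isolate the leading term $\b W_m = W_m h^{d}+O(h^{d+1})$ with $W_m$ a nonzero differential operator inheriting $W_m|_{\mathbf{u}=0}=0$, expand the commutator against $\b L = L+O(h)$, and invoke Lemma~\ref{lem:5-2} to rule out $[W_m,L]=0$ (the paper phrases this as a contradiction, you as a contrapositive, which is the same move). Your additional bookkeeping --- identifying $d=\min_j(e_j+m-j)$ via the linear independence of the $\partial_x^{m-j}$, and the explicit transfer of the asymptotic condition through \eqref{ub-u-rela} --- merely fills in details the paper leaves implicit, including its appeal to Proposition~\ref{P:5-1} for the inequality \eqref{deg-WL-1}.
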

\begin{proof}
\eqref{deg-WL-1} holds by virtue of Proposition \ref{P:5-1} and the fact $\deg \b L=0$.
Let us prove \eqref{deg-WL-2}.
Suppose that
\[\deg \b W_m=s,\]
i.e.
\[\b W_m=p(\partial_x) h^s+ O(h^{s+1}),\]
where $p(\partial_x)$ is some differential operator polynomial and $p(\partial_x)\neq 0$.
Then one has
\[[\b W_m, \b L]= [p(\partial_x), L] h^s+O(h^{s+1})\]
with leading term $[p(\partial_x), L]$.
If
\be
\deg [\b W_m, \b L]> \deg \b W_m,
\label{deg-WL-3}
\ee
which means the leading term of $[\b W_m, \b L]$ has to be zero, i.e.
\be
[p(\partial_x), L]=0.
\label{deg-pL}
\ee
Noting that $\b W_m|_{\b{\mathbf{u}}=0}=0$ yields $p(\partial_x)|_{\b{\mathbf{u}}=0}=0$,
from \eqref{deg-pL} and Lemma \ref{lem:5-2} one has $p(\partial_x)=0$.
This is contradictory to  $\deg \b W_m=s$, which means the assumption \eqref{deg-WL-3} is not correct,
and consequently \eqref{deg-WL-2} holds.
\end{proof}

\begin{prop}
\label{P:5-4}
In continuum limit,
\begin{subequations}\label{deg-A}
\begin{align}
& \deg \b A_m=0,\label{deg-A-1}\\
& \b A_m =A_m+O(h).
\label{deg-A-2}
\end{align}
\end{subequations}
\end{prop}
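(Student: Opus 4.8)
The plan is to derive the statement from the relation $\bar L = L+O(h)$ established in \eqref{Lb-L}, by showing that both raising to the $m$-th power and taking the positive part $(\cdot)_{+}$ commute with the continuum limit at leading order. First I would note that, since $\bar L=L+O(h)$ and the leading operator $L$ is nonzero (its $m$-th power $L^{m}$ again being nonzero, with leading symbol $\xi^{m}$), the leading terms multiply, giving $\bar L^{m}=L^{m}+O(h)$ and hence $\deg\bar L^{m}=0$. It then remains to prove that the projection onto non-negative $\Delta$-powers descends at the $h^{0}$ level to the projection onto non-negative $\partial_x$-powers, i.e.\ that the degree-$0$ part of $(\bar L^{m})_{+}$ is exactly $(L^{m})_{+}=A_{m}$.

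The heart of the matter is how the $\Delta$-power filtration maps to the $\partial_x$-power filtration at leading order. From the expansions in Sec.\ref{S:5.3}, the leading term of $h^{-j}\Delta^{j}$ is $\partial_x^{j}$ for every $j\in\mathbb{Z}$, and the whole expansion of $\Delta$ contains only non-negative powers of $\partial_x$ (the $\partial_y$ pieces sit at $\partial_x$-order $0$). Two consequences follow. (i) Every non-negative $\Delta$-power produces, at all orders in $h$, only non-negative $\partial_x$-powers, so the continuum expansion of $(\bar L^{m})_{+}$ is a differential operator in $\partial_x$ order by order. (ii) Writing $(\bar L^{m})_{-}=\sum_{k\le -1}\bar P_{k}\Delta^{k}$, its leading term is $\sum(\mathrm{lead}\,\bar P_{k})\partial_x^{k}$ over the $k$ of minimal degree, a nonzero operator in strictly negative $\partial_x$-powers. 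Because the leading parts of $(\bar L^{m})_{+}$ and $(\bar L^{m})_{-}$ lie in disjoint (non-negative vs.\ negative) $\partial_x$-power ranges, they cannot cancel, and therefore $\deg\bar L^{m}=\min\{\deg(\bar L^{m})_{+},\deg(\bar L^{m})_{-}\}=0$ forces both parts to have degree exactly $0$.

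I would then read off the degree-$0$ identity $L^{m}=A_{m}+(L^{m})_{-}$ and match it against the degree-$0$ part of $\bar L^{m}=(\bar L^{m})_{+}+(\bar L^{m})_{-}$. By (i) the degree-$0$ part of $(\bar L^{m})_{+}$ lies entirely in non-negative $\partial_x$-powers, and by (ii) the degree-$0$ part of $(\bar L^{m})_{-}$ lies entirely in negative $\partial_x$-powers; uniqueness of this splitting gives $\bar A_{m}=(\bar L^{m})_{+}=A_{m}+O(h)$ and $(\bar L^{m})_{-}=(L^{m})_{-}+O(h)$, which establishes \eqref{deg-A-1} and \eqref{deg-A-2}. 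The hard part is exactly claim (ii): one must check that the \emph{lower}-order-in-$h$ tails of the negative $\Delta$-powers---the constant $-\tfrac12$ in the expansion of $\Delta^{-1}$, the $\partial_y$ terms, and so on---do not leak non-negative $\partial_x$-powers into the degree-$0$ level. The resolution is that $\deg(\bar L^{m})_{-}=0$ forces $\deg\bar P_{k}\ge|k|$, so for each $k$ only the product of the two leading terms survives at degree $0$, yielding the pure power $\partial_x^{k}$ and pushing every contaminating piece to degree $\ge 1$. As a cross-check one can confirm this on $\bar A_{2}$, where $\bar a_{1}=2h\,u+O(h^{2})$ contributes only at $O(h)$ (matching the absent $\partial_x$-term of $A_{2}=\partial_x^{2}+2u$) while the constant part equals $2u+O(h)$.
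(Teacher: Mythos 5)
Your proposal is correct at the paper's level of formal rigor, but it takes a genuinely different route from the paper's proof. The paper establishes $\deg \bar{A}_m=0$ structurally: since $\bar{A}_m=(\bar{L}^m)_{+}$, its coefficients $\bar{a}_j$ contain only shifted $\bar{u}_s$ and no integrated terms like $\Delta^{-1}\bar{u}_s$, whence $\deg \bar{a}_j\geq 0$; it then identifies the leading term \emph{dynamically}, writing $\bar{A}_m=A_m^{(0)}+O(h)$, passing to the limit in the Lax equation $\bar{L}_{\bar{t}_m}=[\bar{A}_m,\bar{L}]$ to get $L_{t_m}=[A_m^{(0)},L]$, and invoking Lemma \ref{lem:5-2} applied to $A_m^{(0)}-A_m$ (which commutes with $L$ and vanishes at $\mathbf{u}=\mathbf{0}$). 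You instead argue \emph{statically} from $\bar{L}=L+O(h)$: you show $\bar{L}^m=L^m+O(h)$ and that the splitting into non-negative and negative $\Delta$-powers descends at degree zero to the splitting into non-negative and negative $\partial_x$-powers, the key point being that at the minimal degree only the pure products $p_k\partial_x^{k}$ survive, so the leading parts of $(\bar{L}^m)_{+}$ and $(\bar{L}^m)_{-}$ lie in disjoint $\partial_x$-power ranges and the bound $\deg\bar{P}_k\geq|k|$ suppresses the constant $-\tfrac{1}{2}$ and the $\partial_y$ tails; uniqueness of the direct-sum decomposition then pins $\bar{A}_m=A_m+O(h)$. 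What each approach buys: yours needs neither the Lax equation nor Lemma \ref{lem:5-2}, uses only the kinematics of the continuum limit, and yields the by-product $(\bar{L}^m)_{-}=(L^m)_{-}+O(h)$; the paper's scheme is shorter given its lemmas and, more importantly, transfers verbatim to the non-isospectral operator $\bar{B}_m$ in Proposition \ref{P:5-5}, which is \emph{not} the positive part of a power of $\bar{L}$ and is therefore out of reach of your splitting argument. One presentational caution: order the logic as (a) disjointness of the leading parts (valid whatever the degrees of the $\bar{P}_k$ are), (b) hence $\deg(\bar{L}^m)_{\pm}\geq 0$ from $\deg\bar{L}^m=0$, and only then (c) $\deg\bar{P}_k\geq|k|$ --- your write-up briefly invokes $\deg(\bar{L}^m)_{-}=0$ before it has been established, though the circularity is removable exactly as indicated.
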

\begin{proof}
First, $\b A_m$ can be written in the following form
\be
\b A_m= \frac{\Delta^m}{h^m}+\sum^{m}_{j=1}\b a_j\frac{\Delta^{m-j}}{h^{m-j}}
=\partial_x^m + O(h) + \sum^{m}_{j=1}\b a_j (\partial_x^{m-j}+O(h)).
\label{A-1}
\ee
Note that $\b A_m=(\b L^m)_{+}$.
That is to say, $\b a_j$ only contains shifted $\b u_s$ without any integration terms like $\Delta^{-1} \b u_s$.
That means $\deg \b a_j \geq 0$ for all $j=1,2,\cdots m$,
and therefore \eqref{deg-A-1} holds.

Next, one can write $\b A_m$ as
\[\b A_m= A_m^{(0)}+O(h).\]
From \eqref{A-1} we know that $A^{(0)}_m$ is a differential operator and $A^{(0)}_m|_{u=0}=\partial^m_x$.
Now, from $\b L_{\b t_m}=[\b A_m,\b L]$ we have
\[L_{t_m}=[A_m^{(0)},L]+O(h)\]
and taking $h\to 0$ it goes to
\[L_{t_m}=[A_m^{(0)},L].\]
Finally, noting that $A^{(0)}_m|_{u=0}=A_m|_{u=0}=\partial^m_x$, and
making use of Lemma \ref{lem:5-2}, we have $A^{(0)}=A_m$, i.e. \eqref{deg-A-2} holds.
\end{proof}

\begin{prop}
\label{P:5-5}
In continuum limit,
\begin{subequations}\label{deg-B}
\begin{align}
& \deg \b B_m=0,\label{deg-B-1}\\
& \b B_m =B_m+O(h).
\label{deg-B-2}
\end{align}
\end{subequations}
\end{prop}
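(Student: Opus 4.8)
The plan is to mirror the proof of Proposition \ref{P:5-4}. It suffices to establish \eqref{deg-B-2}, i.e. $\b B_m=B_m+O(h)$, since this immediately gives $\deg\b B_m=0$ (as $B_m\neq0$). The two ingredients I would assemble are: a degree count showing $\deg\b B_m=0$ together with an identification of the $O(h^0)$ part as a genuine differential operator, and then the uniqueness furnished by the non-isospectral Lax equation \eqref{dnKP:com-2} and Lemma \ref{lem:5-2}.

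First I would dispose of the explicit space-time coefficients. From the coordinate relation \eqref{coord-rela} one has $\b x+hn=x$ and $h\b x=2y+hx$, both of degree $0$ with leading terms $x$ and $2y$. Feeding these into the asymptotic data \eqref{nKP:B bc}, writing $h^{-(m-1)}\b x\Delta^{m}=(h\b x)(h^{-m}\Delta^m)$ and $h^{-(m-1)}(\b x+hn)\Delta^{m-1}=(\b x+hn)(h^{-(m-1)}\Delta^{m-1})$, and using $h^{-j}\Delta^{j}=\partial_x^{j}+O(h)$, gives
\be
\b B_m|_{\b u=0}=2y\partial_x^{m}+x\partial_x^{m-1}+O(h)=B_m|_{u=0}+O(h),
\ee
so the asymptotic part already has degree $0$ and reproduces $B_m|_{u=0}=2y\partial^m+x\partial^{m-1}$. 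For the field-dependent part I would employ the decomposition $\b B_m=h\b x\,\b A_m+(\b x+hn)\b A_{m-1}+\b C_m$, where $\b C_m$ collects the integral terms and satisfies $\b C_m|_{\b u=0}=0$; this form is consistent with \eqref{nKP:B bc} and with the explicit $\b B_1,\b B_2,\b B_3$. Since $\deg\b A_m=\deg\b A_{m-1}=0$ by Proposition \ref{P:5-4}, the first two summands contribute $2yA_m+xA_{m-1}+O(h)$.

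The main obstacle is to show $\deg\b C_m\ge0$, i.e. that the integral part does not blow up as $h\to0$. Here neither the argument of Proposition \ref{P:5-4} (which works only because $\b A_m=(\b L^m)_{+}$ carries no $\Delta^{-1}$) nor Proposition \ref{P:5-3} (stated for operators with nonnegative powers of $\Delta$ only) applies, since $\b C_m$ contains factors $\Delta^{-1}$ of degree $-1$. The point to establish is that inside $\b C_m$ every $\Delta^{-1}$ is accompanied by a factor $h$ or $\b u_0$, each of degree $\ge1$, so that each monomial has nonnegative degree; one verifies this directly on $\b C_3$ and, in general, by induction on $m$ following the order-by-order solution of \eqref{dnKP:com-2}. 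Granting this, $\b B_m=B_m^{(0)}+O(h)$ with $B_m^{(0)}=2yA_m+xA_{m-1}+C_m^{(0)}$ a differential operator (nonnegative powers of $\partial$, with possibly nonlocal coefficients) obeying $B_m^{(0)}|_{u=0}=2y\partial^m+x\partial^{m-1}$.

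Finally I would close with the uniqueness step. Substituting $\b B_m=B_m^{(0)}+O(h)$ and $\b L=L+O(h)$ into \eqref{dnKP:com-2} and noting $h\b L^m=O(h)$, the limit $h\to0$ yields $L_{t_m}=[B_m^{(0)},L]+L^{m-1}$, which is exactly the continuous non-isospectral equation \eqref{nKP:com-2} satisfied by $B_m$. Subtracting, $[B_m^{(0)}-B_m,\,L]=0$; the explicit terms $2yA_m+xA_{m-1}$ cancel, so $B_m^{(0)}-B_m$ is a differential operator with coefficients depending only on the fields and vanishing at $u=0$, whence Lemma \ref{lem:5-2} forces $B_m^{(0)}=B_m$. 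This proves $\b B_m=B_m+O(h)$ and hence both \eqref{deg-B-1} and \eqref{deg-B-2}.
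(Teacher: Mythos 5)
Your overall skeleton (split off the explicit part $h\b x\,\b A_m+(\b x+hn)\b A_{m-1}$, identify the $h\to 0$ limit, then pin it down as $B_m$ via \eqref{dnKP:com-2} and Lemma \ref{lem:5-2}) coincides with the paper's proof, and your opening step (the coordinate bookkeeping $h\b x=2y+hx$, $\b x+hn=x$) and your closing uniqueness step are correct. The genuine gap is the middle step, $\deg \b C\geq 0$ for the remainder $\b C$. You assert that every $\Delta^{-1}$ in $\b C$ is accompanied by a factor $h$ or $\b u_0$ and that this follows ``by induction on $m$ following the order-by-order solution of \eqref{dnKP:com-2}'', but you never carry out this induction, and it is not routine: the recursion determining the coefficients of $\b B_m$ mixes different orders of $h$ with nonlocal terms, so the monomial-by-monomial degree count is exactly the nontrivial content of the proposition. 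As written, the heart of your proof is an assertion checked only on $\b B_3$ and then ``granted''.

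Moreover, your stated reason for abandoning Proposition \ref{P:5-3} is a misreading, and it is what pushed you into the unexecuted induction. In the decomposition $\b B_m=\b D_m+\b C$ the inverse differences occur only inside the \emph{coefficients} --- e.g.\ in $\b B_3$ the term $\Delta^{-1}\b u_0\Delta$ is the function $(\Delta^{-1}\b u_0)$ multiplying the operator $\Delta$ --- while the operator part of $\b C$ carries only nonnegative powers of $\Delta$, since by construction $\b B_m=\sum_{j=0}^{m}h^{-(m-j)}\b b_j\Delta^{m-j}$. Hence $\b C$ is precisely of the form covered by Proposition \ref{P:5-3}, and since $\b C|_{\b{\mathbf{u}}=\mathbf{0}}=0$ one gets $\deg[\b C,\b L]=\deg\b C=s$. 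This is how the paper closes your gap, by contradiction rather than induction: in $\b L_{\b t_m}=[\b D_m,\b L]+[\b C,\b L]+h\b L^{m}+\b L^{m-1}$ all terms except $[\b C,\b L]$ have degree $\geq 0$, so if $s<0$ the leading coefficient would have to satisfy $[C^{(0)},L]=0$ with $C^{(0)}|_{\mathbf{u}=\mathbf{0}}=0$, forcing $C^{(0)}=0$ by Lemma \ref{lem:5-2}, contradicting $\deg\b C=s$. Replacing your induction by this short contradiction argument turns your proposal into the paper's proof; without it, the degree bound on which everything rests is unsupported.
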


\begin{proof}
In the light of Lemma \ref{lem:5-1}, $\b B_m$ can be written in the following form
\be
\b B_m=\b D_m +\b C_{m-2},
\label{B-1}
\ee
where $\b D_m=\b x h \b A_m+ (\b x+nh)\b A_{m-1}$, $\b C_{m-2}$ is a pure difference operator and  $\b C_{m-2}|_{\b{\mathbf{u}}=0}=0$.
Suppose that $\deg \b C_{m-2}=s$, i.e.
\[\b C_{m-2}=C_{m-2}^{(0)} h^s+O(h^{s+1}).\]
From \eqref{dnKP:com-2} one has
\be
\b L_{\b t_m}=[\b D_m, \b L]+[\b C_{m-2}, \b L]+h \b L^m+\b L^{m-1},
\label{B-2}
\ee
where
$\deg [\b C_{m-2}, \b L]=\deg \b C_{m-2}=s$ due to $\b C_{m-2}|_{\b{\mathbf{u}}=0}=0$ together with Proposition \ref{P:5-3},
and the rest terms in \eqref{B-2} altogether have degree zero.
If $s<0$, there must have $[C_{m-2}^{(0)}, L]=0$ which yields $C_{m-2}^{(0)}=0$ in light of Lemma \ref{lem:5-2}.
This is in contradiction with  the assumption $\deg \b C_{m-2}=s<0$,
and consequently we must have $s\geq 0$. Thus,  noting that $\deg \b D_m=0$, \eqref{deg-B-1} holds.

With \eqref{deg-B-1} in hand, we can write
\be
\b B_m= B^{(0)}_m +O(h),
\label{B-3}
\ee
where
\[B^{(0)}_m=2y A_m+x A_{m-1} +C_{m-2},\]
and $C_{m-2}$ is a differential operator independent of $h$ and satisfies $C_{m-2}|_{\mathbf{u}=0}=0$.
Substituting \eqref{B-3} into \eqref{dnKP:com-2} the leading term is
\[L_{t_m}=[B^{(0)}_m, L]+L^{m-1}. \]
Obviously, $B_m$ and $B^{(0)}_m$ satisfy same equation and have asymptotic condition
$$B_m|_{\mathbf{u}=0}=B^{(0)}_m|_{\mathbf{u}=0}=2y \partial^m_x+x\partial^{m-1}_x,$$
which gives $B^{(0)}_m=B_m$ in the light of Lemma \ref{lem:5-2}.
Therefore the relation \eqref{deg-B-2} holds as well.
\end{proof}

\begin{prop}\label{P:5-6}
In continuum limit,
\begin{subequations}\label{deg-K}
\begin{align}
& \deg \b K_m=1,\label{deg-K-1}\\
& \b K_m =h\,K_m+O(h^2).
\label{deg-K-2}
\end{align}
\end{subequations}
\end{prop}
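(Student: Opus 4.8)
The plan is to substitute the zero curvature representation $\b K_m=\b A_{m,\b x}-[\b A_1,\b A_m]$ from \eqref{diKP:zcr} into the continuum limit and to watch the two degree-$0$ pieces cancel, so that the genuine leading term appears only at degree $1$ and coincides with $hK_m$, where $K_m=\frac12(A_{m,y}-[A_2,A_m])$ is the continuous flow \eqref{iKP:zcr}. Thus both assertions \eqref{deg-K} will follow at once from a careful order-by-order expansion.

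First I would expand the two ingredients. By Proposition \ref{P:5-4} I may write $\b A_m=A_m+h\t A_m+O(h^2)$, where $A_m=(L^m)_+$ is the purely $\partial_x$-differential continuous operator and $\t A_m$ is an operator whose explicit form will turn out to be irrelevant. For $\b A_1=h^{-1}\Delta+\b u_0$ a direct Taylor expansion using \eqref{Delta-1} gives $h^{-1}\Delta=\partial_x+\frac{h}{2}(\partial_x^2-\partial_y)+O(h^2)$, while \eqref{ub-u-rela} gives $\b u_0=hu$; since $\partial_x^2+2u=A_2$ these combine into the clean expansion
\be
\b A_1=\partial_x+\frac{h}{2}(A_2-\partial_y)+O(h^2).
\ee

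The mechanism that makes the argument work is the elementary identity $[\partial_x,P]=P_{,x}$, valid for any operator $P$ polynomial in $\partial_x$ and $\partial_y$, where $P_{,x}$ denotes differentiation of the coefficients of $P$ with respect to $x$; together with $\partial_{\b x}=\partial_x$ this also identifies $\b A_{m,\b x}=A_{m,x}+h\t A_{m,x}+O(h^2)$, the numerical $h$-powers produced when expanding $\Delta^{m-j}$ being inert under $\partial_{\b x}$. Collecting the degree-$0$ terms of $\b K_m$ then gives $A_{m,x}-[\partial_x,A_m]=A_{m,x}-A_{m,x}=0$, so $\deg\b K_m\ge1$. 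At degree $1$ the two contributions $h\t A_{m,x}$ (one from $\b A_{m,\b x}$, one from $[\partial_x,h\t A_m]$) cancel by the same identity, so the unknown $\t A_m$ never enters, and what remains is $-\frac{h}{2}[A_2-\partial_y,A_m]=\frac{h}{2}(A_{m,y}-[A_2,A_m])=hK_m$, where I used $[\partial_y,A_m]=A_{m,y}$ and \eqref{iKP:zcr}. Hence $\b K_m=hK_m+O(h^2)$, and since $K_m\neq0$ this also yields $\deg\b K_m=1$.

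The main subtlety, and essentially the only nontrivial step, is the bookkeeping that shows the degree-$0$ pieces cancel exactly and that the unknown $O(h)$-correction $\t A_m$ drops out; both hinge on recognising $[\partial_x,P]=P_{,x}$ and on reading $\b A_{m,\b x}$ correctly in the continuum limit. The expansion of $\b A_1$ is routine once \eqref{Delta-1} is in hand, and Proposition \ref{P:5-4} supplies everything needed about $\b A_m$, so no further heavy computation is required.
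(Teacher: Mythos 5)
Your proof is correct and takes essentially the same route as the paper: the paper likewise expands $\b A_1=A_1+\frac{h}{2}(A_2-\partial_y)+O(h^2)$ and $\b A_m=A_m+A^{(1)}_m h+O(h^2)$, inserts these into the zero curvature representation \eqref{diKP:zcr}, and arrives at $\b K_m=\frac{h}{2}(A_{m,y}-[A_2,A_m])+O(h^2)=h\,K_m+O(h^2)$. You merely spell out the order-by-order cancellations (the vanishing of the degree-$0$ part via $[\partial_x,P]=P_{,x}$ and the dropping out of the unknown correction $\t A_m$) that the paper leaves implicit, and you omit only the paper's side remark that \eqref{deg-K} can alternatively be obtained from the residue formula \eqref{K-Res}.
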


\begin{proof}
We would like to first specify the following relation,
\be
\b A_1= A_1+\frac{h}{2}(A_2-\partial_y)+ O(h^2).
\label{A-A1}
\ee
This can be derived by substituting \eqref{Delta-1} and $\b u_0=\b u=h u$ into $\b A_1$.
Actually, to derive \eqref{deg-K} we need higher order expansions.
Let us write
\be
\b A_m=A_m+A^{(1)}_{m}h+O(h^2).
\label{A-Am}
\ee
Inserting \eqref{A-A1} and \eqref{A-Am} into the zero curvature representation \eqref{diKP:zcr}
one has
\begin{align*}
\b K_m=& \b A_{m,\b x}-[\b A_1, \b A_m]\\
      =& \frac{h}{2}(A_{m,y}-[A_2,A_m])+O(h^2)\\
      =& h\,K_m+O(h^2).
\end{align*}
Besides,
\eqref{deg-K} can also be proved from \eqref{K-Res} through
\[\b K_m-\Delta \,\underset{\Delta}{\mathrm{Res\,}}\b L^m
=(K_m-\partial \,\underset{\partial}{\mathrm{Res\,}} L^m)h+O(h^2).
\]
Thus we compete the proof.
\end{proof}

In a quite similar way, using \eqref{A-A1}, \eqref{dnKP:zcr} and expression
\be
\b B_m=B_m+B^{(1)}_{m}h+O(h^2),
\label{B-Bm}
\ee
we have
\begin{prop}\label{P:5-7}
In continuum limit,
\begin{subequations}\label{deg-sigma}
\begin{align}
& \deg \b \sigma_m=1,\label{deg-sigma-1}\\
& \b \sigma_m =h\,\sigma_m+O(h^2).
\label{deg-sigma-2}
\end{align}
\end{subequations}
\end{prop}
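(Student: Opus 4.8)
The plan is to substitute the operator expansions \eqref{A-A1} and \eqref{B-Bm} into the non-isospectral zero curvature representation \eqref{dnKP:zcr}, namely $\b\sigma_m=\b B_{m,\b x}-[\b A_1,\b B_m]$, and then collect the outcome order by order in $h$, in exact parallel with the proof of Proposition \ref{P:5-6}. Proposition \ref{P:5-5} already guarantees $\deg\b B_m=0$ with the continuous operator $B_m$ as leading term, so the ansatz \eqref{B-Bm}, $\b B_m=B_m+B^{(1)}_m h+O(h^2)$, is legitimate; together with $\b A_1=A_1+\frac{h}{2}(A_2-\partial_y)+O(h^2)$ from \eqref{A-A1} and $A_1=\partial_x$, all the ingredients are in place.

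I would then invoke the elementary identity $[\partial_x,P]=P_x$ (and likewise $[\partial_y,P]=P_y$), valid for any pseudo-differential operator $P$, the subscript denoting differentiation of the coefficients. Because \eqref{coord-rela} gives $\partial_{\b x}=\partial_x$, the $\b x$-derivative acts coefficient-wise on \eqref{B-Bm}, so $\b B_{m,\b x}=B_{m,x}+B^{(1)}_{m,x}h+O(h^2)$. Collecting the $O(1)$ part of $\b\sigma_m$ then gives $B_{m,x}-[A_1,B_m]=B_{m,x}-[\partial_x,B_m]=0$, which reconfirms $\deg\b\sigma_m\geq 1$. The $O(h)$ part is $\bigl(B^{(1)}_{m,x}-[\partial_x,B^{(1)}_m]\bigr)-\tfrac{1}{2}[A_2-\partial_y,B_m]$: the first bracket vanishes by the same identity, while $[\partial_y,B_m]=B_{m,y}$, so the $O(h)$ part collapses to $\tfrac{1}{2}\bigl(B_{m,y}-[A_2,B_m]\bigr)$, which is exactly $\sigma_m$ by \eqref{nKP:zcr}. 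This would yield $\b\sigma_m=h\,\sigma_m+O(h^2)$, hence both \eqref{deg-sigma-1} and \eqref{deg-sigma-2}.

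The step I expect to require the most care---and the only genuinely delicate one---is the clean identification of the leading term of $\b B_m$ with the continuous $B_m$, together with the bookkeeping of the explicit $\b x$- and $n$-dependence under the coordinate change. The key observation is that \eqref{coord-rela} and \eqref{tau} give $\b x+hn=x$ and $\b x=x+2y/h$, so the asymptotic data $\b B_m|_{\b{\mathbf{u}}=\mathbf{0}}=h^{-(m-1)}\b x\Delta^m+h^{-(m-1)}(\b x+hn)\Delta^{m-1}$ reduces to $2y\partial^m+x\partial^{m-1}+O(h)$, matching $B_m|_{u=0}$; this is precisely what Proposition \ref{P:5-5} formalizes. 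Once \eqref{B-Bm} is granted, the remaining argument is pure order-by-order matching and introduces no idea beyond those already used for the isospectral flows $\b K_m$.
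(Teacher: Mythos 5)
Your proposal is correct and is essentially the paper's own argument: the paper proves Proposition \ref{P:5-7} by declaring it ``quite similar'' to Proposition \ref{P:5-6}, i.e.\ by inserting \eqref{A-A1} and \eqref{B-Bm} into \eqref{dnKP:zcr} and matching orders in $h$, which is exactly the order-by-order computation you carry out (with the $O(1)$ part killed by $[\partial_x,B_m]=B_{m,x}$ and the $O(h)$ part collapsing to $\tfrac{1}{2}(B_{m,y}-[A_2,B_m])=\sigma_m$). Your added verification that the asymptotic data $\b B_m|_{\b{\mathbf{u}}=\mathbf{0}}$ reduces to $2y\partial^m+x\partial^{m-1}+O(h)$ under $\b x+hn=x$, $\b x=x+2y/h$ is a correct unpacking of what Proposition \ref{P:5-5} already secures, so nothing is missing.
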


\subsection{Lax triads}

From the previous discussion we have known that
\begin{subequations}
\begin{align}
& \b L=L+O(h),\\
& \b A_1= A_1+\frac{h}{2}(A_2-\partial_y)+ O(h^2),\\
& \b A_m=A_m+A^{(1)}_{m}h+O(h^2),\\
& \b B_m=B_m+B^{(1)}_{m}h+O(h^2).
\end{align}
\end{subequations}

Substituting them into the Lax triads and their compatibility equations in Sec.\ref{S:4}
we immediately reach to the following results.

\begin{prop}\label{P:5-8}
For the isospectral D$\Delta$KP hierarchy we have
\begin{subequations}
\begin{align}
& \b L \phi-\eta \phi= L\phi-\eta \phi + O(h),\\
& \phi_{\b x}-\b A_1 \phi=\frac{h}{2}(\phi_y-A_2\phi)+O(h^2),\\
& \phi_{\b t_m}-\b A_{m} \phi=\phi_{t_m}-A_m \phi+O(h),
\end{align}
\end{subequations}
and
\begin{subequations}
\begin{align}
& \b L_{\b x}-[\b A_{1},\b L]=\frac{h}{2}(L_y-[A_2,L])+O(h^2),\\
& \b L_{\b t_{m}}-[\b A_{m},\b L] =L_{t_m}-[A_m, L]+O(h),\\
& \b A_{1,\b t_m}-\b A_{m,\b x}+[\b A_1,\b A_m]=\frac{h}{2}(A_{2,t_m}- A_{m, y}+[ A_2,  A_m])+O(h^2).
\end{align}
\end{subequations}
\end{prop}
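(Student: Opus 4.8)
The plan is to substitute the continuum expansions collected immediately above the statement, namely $\b L = L + O(h)$, $\b A_1 = A_1 + \frac{h}{2}(A_2 - \partial_y) + O(h^2)$ and $\b A_m = A_m + A^{(1)}_m h + O(h^2)$, together with the coordinate identifications $\partial_{\b x} = \partial_x$ and $\partial_{\b t_m} = \partial_{t_m}$ from \eqref{coord-rela}, directly into each of the six expressions and read off the leading behaviour in $h$. The three relations that involve only $\b t_m$-derivatives or $\b L$ by itself (the first and third linear equations, and the second compatibility condition) are immediate: since $\partial_{\b t_m} = \partial_{t_m}$, $\b L = L + O(h)$ and $\b A_m = A_m + O(h)$ by Proposition~\ref{P:5-4}, one gets $\b L\phi - \eta\phi = L\phi - \eta\phi + O(h)$, $\phi_{\b t_m} - \b A_m\phi = \phi_{t_m} - A_m\phi + O(h)$ and $\b L_{\b t_m} - [\b A_m, \b L] = L_{t_m} - [A_m, L] + O(h)$, with no cancellation required.

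The three remaining relations --- those containing $\b A_1$, a $\b x$-derivative or $\phi_{\b x}$ --- are the substantive ones, and here I would work to order $h$ rather than merely $O(h)$. The mechanism is that $A_1 = \partial_x$, so $\phi_{\b x} = \phi_x = A_1\phi$ makes the $O(1)$ term of $\phi_{\b x} - \b A_1\phi$ vanish identically, leaving $-\frac{h}{2}(A_2 - \partial_y)\phi = \frac{h}{2}(\phi_y - A_2\phi)$ at the next order. The same cancellation drives the first compatibility condition: expanding $\b L_{\b x} - [\b A_1, \b L]$, the $O(1)$ part is $L_x - [A_1, L] = L_x - [\partial_x, L]$, which is zero because $[\partial_x, L]$ is precisely the $x$-derivative of the coefficients of $L$; the surviving order-$h$ piece, coming from the explicit $\frac{h}{2}(A_2 - \partial_y)$ in $\b A_1$ and from $[\partial_y, L] = L_y$, assembles exactly into $\frac{h}{2}(L_y - [A_2, L])$.

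The step I expect to be the main obstacle is showing that the unknown first-order corrections $L^{(1)}$ (writing $\b L = L + hL^{(1)} + O(h^2)$) and $A^{(1)}_m$ drop out of the $O(h)$ terms, so that the surviving contribution comes entirely from the explicit correction to $\b A_1$. This is where the special form $A_1 = \partial_x$ is essential: in $\b L_{\b x} - [\b A_1, \b L]$ the correction produces $L^{(1)}_x - [A_1, L^{(1)}] = L^{(1)}_x - [\partial_x, L^{(1)}]$, which vanishes since $[\partial_x, L^{(1)}]$ is exactly the $x$-derivative of the coefficients of $L^{(1)}$; likewise in the third compatibility condition the correction yields $A^{(1)}_{m,x} - [A_1, A^{(1)}_m] = 0$ by the same identity, while $\b A_{1,\b t_m}$ contributes only through $\frac{h}{2}A_{2,t_m}$ because $\partial_x$ and $\partial_y$ carry no $t_m$-dependence. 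Once these cancellations are verified, collecting the order-$h$ remainders in $\b A_{1,\b t_m} - \b A_{m,\b x} + [\b A_1,\b A_m]$ gives $\frac{h}{2}(A_{2,t_m} - A_{m,y} + [A_2, A_m])$, and the proposition follows termwise.
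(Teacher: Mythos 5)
Your proposal is correct and takes essentially the same route as the paper, whose entire ``proof'' consists of substituting the expansions $\b L=L+O(h)$, $\b A_1=A_1+\frac{h}{2}(A_2-\partial_y)+O(h^2)$, $\b A_m=A_m+hA_m^{(1)}+O(h^2)$ (together with $\partial_{\b x}=\partial_x$, $\partial_{\b t_m}=\partial_{t_m}$) into the Lax triads and compatibility equations and declaring the result immediate. Your explicit check that the unknown corrections $L^{(1)}$ and $A_m^{(1)}$ drop out via the identity $[\partial_x,\,\cdot\,]=\partial_x(\,\cdot\,)$ on coefficients --- the point being $A_1=\partial_x$ --- merely spells out the cancellations the paper leaves tacit, and it is accurate.
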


\begin{prop}\label{P:5-9}
For the non-isospectral D$\Delta$KP hierarchy we have
\begin{subequations}
\begin{align}
& \b L \phi-\eta \phi= L\phi-\eta \phi + O(h),\\
& \phi_{\b x}-\b A_1 \phi=\frac{h}{2}(\phi_y-A_2\phi)+O(h^2),\\
& \phi_{\b t_m}-\b B_{m} \phi=\phi_{t_m}-B_m \phi+O(h),
\end{align}
\end{subequations}
and
\begin{subequations}\label{deg-lax-i}
\begin{align}
& \b L_{\b x}-[\b A_{1},\b L]=\frac{h}{2}(L_y-[A_2,L])+O(h^2),\\
& \b L_{\b t_{m}}-[\b B_{m},\b L]-h\b L^{m}-\b L^{m-1}=L_{t_m}-[B_m, L]-L^{m-1}+O(h),\\
& \b A_{1,\b t_m}-\b B_{m,\b x}+[\b A_1,\b B_m]=\frac{h}{2}(A_{2,t_m}- B_{m, y}+[A_2,  B_m])+O(h^2).
\end{align}
\end{subequations}
\end{prop}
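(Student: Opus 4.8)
The plan is to substitute the expansions already secured in the preceding propositions,
\[
\b L=L+O(h),\qquad \b A_1=A_1+\tfrac{h}{2}(A_2-\partial_y)+O(h^2),\qquad \b B_m=B_m+B^{(1)}_m h+O(h^2),
\]
(cf. \eqref{A-A1} and \eqref{B-Bm}) into the non-isospectral Lax triad \eqref{dnKP:triad} and its compatibility block \eqref{dnKP:com}, and then to read off the leading term of each of the six differences. Throughout I would use the coordinate identities $\partial_{\b x}=\partial_x$ and $\partial_{\b t_m}=\partial_{t_m}$, the fact $A_1=\partial_x$, and the commutator rules $[\partial_x,P]=P_x$, $[\partial_y,P]=P_y$ valid for any operator $P$ whose coefficients depend on $x,y$.

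First I would dispose of the three relations that involve only $\b L$ and $\b A_1$, namely the first two lines of the triad and the first line of the compatibility block. These expressions are \emph{identical} to those already expanded in Proposition~\ref{P:5-8}, since $\b B_m$ does not enter them, so their expansions may simply be quoted. For completeness one checks that $A_1=\partial_x$ kills the $O(1)$ term in $\phi_{\b x}-\b A_1\phi=\phi_x-A_1\phi-\tfrac{h}{2}(A_2-\partial_y)\phi+O(h^2)$, leaving $\tfrac{h}{2}(\phi_y-A_2\phi)+O(h^2)$, and that the same cancellation in commutator form gives $\b L_{\b x}-[\b A_1,\b L]=\tfrac{h}{2}(L_y-[A_2,L])+O(h^2)$.

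Next I would treat the two remaining relations that are essentially immediate. For the spectral evolution, $\partial_{\b t_m}=\partial_{t_m}$ together with $\b B_m=B_m+O(h)$ yields $\phi_{\b t_m}-\b B_m\phi=(\phi_{t_m}-B_m\phi)+O(h)$. For the second compatibility equation I would expand $\b L_{\b t_m}=L_{t_m}+O(h)$ and $[\b B_m,\b L]=[B_m,L]+O(h)$, observe that the inhomogeneous term $h\b L^m=hL^m+O(h^2)=O(h)$ is absorbed by the error, and that $\b L^{m-1}=L^{m-1}+O(h)$; the difference then collapses to $L_{t_m}-[B_m,L]-L^{m-1}+O(h)$, exactly the continuous non-isospectral Lax equation \eqref{nKP:com-2}.

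The hard part will be the third compatibility relation $\b A_{1,\b t_m}-\b B_{m,\b x}+[\b A_1,\b B_m]$, where two orders in $h$ must be tracked carefully. Inserting the expansions, the $O(1)$ contribution is $-B_{m,x}+[A_1,B_m]=-B_{m,x}+B_{m,x}=0$, the cancellation relying on the fact that $[\partial_x,B_m]=B_{m,x}$ differentiates \emph{all} coefficients of $B_m$, including the explicit $x,y$ carried by its asymptotic part $2y\partial^m+x\partial^{m-1}$. At order $h$, the piece $\tfrac{h}{2}(A_2-\partial_y)$ of $\b A_1$ produces $\tfrac12([A_2,B_m]-B_{m,y})$, the term $\b A_{1,\b t_m}$ produces $\tfrac12 A_{2,t_m}$ (using $\partial_{t_m}A_1=0$ and $\partial_{t_m}\partial_y=0$), while the undetermined next-order operator $B^{(1)}_m$ enters only through $[A_1,B^{(1)}_m]=B^{(1)}_{m,x}$ against $-B^{(1)}_{m,x}$ from $\b B_{m,\b x}$, and these cancel. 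Hence the $O(h)$ term equals $\tfrac12(A_{2,t_m}-B_{m,y}+[A_2,B_m])$, giving the claimed expansion, which reproduces the continuous relation \eqref{nKP:com-3}. The only delicate bookkeeping is thus the explicit $x,y$ dependence inside $B_m$ and the harmless dropping out of the unknown $B^{(1)}_m$.
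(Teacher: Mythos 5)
Your proposal is correct and takes essentially the same route as the paper: the paper establishes Proposition~\ref{P:5-9} precisely by substituting the expansions $\b L=L+O(h)$, $\b A_1=A_1+\frac{h}{2}(A_2-\partial_y)+O(h^2)$ and $\b B_m=B_m+B^{(1)}_m h+O(h^2)$ (i.e.\ \eqref{Lb-L}, \eqref{A-A1}, \eqref{B-Bm}) into the non-isospectral Lax triad \eqref{dnKP:triad} and its compatibility block \eqref{dnKP:com}, declaring the result immediate. Your detailed bookkeeping --- the $O(1)$ cancellation $-B_{m,x}+[A_1,B_m]=0$ including the explicit $x,y$ dependence of $B_m$, the absorption of $h\b L^m$ into the error, and the harmless cancellation of the unknown $B^{(1)}_m$ --- simply makes explicit the verification the paper omits.
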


\subsection{Symmetries and algebra deformation}\label{S:5.6}

We have shown that both isospectral D$\Delta$KP flows $\{\b K_m\} $ and
non-isospectral D$\Delta$KP flows $\{\b \sigma_m\}$ go to their continuous counterparts
in continuum limit designed in Sec.\ref{S:5.2}.
However, comparing their basic algebra structures \eqref{KP:alg} and \eqref{dKP:alg},
one can see that their basic structures are different.
In fact, this deformation in basic structures can be well understood with the help of degrees of flows.
Let us take \eqref{dKP:alg-2} and \eqref{KP:alg-2} as an example.
\eqref{dKP:alg-2} reads
\be
{\llbracket}\b K_{l},\b \sigma_{r}{\rrbracket} =l\,(h\b K_{l+r-1}+\b K_{l+r-2}),
\label{dKP:alg-22}
\ee
among the three terms of which
\[\deg {\llbracket}\b K_{l},\b \sigma_{r}{\rrbracket}=1,
~~\deg (h\b K_{l+r-1})=2,~~
\deg \b K_{l+r-2}=1.
\]
Noting that in continuum limit only the terms with the lowest degrees (i.e. leading terms) are remained,
and comparing degrees of each term of \eqref{dKP:alg-22} we have
\be
{\llbracket}  K_{l},  \sigma_{r}{\rrbracket} =l\,K_{l+r-2} ,
\label{KP:alg-22}
\ee
i.e. \eqref{KP:alg-2}.
Such degree analysis works as well as in understanding  the relationship of symmetries together with their algebras
in semi-discrete and continuous cases.
Let us conclude these relations in the following.

\begin{thm}
\label{P:5-10}
In  continuum limit,
the basic algebra structure \eqref{dKP:alg} of flows goes to \eqref{KP:alg},
symmetries given in \eqref{diKP:sym}
\begin{align*}
\{\b K_l\} \rightarrow \{ K_l\},~~~~\{\b \tau_r^s\} \rightarrow \{\tau_r^s\},
\end{align*}
and their basic structure \eqref{diKP:alg sym} goes to \eqref{iKP:alg sym}.
\end{thm}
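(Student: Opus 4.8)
The plan is to carry out the entire argument as a degree count, leaning on Propositions \ref{P:5-6} and \ref{P:5-7} (which give $\b K_m=h\,K_m+O(h^2)$ and $\b\sigma_m=h\,\sigma_m+O(h^2)$, both of degree one) together with the Lie-bracket degree formula of Proposition \ref{P:5-2}. The single observation that drives the whole deformation is that each term carrying an explicit prefactor $h$ in \eqref{dKP:alg} --- namely $h\b K_{l+r-1}$ and $h\b\sigma_{l+r-1}$ --- has its degree raised from $1$ to $2$, so it is subleading and disappears when we retain only the lowest power of $h$. This is exactly what collapses the two-term right-hand sides of \eqref{dKP:alg} onto the single-term right-hand sides of \eqref{KP:alg}, and it simultaneously explains the index shift $l+r-1\mapsto l+r-2$.

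First I would treat the algebra of flows. Take \eqref{dKP:alg-2}, $\llbracket\b K_l,\b\sigma_r\rrbracket=l\,(h\b K_{l+r-1}+\b K_{l+r-2})$. By Proposition \ref{P:5-2} the left side equals $\llbracket K_l,\sigma_r\rrbracket\,h+O(h^2)$, while on the right $h\b K_{l+r-1}=h^2K_{l+r-1}+O(h^3)$ has degree two and $\b K_{l+r-2}=h\,K_{l+r-2}+O(h^2)$. Matching the coefficients of $h^1$ gives $\llbracket K_l,\sigma_r\rrbracket=l\,K_{l+r-2}$, i.e. \eqref{KP:alg-2}. The identical comparison applied to \eqref{dKP:alg-3} yields $\llbracket\sigma_l,\sigma_r\rrbracket=(l-r)\sigma_{l+r-2}$, i.e. \eqref{KP:alg-3}, and \eqref{dKP:alg-1} passes trivially to \eqref{KP:alg-1}. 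Hence \eqref{dKP:alg} goes to \eqref{KP:alg}.

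Next I would handle the symmetries. For $\{\b K_l\}$ the claim $\b K_l\to K_l$ is immediate from Proposition \ref{P:5-6}. For $\{\b\tau^s_r\}$ I would substitute the expansions into the definition in \eqref{diKP:sym}, using $\b t_s=t_s$ from the coordinate relation \eqref{coord-rela}; since $h\b K_{s+r-1}=h^2K_{s+r-1}+O(h^3)$ is of degree two while $\b K_{s+r-2}=h\,K_{s+r-2}+O(h^2)$ and $\b\sigma_r=h\,\sigma_r+O(h^2)$ are of degree one, one collects
\[\b\tau^s_r=h\bigl(s\,t_s\,K_{s+r-2}+\sigma_r\bigr)+O(h^2)=h\,\tau^s_r+O(h^2),\]
so that $\deg\b\tau^s_r=1$ and $\b\tau^s_r\to\tau^s_r$, matching the continuous symmetry in \eqref{iKP:sym}.

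Finally, because $\b K_l$ and $\b\tau^s_r$ are both of degree one with leading terms $K_l$ and $\tau^s_r$, repeating the degree comparison of the first step on the relations \eqref{diKP:alg sym} --- again discarding the degree-two terms $h\b K_{l+r-1}$ and $h\b\tau^s_{l+r-1}$ --- sends them term by term to \eqref{iKP:alg sym}. I expect no genuine obstacle here: the only point demanding care is the collection of powers of $h$ in $\b\tau^s_r$, where one must confirm that the degree-two piece $s\,t_s\,h\b K_{s+r-1}$ really is subleading and that $\b t_s=t_s$ keeps the coefficient $s\,t_s$ intact, so that the surviving $h^1$ term is precisely $h\,\tau^s_r$.
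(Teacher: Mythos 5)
Your proposal is correct and takes essentially the same approach as the paper: the degree analysis of Sec.~\ref{S:5.6}, where the terms $h\b K_{l+r-1}$, $h\b\sigma_{l+r-1}$ and $h\b\tau^{s}_{l+r-1}$ acquire degree two and are discarded as subleading, collapsing the two-term right-hand sides of \eqref{dKP:alg} and \eqref{diKP:alg sym} onto \eqref{KP:alg} and \eqref{iKP:alg sym} with the index shift $l+r-1\mapsto l+r-2$. Your explicit expansion $\b\tau^{s}_{r}=h\,\tau^{s}_{r}+O(h^{2})$ via $\b t_s=t_s$ from \eqref{coord-rela} merely spells out the step the paper dispatches with the remark that the same degree analysis works for the symmetries.
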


\subsection{Hamiltonian structures and conserved quantities}\label{S:5.7}

Now let us investigate continuum limits of Hamiltonian structures and conserved quantities.

Since
\[\b K_m=h\,K_m+O(h^2),~~~ \partial_{\b x}= \partial_{x},\]
it is easy to have
\[\b K_m =\partial_{\b x} \b \gamma_m= h\,K_m+O(h^2)=h\,\partial_x \gamma_m+O(h^2),\]
i.e.
\[\b \gamma_m= h\,\gamma_m + O(h^2),\]
and $\gamma_m$ is still a gradient function.
Then, following Proposition \ref{P:5-2+3} we have
\[\b H_m= h^2 H_m + O(h^3).\]

We can conduct similar discussion for the non-isospectral case and get similar results.
To sum up, for Hamiltonian structures we have
\begin{prop}\label{P:5-11}
The continuum limit designed in Sec.\ref{S:5.2}
keeps the Hamiltonian structures of the equation \eqref{diKP:Ham} and \eqref{dnKP:Ham}, in which
\bse
\begin{align}
& \b \gamma_m= h\, \gamma_m + O(h^2),~~  \b H_m=   h^2 H_m + O(h^3),\\
& \b \omega_m= h\, \omega_m + O(h^2),~~  \b J_m=   h^2 J_m + O(h^3).
\end{align}
\ese
\end{prop}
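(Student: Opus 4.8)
The plan is to reduce everything to degree bookkeeping, using the already-established facts that $\partial_{\b x}=\partial_x$ under the coordinate relation \eqref{coord-rela}, that $\deg\b K_m=\deg\b\sigma_m=1$ with $\b K_m=h\,K_m+O(h^2)$ and $\b\sigma_m=h\,\sigma_m+O(h^2)$ (Propositions \ref{P:5-6} and \ref{P:5-7}), and that the continuous flows already carry Hamiltonian structures (Theorems \ref{thm:iKP Ham} and \ref{T:3-3}). The key observation making the argument clean is that $\partial_{\b x}^{-1}=\partial_x^{-1}$ is $h$-independent, hence degree-preserving, so that applying it to $\b K_m$ or $\b\sigma_m$ does not disturb the leading order.

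First I would treat the isospectral case. Starting from the Hamiltonian structure \eqref{diKP:Ham}, i.e. $\b\gamma_m=\partial_{\b x}^{-1}\b K_m$, I substitute $\partial_{\b x}^{-1}=\partial_x^{-1}$ and $\b K_m=h\,K_m+O(h^2)$ to obtain $\b\gamma_m=\partial_x^{-1}(h\,K_m+O(h^2))=h\,\gamma_m+O(h^2)$, where $\gamma_m=\partial_x^{-1}K_m$ is, by Theorem \ref{thm:iKP Ham}, precisely the continuous gradient $\delta H_m/\delta u$, in particular a gradient function. With $i=1$, Proposition \ref{P:5-2+2} then immediately gives $\deg\b H_m=\deg\b\gamma_m+1=2$ together with $\b H_m=h^2H_m+O(h^3)$ and $\gamma_m=\delta H_m/\delta u$. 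At the same order the semi-discrete form \eqref{diKP:Ham} reads $\partial_{\b x}\b\gamma_m=h\,\partial_x\gamma_m+O(h^2)$, whose leading term (after dividing by $h$) is exactly the continuous form \eqref{iKP:Ham}; this is the precise sense in which the Hamiltonian structure is preserved.

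The non-isospectral case is carried out verbatim with $\b\sigma_m$, $\b\omega_m=\partial_{\b x}^{-1}\b\sigma_m$ and $\b J_m$ in place of $\b K_m$, $\b\gamma_m$ and $\b H_m$: using $\b\sigma_m=h\,\sigma_m+O(h^2)$ and $\omega_m=\partial_x^{-1}\sigma_m$, which by Theorem \ref{T:3-3} is a gradient function equal to $\delta J_m/\delta u$, one gets $\b\omega_m=h\,\omega_m+O(h^2)$ and then, by Proposition \ref{P:5-2+2}, $\b J_m=h^2J_m+O(h^3)$. I would stress that this direct route through $\partial_{\b x}^{-1}$ works uniformly in $m$ and sidesteps the exceptional members ($s=3$ for $\b J_s$, $s=4$ for $J_s$) altogether, since it never invokes the defining recursions.

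The step I expect to require the most care is reconciling the clean degree count above with the recursive definitions \eqref{diKP:H recur} and \eqref{diKP:J recur}, which carry an explicit prefactor $h^{-1}$. A naive application of that prefactor would lower the degree by one and contradict $\deg\b H_m=\deg\b J_m=2$; the resolution is that the leading, degree-two contribution of the bracketed combination must cancel. Concretely, for $\b H_s$ one needs $\tfrac{1}{s-1}(\gamma_{s-1},\sigma_2)=H_{s-1}$ and for $\b J_s$ one needs $\tfrac{1}{s-3}(\omega_{s-1},\sigma_2)=J_{s-1}$; these are exactly the continuous Hamiltonian-algebra identities $(\gamma_l,\sigma_r)=l\,H_{l+r-2}$ and $(\omega_l,\sigma_r)=(l-r)\,J_{l+r-2}$ following from \eqref{iKP:alg cc}, specialised to $l=s-1$, $r=2$. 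Thus the apparent obstruction created by the $h^{-1}$ factor is removed by the continuous algebra of Theorems \ref{thm:iKP Ham} and \ref{T:3-3}, so the degree relations are internally consistent; Proposition \ref{P:5-2+3} finally guarantees that the Poisson brackets, and hence the algebraic relations among the $\b H_m$ and $\b J_m$, degenerate correctly in the limit.
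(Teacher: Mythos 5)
Your proposal is correct and takes essentially the same route as the paper: from $\partial_{\b x}=\partial_x$ and $\b K_m=h\,K_m+O(h^2)$ (resp.\ $\b\sigma_m=h\,\sigma_m+O(h^2)$) one reads off $\b\gamma_m=h\,\gamma_m+O(h^2)$ (resp.\ $\b\omega_m=h\,\omega_m+O(h^2)$), notes these limits are still gradient functions, and invokes the degree relation for potentials (Proposition \ref{P:5-2+2}) to get $\b H_m=h^2H_m+O(h^3)$ and $\b J_m=h^2J_m+O(h^3)$. Your final paragraph reconciling this with the $h^{-1}$ prefactor in the recursions \eqref{diKP:H recur} and \eqref{diKP:J recur} --- via the leading-order cancellations $(\gamma_{s-1},\sigma_2)=(s-1)H_{s-1}$ and $(\omega_{s-1},\sigma_2)=(s-3)J_{s-1}$, which follow from \eqref{iKP:cc-2} and \eqref{iKP:cc-3} --- is a correct and worthwhile consistency check that the paper's very brief proof omits, though it is not needed for the statement itself.
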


Next we look at the basic algebraic structure \eqref{diKP:alg cq} composed by the conserved quantities $\{\b H_l\}$ and $\{\b I_r^s\}$.
We have seen that in continuum limit $\b \gamma_m(\b u)$ and $\b \omega_m(\b u)$
go to $\gamma_m(u)$ and $\omega_m(u)$ that are still gradient functions.
Noting that $\gamma_m (u)=\frac{\delta  H_m(u)}{\delta  u},~ \omega_m (u)=\frac{\delta  J_m(u)}{\delta  u}$,
it then follows from Proposition \ref{P:5-2+3} that
in continuum limit
\begin{align*}
&\{\b H_l,\b H_r\}=\{ H_l, H_r\}h^{2}+O(h^3),\\
&\{\b H_l,\b J_r\}=\{ H_l, J_r\}h^{2}+O(h^3),\\
&\{\b J_l,\b J_r\}=\{ J_l, J_r\}h^{2}+O(h^3).
\end{align*}
We use the same trick as in the previous subsection for symmetries.
By comparing degrees of both sides of the basic algebraic relation \eqref{diKP:alg-cc},
the leading terms give
\bse
\label{iKP:alg cc-11}
\begin{align}
&\{ H_l, H_r\}=0,\label{iKP:cc-11}\\
&\{ H_l, J_r\}=l\,H_{l+r-2},\label{iKP:cc-21}\\
&\{ J_l, J_r\}=(l-r) J_{l+r-2},\label{iKP:cc-31}
\end{align}
\ese
i.e. \eqref{iKP:alg cc}.
This also leads to the basic algebraic relation \eqref{iKP:alg cq}.
Let us conclude it in the following.
\begin{thm}
\label{P:5-12}
In the continuum limit in Sec.\ref{S:5.2}, we have
\[
\{\b H_l\} \rightarrow \{ H_l\},~~\{\b J_r\} \rightarrow \{J_r\},~~\{\b I_r^s\} \rightarrow \{I_r^s\},
\]
the  basic algebra structure \eqref{diKP:alg-cc} goes to \eqref{iKP:alg cc}
and the basic structure \eqref{diKP:alg cq} goes to \eqref{iKP:alg cq}.
\end{thm}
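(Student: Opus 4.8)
The statement is a hierarchy-level summary of the term-by-term facts assembled in the preceding propositions, so the plan is twofold: first to pin down the $h$-expansion of each object $\b H_l,\b J_r,\b I^s_r$, and then to feed these expansions into the degree arithmetic of the Poisson bracket, so that each discrete algebra relation collapses onto its continuous leading term. No estimate beyond Propositions \ref{P:5-11} and \ref{P:5-2+3} is needed; the work is purely organisational, the genuinely new content (relative to the computation carried out just before the theorem) being the treatment of $\b I^s_r$ and of the relation \eqref{diKP:alg cq}.

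First I would record the expansions already in hand. Proposition \ref{P:5-11} supplies
\[
\b\gamma_m= h\,\gamma_m+O(h^2),~ \b H_m= h^2 H_m+O(h^3),~ \b\omega_m= h\,\omega_m+O(h^2),~ \b J_m= h^2 J_m+O(h^3),
\]
with $\gamma_m,\omega_m$ still gradient functions; this already yields the first two arrows $\{\b H_l\}\to\{H_l\}$ and $\{\b J_r\}\to\{J_r\}$, and in particular $\deg\b H_m=\deg\b J_m=2$. For the third arrow I would expand the definition \eqref{diKP:cq} of $\b I^s_r$. Using the coordinate relation \eqref{coord-rela}, which gives $\b t_s=t_s$, together with the degree count $\deg(h\b H_{s+r-1})=3$ against $\deg\b H_{s+r-2}=\deg\b J_r=2$, only the degree-two terms survive at leading order, whence
\[
\b I^s_r=s t_s\,\b H_{s+r-2}+\b J_r+O(h^3)=h^2\bigl(s t_s H_{s+r-2}+J_r\bigr)+O(h^3)=h^2 I^s_r+O(h^3),
\]
which is exactly $\{\b I^s_r\}\to\{I^s_r\}$ and shows $\deg\b I^s_r=2$.

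Next I would pass the two algebras to the limit by matching lowest degrees. Since $\gamma_m,\omega_m$, and hence $\vartheta^s_r=s t_s\gamma_{s+r-2}+\omega_r$, are gradient functions, Proposition \ref{P:5-2+3} applies to every pair and gives $\{\b H_l,\b H_r\}=\{H_l,H_r\}h^2+O(h^3)$ and likewise for $\{\b H_l,\b J_r\}$, $\{\b J_l,\b J_r\}$, $\{\b H_l,\b I^s_r\}$ and $\{\b I^s_l,\b I^s_r\}$. Substituting these into \eqref{diKP:alg-cc} and keeping the $h^2$-part reproduces \eqref{iKP:alg cc}, exactly as in the computation preceding the theorem. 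The same matching applied to \eqref{diKP:alg cq} is the crux: in $\{\b H_l,\b I^s_r\}=l(h\b H_{l+r-1}+\b H_{l+r-2})$ the term $h\b H_{l+r-1}$ carries degree $3$ while both sides otherwise carry degree $2$, so the leading order is $\{H_l,I^s_r\}=l\,H_{l+r-2}$, i.e. \eqref{KP:cq-2}; likewise $h\b I^s_{l+r-1}$ has degree $3$ in $\{\b I^s_l,\b I^s_r\}=(l-r)(h\b I^s_{l+r-1}+\b I^s_{l+r-2})$, leaving $\{I^s_l,I^s_r\}=(l-r)I^s_{l+r-2}$, i.e. \eqref{KP:cq-3}, while \eqref{KP:cq-1} is immediate from \eqref{diKP:cq-1}. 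Thus \eqref{diKP:alg cq} goes to \eqref{iKP:alg cq}.

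The one point requiring care, and the step I expect to be the main obstacle, is verifying the hypotheses of Proposition \ref{P:5-2+3}, namely that each leading coefficient ($\gamma_m$, $\omega_m$, $\vartheta^s_r$) is genuinely a gradient function in the continuous variables, since this is what licenses replacing the discrete Poisson bracket by the continuous one degree by degree. For $\gamma_m,\omega_m$ this is delivered by Proposition \ref{P:5-11}, and for $\vartheta^s_r$ it follows at once because a real linear combination of gradients is again a gradient. Once this is in place, the deformation of the bracket structures is entirely a matter of tracking the explicit $h$-weights attached to the ``$h\b H_{l+r-1}$''- and ``$h\b I^s_{l+r-1}$''-type terms, which raise the degree by one and therefore drop out of the leading order; the theorem then follows by reading off the $h^2$-coefficients.
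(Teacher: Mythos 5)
Your proposal is correct and takes essentially the same route as the paper: Propositions \ref{P:5-11} and \ref{P:5-2+3} supply the $h^2$-expansions of the Poisson brackets, and comparison of degrees on both sides of \eqref{diKP:alg-cc} yields \eqref{iKP:alg cc}, exactly as in Sec.\ref{S:5.7}. The only place you go beyond the paper's text is in spelling out what it leaves implicit with ``this also leads to \eqref{iKP:alg cq}'', namely the expansion $\b I^s_r=h^2 I^s_r+O(h^3)$ (using $\b t_s=t_s$ and $\deg(h\b H_{s+r-1})=3$) and the check that $\vartheta^s_r=st_s\gamma_{s+r-2}+\omega_r$ is again a gradient, both of which are the natural filling-in rather than a different argument.
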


\subsection{Deformation of Lie algebras}
\label{S:5.8}

Now let us see something special of the obtained algebras.
The Lie algebra $\b{\mathbf{X}}$ spanned by the  D$\Delta$KP flows $\{ \b K_m\}$ and $\{\b \sigma_m\}$ with the basic structures \eqref{dKP:alg}
has generators $\{\b K_1, \b \sigma_1, \b \sigma_3\}$ w.r.t. the product $\j\cdot,\cdot \k$;
while the Lie algebra ${\mathbf{X}}$ spanned by the KP flows $\{ K_m\}$ and $\{ \sigma_m\}$  with the basic structures \eqref{KP:alg}
has generators $\{ K_1,  \sigma_1, \sigma_4\}$.
Obviously, the two algebras have different basic structures:
\eqref{KP:alg} is a neat centerless Kac-Moody-Virasoro structure but \eqref{dKP:alg} is not.
Now let us look at subalgebras.
${\mathbf{X}}$ has infinitely many subalgebras spanned by $\{ K_1, K_2,\cdots,K_j,  \sigma_1, \sigma_2\}$ for any $j\in \mathbb{Z}^+$;
for $\b{\mathbf{X}}$ it also has infinitely many subalgebras
spanned by  $\{\b K_1, \b K_2, \cdots, \b K_j, \b \sigma_1\}$ for any $j\in \mathbb{Z}^+$.
Moreover, by means of calculating degrees of flows
the deformation in the basic algebraic structures can be understood in continuum limit.
However, the continuum limit does not keep generators and subalgebras.
In fact, such discontinuity of Lie algebras of flows (or symmetries), also known as the contraction of algebras,
is not rare to see in some semi-discrete cases when they go to their continuous correspondences in continuous
limit\cite{HHLRW-JPA-2000,HHLW-TMP-2001,ZC-SAM-2010-II}.
Here, the spacing parameter $h$  acts as a contraction parameter that bring changes of basic algebraic structures.

Since the basic algebraic structures \eqref{diKP:alg-cc} for Hamiltonians, \eqref{diKP:alg cq} for conserved quantities and \eqref{dKP:alg} for flows
are the same, and
the basic structures \eqref{iKP:alg cc} for Hamiltonians, \eqref{iKP:alg cq} for conserved quantities and \eqref{KP:alg} for flows
are also same,
they have same deformations.

\section{Conclusions}

We have discussed integrable properties of the D$\Delta$KP hierarchy,
including symmetries, Hamiltonian structures and conserved quantities.
The obtained results are
isospectral and non-isospectral  D$\Delta$KP flows and their Lie algebra,
two sets of symmetries of the isospectral hierarchy and their Lie algebra,
Hamiltonian structures of isospectral and non-isospectral hierarchies,
Lie algebra of the Hamiltonians,
two sets of conserved quantities of the isospectral hierarchy and their Lie algebra,
and all these Lie algebras have same basic algebraic structures.
To achieve these, we introduced Lax triads as our starting point.
In this approach we consider the spatial variable $\bar{x}$ ($y$ for the KP system) as a new  independent variable
that is completely independent of the temporal variable $\bar{t}_1$ ($t_2$ for the KP system).
Such a separation of spatial and temporal variables
not only enables us to derive master symmetries as non-isospectral flows
but also provides simple zero curvature representations for both
isospectral and non-isospectral flows,
which leads to a Lie algebra with recursive structures of these flows.
Compared with the traditional treatments,
we believe that the Lax triad approach  would be more reasonable
in the study of (2+1)-dimensional systems related to pseudo-difference operators and pseudo-differential operators.
Besides, explicit recursion operators might exist and be used to investigate integrable (2+1)-dimensional systems\cite{SF-CMP-1988-I,FS-CMP-1988-II},
which is absent in discrete case.

Continuum limit acts as a bridge to connect discrete and continuous integrable systems.
However, such connections usually are hidden behind integrable discretization\cite{Hirota-JPSJ-1977-I,Hirota-JPSJ-1977-II,Hirota-JPSJ-1977-III,
MP-CMP-1996,MP-RMP-1998,MP-JPA-1998,ZC-SAM-2010-II}.
It is not easy to find out a uniform continuum limit to connect both equations and their integrable properties,
and sometimes combinatorics are used.
In the paper we designed a continuum limit that connects the D$\Delta$KP and KP hierarchies.
The continuum limit has  been shown  to keep their Lax triads, zero curvature representations, Hamiltonian structures
(for both isospectral and non-isospectral cases), symmetries and conserved quantities.
We defined and made use of \textit{degrees} of some elements
to analyze continuum limits.
By calculating and comparing degrees
the deformation in the basic algebraic structures can be understood in continuum limits.
We also want to emphasize that in our continuum limit
the traditional D$\Delta$KP equation $\b u_{\b t_2}=\b K_2$ goes to the linear equation $u_{t_2}=u_y$
rather than the KP equation.
It turns out that the next member $\b u_{\b t_3}=\b K_3$  corresponds to the continuous KP equation.

The pseudo-difference operator $\b L$ is not a unique means for investigating the D$\Delta$KP hierarchy.
In a series of papers \cite{WCN-PA-1986,WC-PLA-1987,WC-PA-1988-II,WC-PA-1988-III} the discrete KP equation
together with related continuum limits, conserved quantities,
Hamiltonian structures and semi-discrete KP hierarchies were investigated starting from the
so-called direct linearization approach.
In their approach fully discrete KP is a starting point,
infinitely many conserved quantities were derived from a time-independent scattering data,
and semi-discrete hierarchy were generated in continuum limit by defining an infinite number of
continuous temporal variables.
Here we have given more conserved quantities and more algebraic structures for the D$\Delta$KP hierarchy.
The integrable master symmetries played important roles in our paper,
and in the continuum limit we have fixed time variables so that the continuum limit keeps Hamiltonian structures for the whole hierarchies.

\vskip 10pt
\subsection*{Acknowledgments}
The authors are very grateful to Prof. Deng-yuan Chen for discussion.
We also sincerely thank Prof. Frank W. Nijhoff for kindly pointing out some
important references. This project is  supported by the NSF of China
(No. 11071157), SRF of the DPHE of China (No. 20113108110002) and
Shanghai Leading Academic Discipline Project (No. J50101).

\vskip 20pt

\begin{appendix}
\section{Formulae on G\^{a}teaux derivatives}\label{A:A}

We collect some formulae of G\^{a}teaux derivatives that are often used.
For convenience we take $F=F(x,\{v^{(j)}\})$ as an example where $v=v(t,x)$ and $v^{(j)}=\partial^{j}_x v$.
These formulae are
\begin{align*}
& F'[g]=\sum_j\frac{\partial F}{\partial v^{(j)}} \partial^{j}_x g,\\
& \partial_x F=\t \partial_x F +F'[v_x],\\
& \t \partial_x (F'[g])=(\t \partial_x F)'[g]+F'[\t \partial_x g],\\
& (F'[a])'[b]=(F')'[b]\circ[a]+F'\circ a'[b],\\
& (F')'[a]\circ [b]=(F')'[b]\circ[a],\\
& F'\j a,b\k=(F'[a])'[b]-(F'[b])'[a],\\
& \partial_x (F'[g])=(\partial_x F)'[g].
\end{align*}
The first formula can be used to prove others.

\section{Discussion of conserved quantities}\label{A:B}

Based on Proposition \ref{P:2-2-0}, using symmetries and conserved covariants one can construct conserved quantities via inner product
$(\cdot, \cdot)$.
Conserved quantities can also be constructed through gradient functions and Proposition \ref{P:2-1}.
It is necessary to investigate the relationship of these conserved quantities derived from different ways.
For the isospectral KP hierarchy, we have

\begin{thm}\label{T:B-1}
For each equation
\be
u_{t_{s}}=K_{s}
\label{uts}
\ee
in the isospectral KP hierarchy \eqref{iKP:hie}, we have
\bse\label{iKP:scc}
\begin{align}
& (K_{l},\gamma_{r}) = 0,\label{KP:scc-1}\\
& (K_{l},\vartheta^{s}_{r})= -(\tau^{s}_{l},\gamma_r)=-l\,H_{l+r-2},\label{KP:scc-2}\\
& (\tau^{s}_{l},\vartheta^{s}_{r}) = -(l-r)I^{s}_{l+r-2},\label{KP:scc-3}
\end{align}
\ese
where $l,r,s\geq 1$, $H_m$ and $ I^{s}_{m}$ are conserved quantities defined in Theorem \ref{thm:iKP Ham} and Theorem \ref{T:3-3},
\be
\{K_{l}\},~~
\{\tau^{s}_{r}=st_sK_{s+r-2}+\sigma_r\}
\ee
are symmetries of equation \eqref{uts},
\be
\{\gamma_{l}=\partial^{-1}K_l\},~~
\{\vartheta^{s}_{r}=\partial^{-1}\tau^{s}_{r}= st_s \gamma_{s+r-2}+\omega_r\}
\ee
are conserved covariants  of   \eqref{uts}.
Note that we set $H_0=I^{s}_{0}=0$ and $K_0=\tau_0^s=0$.
\end{thm}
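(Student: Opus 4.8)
The plan is to collapse every pairing $(\cdot,\cdot)$ between a symmetry and a conserved covariant onto the Poisson algebra \eqref{iKP:alg cc} already established in Theorem \ref{T:3-3}. The structural input that makes this possible is that all the conserved covariants occurring here are gradients: by Theorems \ref{thm:iKP Ham} and \ref{T:3-3} one has $\gamma_s=\partial^{-1}K_s=\frac{\delta H_s}{\delta u}$ and $\omega_s=\partial^{-1}\sigma_s=\frac{\delta J_s}{\delta u}$, so that $K_s=\partial\gamma_s$, $\sigma_s=\partial\omega_s$ and $\tau^s_r=\partial\vartheta^s_r$. Together with the skew-adjointness $(\partial f,g)=-(f,\partial g)$, this turns any such pairing into a Poisson bracket via $(\partial\gamma,\mu)=-(\gamma,\partial\mu)$ and $(\gamma_a,\partial\gamma_b)=\{H_a,H_b\}$, $(\gamma_a,\partial\omega_b)=\{H_a,J_b\}$, $(\omega_a,\partial\omega_b)=\{J_a,J_b\}$.

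For \eqref{KP:scc-1} I would write $(K_l,\gamma_r)=(\partial\gamma_l,\gamma_r)=-(\gamma_l,\partial\gamma_r)=-\{H_l,H_r\}=0$ by \eqref{iKP:cc-1}. For \eqref{KP:scc-2} the cleanest route is the skew-adjoint identity $(K_l,\vartheta^s_r)=(\partial\gamma_l,\vartheta^s_r)=-(\gamma_l,\partial\vartheta^s_r)=-(\gamma_l,\tau^s_r)$, which is exactly the middle equality; then expanding $\tau^s_r=st_sK_{s+r-2}+\sigma_r$ and killing the $t_s$-term with \eqref{KP:scc-1} leaves $-(\gamma_l,\sigma_r)=-(\gamma_l,\partial\omega_r)=-\{H_l,J_r\}=-l\,H_{l+r-2}$ by \eqref{iKP:cc-2}. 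Here one must keep careful track of which index survives on the Hamiltonian after the integration by parts, since the two pairings $(K_l,\vartheta^s_r)$ and $(\tau^s,\gamma)$ carry their free indices differently.

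For \eqref{KP:scc-3} I would expand both slots, $\tau^s_l=st_sK_{s+l-2}+\sigma_l$ and $\vartheta^s_r=st_s\gamma_{s+r-2}+\omega_r$, into four pairings. The term $(st_s)^2(K_{s+l-2},\gamma_{s+r-2})$ vanishes by \eqref{KP:scc-1}; the two cross terms give $-st_s\{H_{s+l-2},J_r\}=-st_s(s+l-2)H_{s+l+r-4}$ and $+st_s\{H_{s+r-2},J_l\}=st_s(s+r-2)H_{s+l+r-4}$, summing to $st_s(r-l)H_{s+l+r-4}$; and the last term is $(\sigma_l,\omega_r)=-\{J_l,J_r\}=-(l-r)J_{l+r-2}$ by \eqref{iKP:cc-3}. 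Collecting, $(\tau^s_l,\vartheta^s_r)=-(l-r)\bigl(st_sH_{s+l+r-4}+J_{l+r-2}\bigr)=-(l-r)I^s_{l+r-2}$, where the last step is just the definition $I^s_{l+r-2}=st_sH_{s+(l+r-2)-2}+J_{l+r-2}$ from \eqref{iKP:cq}.

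Since everything reduces to the already-proved brackets \eqref{iKP:alg cc}, there is no hard analytic step and the proof is essentially bookkeeping. The one genuinely delicate point is the explicit $t_s$-dependence of $\tau^s_r$ and $\vartheta^s_r$: I expect the main obstacle to be verifying that the $t_s$-linear contributions reassemble into precisely the $t_s$-linear part of $I^s_{l+r-2}$ (and that the would-be $t_s^2$-term is exactly the one annihilated by \eqref{KP:scc-1}), together with the sign and index accounting in each integration by parts.
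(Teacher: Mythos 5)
Your proof is correct, but it takes a genuinely different route from the paper's. The paper skips the proof of this theorem, giving instead the model argument for its semi-discrete twin (Theorem \ref{T:B-2}, case \eqref{idKP:scc-3}): there one applies $\partial^{-1}$ to the symmetry-algebra relation $\j \tau^{s}_{l},\tau^{s}_{r}\k=(l-r)\tau^{s}_{l+r-2}$, uses the Noether property $\tau^{s\prime}_{l}\partial=-\partial\tau^{s\prime\,*}_{l}$ together with $\vartheta^{s\prime}_{r}=\vartheta^{s\prime\,*}_{r}$ and Lemma \ref{lem:grad} to recognize the left-hand side as $-\mathrm{grad}\,(\tau^{s}_{l},\vartheta^{s}_{r})$, recovers potentials up to an integration ``constant'' $c=c(t_s)$, and kills $c$ by noting that both sides are conserved quantities of \eqref{uts}. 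You instead expand each pairing bilinearly and evaluate every piece through the Poisson algebra \eqref{iKP:alg cc} of Theorem \ref{T:3-3}, using only $K_s=\partial\gamma_s$, $\sigma_s=\partial\omega_s$ and skew-adjointness of $\partial$. This is logically legitimate, since the paper presents \eqref{iKP:alg cc} as established (from the literature) well before Appendix \ref{A:B}, and it buys you a cleaner argument: because you compare inner products rather than gradients, the additive-constant ambiguity never arises and no conservation argument is needed. What the paper's route buys in exchange is independence from \eqref{iKP:alg cc} — it needs only the symmetry algebra and the Noether property — which matters in the discrete setting, where the analogous bracket relations \eqref{diKP:alg-cc} are themselves produced by that machinery inside the proof of Theorem \ref{T:4-4}. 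Your four-term computation for \eqref{KP:scc-3}, including the vanishing of the $t_s^2$ term and the reassembly $st_s(r-l)H_{s+l+r-4}-(l-r)J_{l+r-2}=-(l-r)I^{s}_{l+r-2}$, checks out.

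One point you should state explicitly rather than elide: your chain gives $(K_l,\vartheta^{s}_{r})=-(\gamma_l,\tau^{s}_{r})=-(\tau^{s}_{r},\gamma_l)$, which is \emph{not} literally the printed middle member $-(\tau^{s}_{l},\gamma_r)$. Indeed, by your own method, $(\tau^{s}_{l},\gamma_r)=(\gamma_r,\sigma_l)=\{H_r,J_l\}=r\,H_{l+r-2}$, so the middle member as printed equals $-r\,H_{l+r-2}$ and is inconsistent with the asserted value $-l\,H_{l+r-2}$ whenever $l\neq r$ and $H_{l+r-2}\neq 0$; the indices on $\tau$ and $\gamma$ in \eqref{KP:scc-2} should be transposed (and likewise in \eqref{idKP:scc-2}). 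Your computation silently proves the corrected identity, but the remark that your formula ``is exactly the middle equality'' papers over this discrepancy — which is precisely the sign-and-index bookkeeping you yourself flagged as the delicate point, so carry it through and record the correction.
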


We skip the proof and a similar proof will be given in the next theorem.

\begin{thm}\label{T:B-2}
For each equation
\be
\b u_{\b t_{s}}=\b K_{s}
\label{b-uts}
\ee
in the isospectral D$\Delta$KP hierarchy \eqref{diKP:hie}, we have
\bse\label{idKP:scc}
\begin{align}
& (\b K_{l},\b\gamma_{r})=0,\label{idKP:scc-1}\\
& (\b K_{l},\b \vartheta^{s}_{r})=-(\b \tau^{s}_{l},\b \gamma_r)=-l\,(h\b H_{l+r-1}+\b H_{l+r-2}),\label{idKP:scc-2}\\
& (\b \tau^{s}_{l},\b \vartheta^{s}_{r})=-(l-r)(h \b I^s_{l+r-1}+ \b I^{s}_{l+r-2}),\label{idKP:scc-3}
\end{align}
\ese
where $l,r,s\geq 1$, $\b H_m$ and $\b I^{s}_{m}$ are conserved quantities defined in Theorem \ref{thm:diKP Ham} and Theorem \ref{T:4-4},
\be
 \{\b K_{l}\},~~
 \{\b\tau^{s}_{r}=s\b t_{s}(h\b K_{s+r-1}+\b K_{s+r-2})+\b \sigma_{r}\}
\ee
are symmetries of equation \eqref{uts},
\be
 \{\b \gamma_{l}=\partial^{-1}_{\b x} \b K_l\},~~
 \{\b\vartheta^{s}_{r}=\partial^{-1}_{\b x} \b\tau^{s}_{r}=s\b t_{s}(h\b \gamma_{s+r-1}+\b \gamma_{s+r-2})+\b \omega_{r}\}
\ee
are conserved covariants  of   \eqref{b-uts}.
Here we set $\b H_0=\b I^{s}_{0}=0$ and $\b K_0=\b \tau_0^s=0$.

\end{thm}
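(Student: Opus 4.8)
The plan is to reduce all three identities to ingredients already in hand: the skew-symmetry of $\partial_{\b x}$ together with the intertwining relations $\b K_m=\partial_{\b x}\b\gamma_m$ and $\b\sigma_m=\partial_{\b x}\b\omega_m$ (Theorems~\ref{thm:diKP Ham} and~\ref{thm:dnKP Ham}); the involutivity $\{\b H_l,\b H_r\}=0$ recorded in \eqref{diKP:cc-1}; and the two pairing formulae established inside the proof of Theorem~\ref{T:4-4}, namely $(\b\gamma_l,\b\sigma_r)=l\,(h\b H_{l+r-1}+\b H_{l+r-2})$ and $(\b\omega_l,\b\sigma_r)=(l-r)(h\b J_{l+r-1}+\b J_{l+r-2})$. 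That $\b\vartheta^s_r$ is a conserved covariant with potential $\b I^s_r$, so the left-hand inner products are indeed the conserved quantities named on the right, is supplied by Corollary~\ref{C:4-3} together with Propositions~\ref{P:2-2-0} and~\ref{P:2-2}.

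First I would dispatch \eqref{idKP:scc-1}. Using $\b K_l=\partial_{\b x}\b\gamma_l$ and the skew-symmetry of $\partial_{\b x}$ gives $(\b K_l,\b\gamma_r)=(\partial_{\b x}\b\gamma_l,\b\gamma_r)=-(\b\gamma_l,\partial_{\b x}\b\gamma_r)=-(\b\gamma_l,\b K_r)=-\{\b H_l,\b H_r\}$, which vanishes by \eqref{diKP:cc-1}. This identity is the engine for the other two, since it annihilates every pairing of a $\b K$-type flow against a $\b\gamma$-type gradient.

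For \eqref{idKP:scc-2} and \eqref{idKP:scc-3} the method is to insert the explicit forms $\b\tau^s_l=s\b t_s(h\b K_{s+l-1}+\b K_{s+l-2})+\b\sigma_l$ and $\b\vartheta^s_r=s\b t_s(h\b\gamma_{s+r-1}+\b\gamma_{s+r-2})+\b\omega_r$ and expand bilinearly. In \eqref{idKP:scc-2} every term carrying an explicit $\b t_s$ is of the form $(\b K_a,\b\gamma_b)$ and dies by \eqref{idKP:scc-1}, leaving $(\b K_l,\b\omega_r)$ from the first inner product and $(\b\sigma_l,\b\gamma_r)$ from the second; transporting $\partial_{\b x}$ across each turns it into a $(\b\gamma,\b\sigma)$-pairing to which the Theorem~\ref{T:4-4} formula applies, producing the advertised multiple of $h\b H_{l+r-1}+\b H_{l+r-2}$. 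In \eqref{idKP:scc-3} the quadratic-in-$\b t_s$ piece splits into four pairings $(\b K,\b\gamma)$, all killed by \eqref{idKP:scc-1}, so only the two linear-in-$\b t_s$ terms and the constant term $(\b\sigma_l,\b\omega_r)$ survive. I would show that the two linear terms assemble, after the index shifts, into $(r-l)\bigl(h^2\b H_{s+l+r-2}+2h\b H_{s+l+r-3}+\b H_{s+l+r-4}\bigr)\,s\b t_s$, while $(\b\sigma_l,\b\omega_r)=(r-l)(h\b J_{l+r-1}+\b J_{l+r-2})$; unfolding $\b I^s_m=s\b t_s(h\b H_{s+m-1}+\b H_{s+m-2})+\b J_m$ then exhibits the sum as exactly $-(l-r)(h\b I^s_{l+r-1}+\b I^s_{l+r-2})$.

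The main obstacle is the bookkeeping in \eqref{idKP:scc-3}: one must keep the four index-shifted Hamiltonians $\b H_{s+l+r-2},\dots,\b H_{s+l+r-4}$ correctly matched when summing the two linear terms, and then align the resulting $h$-graded string against the expansion of $h\b I^s_{l+r-1}+\b I^s_{l+r-2}$. The only genuinely delicate point is that the pairing $(\b\gamma_a,\b\sigma_b)=a(h\b H_{a+b-1}+\b H_{a+b-2})$ is \emph{not} symmetric under $a\leftrightarrow b$, so which flow carries the gradient decides whether the prefactor reads $l$ or $r$; care with this index placement is precisely what pins down the coefficients in \eqref{idKP:scc-2} and \eqref{idKP:scc-3}.
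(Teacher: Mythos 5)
Your proposal is correct, but it follows a genuinely different route from the paper's. The paper proves only \eqref{idKP:scc-3} (declaring the other two ``similar''), and it works at the level of gradients: it applies $\partial_{\b x}^{-1}$ to the symmetry-algebra relation \eqref{diKP:alg sym-c}, invokes the Noether property $\b\tau^{s\prime}_{l}\partial_{\b x}=-\partial_{\b x}\b\tau^{s\prime\,*}_{l}$ from Corollary \ref{C:4-3}, the self-adjointness of $\b\vartheta^{s\prime}_{r}$ and Lemma \ref{lem:grad} to recognize the left-hand side as $-\,\mathrm{grad}\,(\b\tau^{s}_{l},\b\vartheta^{s}_{r})$, recovers potentials on both sides, and then needs a separate argument that the integration ``constant'' $c(\b t_s)$ vanishes because both sides are conserved quantities of \eqref{b-uts}. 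You instead expand $\b\tau^{s}_{l}$ and $\b\vartheta^{s}_{r}$ bilinearly and reduce everything to the two pairing formulae $(\b\gamma_{l},\b\sigma_{r})=l\,(h\b H_{l+r-1}+\b H_{l+r-2})$ and $(\b\omega_{l},\b\sigma_{r})=(l-r)(h\b J_{l+r-1}+\b J_{l+r-2})$ already displayed inside the paper's proof of Theorem \ref{T:4-4}, together with \eqref{idKP:scc-1}, which you obtain from skew-symmetry of $\partial_{\b x}$ and the involutivity \eqref{diKP:cc-1}. Your bookkeeping checks out: the two linear-in-$\b t_s$ terms sum to $s\b t_s(r-l)\bigl(h^{2}\b H_{s+l+r-2}+2h\b H_{s+l+r-3}+\b H_{s+l+r-4}\bigr)$, which together with $(\b\sigma_{l},\b\omega_{r})=(r-l)(h\b J_{l+r-1}+\b J_{l+r-2})$ matches the expansion of $-(l-r)(h\b I^{s}_{l+r-1}+\b I^{s}_{l+r-2})$ exactly. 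Your route buys uniformity (all three identities by one expansion scheme) and avoids the potential-recovery step with its constant-fixing argument, since the normalization is inherited from the formulae of Theorem \ref{T:4-4}; the paper's route is shorter once the gradient machinery is set up but only treats one identity explicitly.

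One point you should state outright rather than merely gesture at. Your own expansion gives $(\b\tau^{s}_{l},\b\gamma_{r})=(\b\sigma_{l},\b\gamma_{r})=(\b\gamma_{r},\b\sigma_{l})=r\,(h\b H_{l+r-1}+\b H_{l+r-2})$, because the pairing formula attaches the coefficient to the index carried by the gradient. So the middle member of \eqref{idKP:scc-2} as printed equals $-r\,(h\b H_{l+r-1}+\b H_{l+r-2})$, not $-l\,(\cdots)$; the chain that actually holds is $(\b K_{l},\b\vartheta^{s}_{r})=-(\b\gamma_{l},\partial_{\b x}\b\vartheta^{s}_{r})=-(\b\tau^{s}_{r},\b\gamma_{l})=-l\,(h\b H_{l+r-1}+\b H_{l+r-2})$, i.e.\ the indices in the middle member of the statement are transposed. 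Your closing remark about ``which flow carries the gradient'' identifies precisely this issue, but your second paragraph, read literally, claims the $(\b\sigma_{l},\b\gamma_{r})$ term also ``produces the advertised multiple,'' which is false for the statement's index placement; replace that claim by the transposed identity $-(\b\tau^{s}_{r},\b\gamma_{l})=-l\,(h\b H_{l+r-1}+\b H_{l+r-2})$ and the argument is complete.
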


\begin{proof}
We only prove \eqref{idKP:scc-3}, and others are similar.
Noting that the relation \eqref{diKP:alg sym-c}, i.e.
\be
 {\llbracket}\b \tau^{s}_{l},\b \tau^{s}_{r}{\rrbracket}=(l-r)(h\b\tau^{s}_{l+r-1}+\b\tau^{s}_{l+r-2})
 \label{b-9}
\ee
from l.h.s. we have
\begin{align*}
 \partial^{-1}_{\b x} {\llbracket}\b \tau^{s}_{l},\b \tau^{s}_{r}{\rrbracket}
  & =  \partial^{-1}_{\b x}( {\b \tau^{s\prime}_{l}}[\b\tau^{s}_{r}]-{\b \tau^{s\prime}_{r}}[\b\tau^{s}_{l}])\\
  & =  \partial^{-1}_{\b x}( {\b\tau^{s \prime}_{l}}\partial_{\b x}\b\vartheta^{s}_{r}-\partial_{\b x}\b\vartheta^{s\prime}_{r}\b \tau^{s}_{l}).
\end{align*}
Then, in the light of Corollary \ref{C:4-3} we get ${\b\tau^{s\prime}_{l}}\partial_{\b x} =-\partial_{\b x} {\b\tau^{s\prime \, *}_{l}}$ and then
\begin{align*}
 \partial^{-1}_{\b x} {\llbracket}\b \tau^{s}_{l},\b \tau^{s}_{r}{\rrbracket}
  & = - \partial^{-1}_{\b x}(  \partial_{\b x} {\b\tau^{s \prime *}_{l}}\b\vartheta^{s}_{r}+\partial_{\b x} {\b\vartheta^{s\prime \, *}_{r}} \b\tau^{s}_{l})\\
  & = - ({\b\tau^{s\prime \, *}_{l}} \b\vartheta^{s}_{r}+{\b\vartheta^{s \prime \, *}_{r}}\b\tau^{s}_{l})\\
  & = - \mathrm{grad} (\b\tau^{s}_{l},\b\vartheta^{s}_{r}),
\end{align*}
where we have made use of $\b\vartheta^{s \prime}_{r}=\b\vartheta^{s \prime *}_{r}$ and Lemma \ref{lem:grad}.
Meanwhile, from the r.h.s. of \eqref{b-9} we have
\[\partial^{-1}_{\b x}(l-r)(h\b\tau^{s}_{l+r-1}+\b\tau^{s}_{l+r-2})
=(l-r)(h\b\vartheta^{s}_{l+r-1}+\b\vartheta^{s}_{l+r-2}).\]
Thus, recovering potentials from the above two formulae we have the relation
\[(\b\tau^{s}_{l},\b\vartheta^{s}_{r})=-(l-r)(h\b I^{s}_{l+r-1}+\b I^{s}_{l+r-2})+c,\]
where $c$ is at most related to $\b t_s$ because $\b I^{s}_{m}$ is also defined through inner product.
Noting that both $(\b\tau^{s}_{l},\b\vartheta^{s}_{r})$ and $\b I^{s}_{m}$ are conserved quantities of equation \eqref{b-uts},
$c$ must be independent of $\b t_s$ and therefore it becomes trivial and  we can take $c=0$ without loss of generality.
Thus we reach to \eqref{idKP:scc-3}.
\end{proof}

With regard to the relationship of \eqref{idKP:scc} and \eqref{iKP:scc},
thanks to the results obtained in Sec.\ref{S:5}, we can conclude that
\begin{thm}
In the continuum limit designed in Sec.\ref{S:5.2}, the relation \eqref{idKP:scc} goes to \eqref{iKP:scc}.
\end{thm}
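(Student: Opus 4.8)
The plan is to prove this by the same degree-counting technique already used for flows in Sec.~\ref{S:5.6} and for Hamiltonians and conserved quantities in Sec.~\ref{S:5.7}. The idea is to expand every quantity appearing in \eqref{idKP:scc} in powers of $h$, apply Proposition~\ref{P:5-2+1} to turn each semi-discrete inner product into its continuous counterpart times the appropriate power of $h$, and then match the leading (lowest-degree) terms on both sides of each identity.

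First I would collect the leading-order expansions of all the ingredients. Propositions~\ref{P:5-6}, \ref{P:5-7} and \ref{P:5-11} already give
\[
\b K_m=h\,K_m+O(h^2),~~\b\gamma_m=h\,\gamma_m+O(h^2),~~\b H_m=h^2 H_m+O(h^3),
\]
so $\deg\b K_m=\deg\b\gamma_m=1$ and $\deg\b H_m=2$. It remains to record the degrees of the composite objects $\b\tau^s_r$, $\b\vartheta^s_r$ and $\b I^s_r$. Since the coordinate relation \eqref{coord-rela} gives $\b t_s=t_s$, and since multiplying by the explicit factor $h$ raises the degree by one, their defining expressions yield $\b\tau^s_r=h\,\tau^s_r+O(h^2)$ and $\b\vartheta^s_r=h\,\vartheta^s_r+O(h^2)$ (both of degree $1$), while $\b I^s_r=h^2 I^s_r+O(h^3)$ (degree $2$). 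In each case only the term \emph{without} the extra $h$-factor survives into the leading coefficient, and that surviving term reproduces exactly the continuous definitions $\tau^s_r=st_sK_{s+r-2}+\sigma_r$, $\vartheta^s_r=st_s\gamma_{s+r-2}+\omega_r$, and $I^s_r=st_sH_{s+r-2}+J_r$.

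With these expansions in hand, I would treat the three identities in turn. For \eqref{idKP:scc-1}, Proposition~\ref{P:5-2+1} gives $(\b K_l,\b\gamma_r)=(K_l,\gamma_r)h^2+O(h^3)$; since the left side vanishes identically in $h$, the leading coefficient must vanish, yielding \eqref{KP:scc-1}. For \eqref{idKP:scc-2} and \eqref{idKP:scc-3} the same proposition shows that each inner product on the left has degree $2$, e.g. $(\b K_l,\b\vartheta^s_r)=(K_l,\vartheta^s_r)h^2+O(h^3)$ and $(\b\tau^s_l,\b\vartheta^s_r)=(\tau^s_l,\vartheta^s_r)h^2+O(h^3)$. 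On the right-hand sides one compares degrees within the combinations $h\b H_{l+r-1}+\b H_{l+r-2}$ and $h\b I^s_{l+r-1}+\b I^s_{l+r-2}$: the first summand carries an extra factor $h$ and hence sits one degree higher than the second, so it drops out and only $\b H_{l+r-2}$ (resp. $\b I^s_{l+r-2}$) contributes at order $h^2$. Matching the $h^2$-coefficients then gives \eqref{KP:scc-2} and \eqref{KP:scc-3} respectively.

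The computations are routine; the one point requiring care---and the analogue of the Kac-Moody-Virasoro contraction explained in Sec.~\ref{S:5.8}---is precisely this last observation that within each semi-discrete combination the two terms sit at different degrees, so the continuum limit selects only the lower-degree term and thereby deforms $h\b X_{l+r-1}+\b X_{l+r-2}$ into $X_{l+r-2}$. Checking that the degrees on the two sides of every identity agree (all equal to $2$ in \eqref{idKP:scc-2} and \eqref{idKP:scc-3}) is what makes the leading-term matching legitimate, and once that bookkeeping is confirmed the relations \eqref{iKP:scc} follow immediately.
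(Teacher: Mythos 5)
Your proposal is correct and follows essentially the same route the paper intends: the paper states this theorem without a separate argument, appealing exactly to the degree machinery of Sec.~\ref{S:5.2}--\ref{S:5.7} (in particular Propositions~\ref{P:5-2+1}, \ref{P:5-6}, \ref{P:5-7} and \ref{P:5-11}), which is what you carry out. Your leading-order bookkeeping, including the key observation that $h\b X_{l+r-1}$ sits one degree above $\b X_{l+r-2}$ and therefore drops out in the limit (the same contraction mechanism as in Sec.~\ref{S:5.8}), is precisely the intended proof.
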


\end{appendix}

\vskip 20pt

{\small

}

\end{document}